\title{An F5 Algorithm\\for Tropical Gr\"{o}bner Bases in the Weyl Algebras}
\author{Ari Dwi Hartanto\thanks{Corresponding author} \\
	Department of Mathematics,\\
	Faculty of Mathematics and Natural Sciences,\\ Universitas Gadjah Mada, INDONESIA\\
	\texttt{ari@ugm.ac.id} \\
	%% examples of more authors
	\And
	Katsuyoshi Ohara \\
	Faculty of Mathematics and Physics,\\
	Kanazawa University, JAPAN \\
	\texttt{ohara@se.kanazawa-u.ac.jp} \\
	%% \AND
	%% Coauthor \\
	%% Affiliation \\
	%% Address \\
	%% \texttt{email} \\
	%% \And
	%% Coauthor \\
	%% Affiliation \\
	%% Address \\
	%% \texttt{email} \\
	%% \And
	%% Coauthor \\
	%% Affiliation \\
	%% Address \\
	%% \texttt{email} \\
}
\newtheoremstyle{dotlesstheorem}  % follow `plain` defaults but change HEADSPACE.
  {1.75ex}   % ABOVESPACE
  {0.5ex}   % BELOWSPACE
  {\itshape}  % BODYFONT
  {0pt}       % INDENT (empty value is the same as 0pt)
  {\bfseries} % HEADFONT
  {}         % HEADPUNCT
  {  }  % HEADSPACE. `plain` default: {5pt plus 1pt minus 1pt}
  {}          % CUSTOM-HEAD-SPEC
\theoremstyle{dotlesstheorem}
\newtheorem{thm}{Theorem}
\newtheorem{lem}[thm]{Lemma}
\newtheorem{prop}[thm]{Proposition}
\newtheorem{cor}[thm]{Corollary}
\newtheoremstyle{dotlessexample}  % follow `plain` defaults but change HEADSPACE.
  {1.75ex}   % ABOVESPACE
  {0.5ex}   % BELOWSPACE
  {}  % BODYFONT
  {0pt}       % INDENT (empty value is the same as 0pt)
  {\bfseries} % HEADFONT
  {}         % HEADPUNCT
  {  }  % HEADSPACE. `plain` default: {5pt plus 1pt minus 1pt}
  {}          % CUSTOM-HEAD-SPEC
\theoremstyle{dotlessexample}
\newtheorem{defn}{Definition}
\newtheorem{eg}{Example}
\newtheorem{rmk}{Remark}
\newcommand{\Z}{\mathbb{Z}}
\newcommand{\Q}{\mathbb{Q}}
\newcommand{\R}{\mathbb{R}}
\newcommand{\val}{{\rm val}}
\newcommand{\x}{{\rm x}}
\newcommand{\supp}{{\rm supp}}
\newcommand{\LT}{\mbox{{\rm LT}\:\!}}
\newcommand{\LM}{\mbox{{\rm LM}\:\!}}
\newcommand{\LC}{\mbox{{\rm LC}\:\!}}
\newcommand{\LTh}{{\rm LT}_h}
\newcommand{\LMh}{{\rm LM}_h}
\newcommand{\LCh}{{\rm LC}_h}
\renewcommand{\d}{\partial}
\newcommand{\dx}{\partial_x}
\newcommand{\dy}{\partial_y}
\newcommand{\dz}{\partial_z}
\newcommand{\mon}{{\rm mon}}
\newcommand{\End}{{\rm End}}
\newcommand{\syz}[1]{{\rm Syz}(#1)}
\newcommand{\lsyz}[1]{{\rm LSyz}(#1)}
\newcommand{\ve}{\mathbf{e}}
\renewcommand{\a}{\bm{a}}
\newcommand{\f}{\bm{f}}
\newcommand{\g}{\bm{g}}
\newcommand{\h}{\bm{h}}
\newcommand{\q}{\bm{q}}
\renewcommand{\r}{\bm{r}}
\newcommand{\sn}{\scaleto{N}{5pt}}
\newcommand{\mn}{\scaleto{N}{7pt}}
\newcommand{\Psih}{\Psi_{\,\!\!\scaleto{h}{4pt}}}
\newcommand{\s}{\mathfrak{s}}
\newcommand{\nosemic}{\renewcommand{\@endalgocfline}{\relax}}% Drop semi-colon ;
\newcommand{\dosemic}{\renewcommand{\@endalgocfline}{\algocf@endline}}% Reinstate semi-colon ;
\let\oldnl\nl% Store \nl in \oldnl
\newcommand{\nonl}{\renewcommand{\nl}{\let\nl\oldnl}}% Remove line number for one line
\renewenvironment{proof}[1][\proofname]{\par
  % \vspace{-\topsep}% remove the space after the theorem
  \vspace{-0.5ex}
  \pushQED{\qed}%
  \normalfont
  \topsep0pt \partopsep0pt % no space before
  \trivlist
  \item[\hskip\labelsep
        \itshape
    #1\@addpunct{.}]\ignorespaces
}{%
  \popQED\endtrivlist\@endpefalse
  \addvspace{0.5ex plus 0.35ex} % some space after
}
\begin{document}
\maketitle

\begin{abstract}
A Gr\"{o}bner basis computation for the Weyl algebra with respect to a tropical term order and by using a homogenization-dehomogenization technique is sufficiently sluggish. A significant number of reductions to zero occur. To improve the computation, a tropical F5 algorithm is developed for this context. As a member of the family of signature-based algorithms, this algorithm keeps track of where Weyl algebra elements come from to anticipate reductions to zero. The total order for ordering module monomials or signatures in this paper is designed as close as possible to the definition of the tropical term order. As in \cite{Vaccon2021}, this total order is not compatible with the tropical term order.
% However, it is compatible when computing the leading module monomial of a syzygy.
\end{abstract}

\vspace{-3ex}
% keywords can be removed
\indent{\keywords{Gr\"{o}bner basis \and signature \and syzygy \and tropical term order \and valuation \and Weyl algebra}

\bigskip

\section{Introduction}

The theory of tropical geometry has developed in the past few decades. Some researchers developed this theory in some aspects; for instance, the aspect of theory and its applications (\cite{DMBS2015}), a theory of tropical differential algebra (\cite{Falkensteiner2020}), and a theory of tropicalization on {D}-ideals (\cite{Hsiao2023}). Tropical geometry can be related to the theory of Gr\"{o}bner basis, especially the computational characterization of tropical variety. A computational theory of tropical Gr\"{o}bner basis was developed by Chan and Maclagan (\cite{ACDM2019}) in 2019. This theory undergoes further development. In \cite{Vaccon2015} and in \cite{Vaccon2021}, Vaccon et.al. presented a tropical F5 algorithm for polynomial rings $K[x_1,\cdots,x_n]$. Furthermore, in \cite{ADH2023}, the computational theory of tropical Gr\"{o}bner basis was extended to the Weyl algebras. A homogenization-dehomogenization technique was used, and the Buchberger algorithm was applied in this computation. We remark that this computational method needs to be enhanced, as too many reductions to zero affect a slow computation. Many of those zero reductions are known as ``useless reductions.''

In 2002, Faug\`{e}re introduced an algorithm called the F5 algorithm (see \cite{Faugere2002}). This algorithm is based on signatures in the form of the leading monomials of the polynomial module representations, and then it is called a signature-based algorithm. It was proven that this algorithm is able to reduce useless reductions. Next, many researchers worked to understand the criterion behind the F5 algorithm and to develop signature-based algorithms. Some results can be seen in: \cite{Eder2008a}, \cite{Eder2008b}, \cite{Eder2010}, \cite{Arri2011}, G2V (\cite{Gao2010a}), GVW (\cite{Gao2010b}), SB (\cite{Roune2012}), and the F5 algorithm in tropical settings (\cite{Vaccon2015},\cite{Vaccon2021}).

In this paper, we present a tropical F5 algorithm to compute a tropical Gr\"{o}bner basis on the homogenized Weyl algebra $D_n^{(h)}\!(K)$ over a field $K$ with a valuation. Based on \cite{ADH2023}, once a Gr\"{o}bner basis $G^{(h)}$ in $D_n^{(h)}\!(K)$ is obtained, we can perform the dehomogenization process to obtain the corresponding Gr\"{o}bner basis $G$ in the Weyl algebra $D_n(K)$. As a signature-based algorithm, the tropical F5 algorithm requires a total order on the module $(D_n^{(h)})^{\sn}$ for ordering module monomials or signatures. We define a total order $\leq_{\rm sign}$ on module $R^{\sn}$ which is almost similar to the definition $\leq_{\rm sign}$ in the paper of \cite{Vaccon2021}. In \cite{Vaccon2021}, the theory utilizes a vector space filtration. The vactor space filtration is mainly used for defining signature of a polynomial. The theory utilizing the vector space filtration requires the commutative property of the used ring, whereas our ring, $D_n^{(h)}\!(K)$, is non-commutative. However, if we deeply observe the flow of the main theory of their F5 algorithm, especially in the F5 criterion, it actually follows the theory of the F5 algorithm in \cite{Arri2011} and the tropical F5 algorithm in \cite{Vaccon2015}. In this paper, we develop an F5 algorithm to compute a tropical Gr\"{o}bner basis in $D_n^{(h)}(K)$ by using a total order $\leq_{\rm sign}$ defined as similar as possible to the definition $\leq_{\rm sign}$ in \cite{Vaccon2021} and by following the flow of the theory in \cite{Arri2011} and in \cite{Vaccon2015}.

\section{Term Ordering and Gr\"{o}bner bases}
\label{sec2}
For convenience of the readers, we begin with a glance introduction to the Weyl algebras.
\vspace{-1ex}
\subsection{The Weyl algebras over fields with valuations}
\vspace{-1ex}
Let $K$ be a field of characteristic $0$. We equip the field $K$ with a valuation which is defined as a function $\val:K^*\!\longrightarrow\! \R$ satisfying $\val(ab)=\val(a)+\val(b)$ and $\val(a+b)\geq\min(\val(a),\val(b))$. For example, in the field of rational functions $\Q(t)$, consider $f/g$ with the Taylor series $at^m+(\text{higher order terms})$. We can define $\val(f/g):=m$, a valuation of $f/g$. Another example is the {\it $p$-adic valuation} defined in the field $\Q$, where $p$ is a prime number. For a non-zero rational number $q$ expressed by $q=p^{m}\frac{a}{b}$, where $m\in\Z_{\geq 0}$ and $a,b\in\Z$, $b\neq0$ such that $p$ does not divide both $a$ and $b$, the $p$-adic valuation of $q$ is $\val(q):=m$.

The Weyl algebra over $K$, denoted by $D_n(K)$ or by $D_n$, is the algebra
$K\langle x_1,\cdots,x_n,\d_1,\cdots,\d_n\rangle$
which is generated by operators $x_i,\d_i\in\End_K(K[\x])$ given by, respectively, $f(\x)\mapsto x_if(\x)$ and $f(\x)\mapsto \frac{\d}{\d x_i}f(\x):=\d_i f(\x)$, $1\leq i\leq n$. These operators satisfy the following relations:
\vspace{-1ex}
\begin{align*}
x_jx_i-x_ix_j=0 \; \text{ and } \; &\d_j\d_i-\d_i\d_j=0\ \ \text{ for } 1\leq i,j\leq n;\\
\d_ix_j-x_j\d_i &= 0\ \qquad\quad\quad \;\;\;\; \text{ for } 1\leq i\neq j\leq n;\\
\d_i x_i-x_i\d_i&=1\ \qquad\quad\quad \;\;\;\; \text{ for } 1\leq i\leq n.
\end{align*}

\vspace{-1ex}
\noindent
By the canonical $K$-vector space basis $\{\x^{\alpha}\d^{\beta}\mid \alpha,\beta\in\Z_{\geq 0}^n\}\subset D_n(K)$, any operator $f\in D_n(K)$ can be written as a finite sum $f=\sum_{\alpha,\beta\in\Z_{\geq 0}^n}\!\!\!c_{\alpha\beta}\x^{\alpha}\d^{\beta}$.

\subsection{Tropical term orders and tropical Gr\"{o}bner bases}
We define a $K$-vector space isomorphism $\Psi:D_n(K) \longrightarrow K[\x,\xi]$ given by $\x^{\alpha}\d^{\beta} \mapsto \x^{\alpha}\xi^{\beta}$.
The notation $K[\x,\xi]$ means the polynomial ring $K[x_1,\cdots,x_n,\xi_1,\cdots,\xi_n]$ with $2n$ indeterminates. For any $f\in D_n(K)$, we call $\Psi(f)$ the \emph{total symbol} of $f$. We denote $\mathcal{M}_{\xi}$ and $\mathcal{T}_{\xi}$, respectively, the set of all monomials $\{\x^{\alpha}\xi^{\beta}\mid \alpha,\beta\in\Z_{\geq 0}^n\}$ and the set of all terms $\{c\x^{\alpha}\xi^{\beta}\mid c\in K^*,\alpha,\beta\in\Z_{\geq 0}^n\}$ in $K[\x,\xi]$.

We recall the definition of \emph{tropical term order} $\leq$ on $\mathcal{T}_{\xi}$  from the paper (in preprint) of \cite{ADH2023} as follows.
\begin{defn}\label{deftroptermorderDn}
Let $\preceq$ be a monomial order on $\mathcal{M}_{\xi}$ and $w,\omega\in \R^{2n}$ such that $w_i\geq 0$ and $(\max_{\scaleto{1\leq j\leq n}{5pt}}\, w_j)<w_{n+i}$, $1\!\leq\! i\!\leq\! n$.
For $a\x^{\alpha}\xi^{\beta},b\x^{\alpha'}\xi^{\beta'}\!\!\in\mathcal{T}_{\xi}$, we write:
\vspace{-0.75ex}
\begin{enumerate}[(1)]
\item $a\x^{\alpha}\xi^{\beta}<b\x^{\alpha'}\xi^{\beta'}$  \ if and only if:
{\small
\begin{enumerate}[(i)]
\vspace{-0.5ex}
\item $w\cdot(\alpha,\beta)<w\cdot(\alpha',\beta')$; or
\item $w\cdot(\alpha,\beta)=w\cdot(\alpha',\beta')$ and $-\val(a)+\omega\cdot(\alpha,\beta)<-\val(b)+\omega\cdot(\alpha',\beta')$; or
\item $w\cdot(\alpha,\beta)=w\cdot(\alpha',\beta')$ and $-\val(a)+\omega\cdot(\alpha,\beta)=-\val(b)+\omega\cdot(\alpha',\beta')$ \ and \ $\x^{\alpha}\xi^{\beta}\prec\x^{\alpha'}\xi^{\beta'}$;
\end{enumerate}
}
\vspace{-0.25ex}
\item $a\x^{\alpha}\xi^{\beta}= b\x^{\alpha'}\xi^{\beta'}$  if and only if $\val(a)=\val(b)$ and $(\alpha,\beta)=(\alpha',\beta')$.
\end{enumerate}
\end{defn}

\smallskip
\noindent
If we restrict our ring to $K[\x]\subset K[\x,\xi]$ and set $w=(\bm{0};\bm{1})$ and $\omega=-1\cdot(u,v)$ for some $u,v\in \R^n$, then $\leq$ is a tropical term order on $K[\x]$ with respect to $u$ as in \cite{ACDM2019}.

Let $f\in D_n$ and $a\x^{\tau}\xi^{\sigma}\in\mathcal{T}_{\xi}$ the biggest term of the total symbol $\Psi(f)$ with respect to $<$. The leading term, the leading monomial, and the leading coefficient of $\Psi(f)$, respectively, is defined as $\LT(f):=a\x^{\tau}\xi^{\sigma}$, $\LM(f):=\x^{\tau}\xi^{\sigma}$, and $\LC(f):=a$. We then define $\LT(f):=\LT(\Psi(f))$, $\LM(f):=\LM(\Psi(f))$, and $\LC(f):=\LC(\Psi(f))$. For a non-empty subset $G\subseteq D_n(K)$, we define $\LM(G):=\{\LM(g)\mid g\in G\}$.

\begin{defn}
A non-empty subset $G\subset D_n(K)$ is called a (\emph{tropical}) \emph{Gr\"{o}bner basis} of a left ideal $I\subset D_n(K)$ if
$
D_n G= I \ \text{and} \ \langle \LM(G)\rangle=\langle\LM(I)\rangle.
$
\end{defn}

We introduce a new variable $h$ which commutes with all $x_i$ and $\d_i$, and define the homogenized Weyl algebra $D_n^{(h)}\!(K):=K[h]\langle x_1,\cdots\!,x_n,\d_1,\cdots\!,\d_n\rangle$ as a ring extension of $D_n(K)$. The homogenization of $f=\!\!\sum_{(\alpha,\beta)\in \mathcal{E}}\!c_{\alpha\beta}\x^{\alpha}\partial^{\beta}\in D_n(K)$, $\mathcal{E}\subset\Z_{\geq 0}^n\times\Z_{\geq 0}^n$ is
\vspace{-1ex}
$$H(f):=\sum_{(\alpha,\beta)\in \mathcal{E}} c_{\alpha\beta}\x^{\alpha}\partial^{\beta}h^{M-|\alpha|-|\beta|}\in D_n^{(h)}\!(K)$$

\vspace{-2ex}
\noindent
where $M=\max_{(\alpha,\beta)\in \mathcal{E}} (|\alpha|+|\beta|)$ and $|\alpha|:=\alpha_1+\cdots+\alpha_n$. The non-commutative relation on $D_n^{(h)}\!(K)$ is defined by $\d_i x_i:=x_i \d_i+h^2$ to keep the result homogeneous.

The $K$-vector space isomorphism $\Psi$ is extended to be
\begin{eqnarray*}
\Psih:D_n^{(h)}\!(K) &\longrightarrow& K[h,\x,\xi]\\
h^{\gamma}\x^{\alpha}\d^{\beta} &\mapsto& h^{\gamma}\x^{\alpha}\xi^{\beta}.
\end{eqnarray*}
The set of all monomials and the set of all terms of $K[h,\x,\xi]$ are denoted by $\mathcal{M}_{\xi}^{(h)}$ and $\mathcal{T}_{\xi}^{(h)}$, respectively. For any $g=\sum_{(\gamma,\alpha,\beta)\in\mathcal{E}}\!c_{\gamma\alpha\beta}h^{\gamma}\x^{\alpha}\d^{\beta}$ for some $\mathcal{E}\subset \Z_{\geq 0}\times\Z_{\geq 0}^{n}\times\Z_{\geq 0}^{n}$, we define the set $\mon(g):=\{h^{\gamma}\x^{\alpha}\d^{\beta}\mid (\gamma,\alpha,\beta)\in\mathcal{E}, c_{\gamma\alpha\beta}\neq 0\}$ and the support $\supp(g)=\{(\gamma,\alpha,\beta)\in\mathcal{E}\mid c_{\gamma\alpha\beta}\neq 0\}$.
The tropical term order $\leq$ on $\mathcal{T}_{\xi}$ induces the tropical term order $\leq_h$ on $\mathcal{T}_{\xi}^{(h)}$ as follows.
\begin{defn}\label{deftermorderDnhvaluation}
For $ah^{\gamma}\x^{\alpha}\xi^{\beta},bh^{\gamma'}\x^{\alpha'}\d^{\beta'}\in\mathcal{T}_{\xi}^{(h)}$, we write:
\vspace{-0.75ex}
\begin{enumerate}[(1)]
\item $ah^{\gamma}\x^{\alpha}\xi^{\beta}<_h bh^{\gamma}\x^{\alpha'}\xi^{\beta'}$ if and only if:
\vspace{-0.5ex}
\begin{enumerate}[(i)]
\item $\deg(h^{\gamma}\x^{\alpha}\xi^{\beta})<\deg(h^{\gamma'}\x^{\alpha'}\xi^{\beta'})$ or
\item $\deg(\x^{\alpha}\d^{\beta}h^{\gamma})=\deg(\x^{\alpha'}\d^{\beta'}h^{\gamma'})$ \ and  \ $a\x^{\alpha}\d^{\beta}< b\x^{\alpha'}\d^{\beta'}$;
\end{enumerate}
\item $ah^{\gamma}\x^{\alpha}\xi^{\beta}=_hh^{\gamma'}b\x^{\alpha'}\xi^{\beta'}$  if and only if $\val(a)=\val(b)$ and $(\alpha,\beta,\gamma)=(\alpha',\beta',\gamma')$.
\end{enumerate}
\end{defn}
With respect to $<_h$, we define $\LTh(f):=\LTh(\Psih(f))$, $\LMh(f):=\LMh(\Psih(f))$, and $\LCh(f):=\LCh(\Psih(f))$ for any $f\in D_n^{(h)}\!(K)$.

In \cite{ADH2023}, we can compute a Gr\"{o}bner basis of a left ideal $I\subset D_n(K)$ by utilizing the homogenized Weyl algebra $D_n^{(h)}\!(K)$. Let $\mathcal{F}=\{\mathfrak{f}_1,\cdots,\mathfrak{f}_{\sn}\}$ be a generating set of $I$. We use the Buchberger algorithm with the input $\{H(\mathfrak{f}_1),\cdots,H(\mathfrak{f}_{\sn})\}$, where $H(\mathfrak{f}_i)$ is the homogenization of $\mathfrak{f}_i$. Once we obtain a Gr\"{o}bner basis $G^{(h)}=\{g_1^{(h)},\cdots,g_m^{(h)}\}$ in $D_n^{(h)}\!(K)$, we can get a Gr\"{o}bner basis $G$ in $D_n(K)$ by the dehomogenization process, i.e., $G:=\{g^{(h)}_{i_{\mid_{h=1}}}\mid i=1,\cdots,m\}$.

\section{Syzygies and Signatures}
\label{sec_syzsignature}
\vspace{-1ex}
From now on, the ring $D_n^{(h)}\!(K)$ is denoted by $R_n$ or just by $R$ if there is no ambiguity.
Let $\mathcal{F}=\{\mathfrak{f}_1,\cdots,\mathfrak{f}_{\sn}\}$ be a generating set of a left ideal $I$ in $R$, and write $\mathcal{F}_k:=\{\mathfrak{f}_1,\cdots,\mathfrak{f}_k\}$, $1\leq k\leq\mn$.
Let $G_k$ be a Gr\"{o}bner basis of the left ideal $I_k:=R\mathcal{F}_k$ in $R$.
In this paper, we are going to compute Gr\"{o}bner bases $G_i$'s increasingly by using an F5 algorithm. We start from the Gr\"{o}bner basis $G_1=\{\mathfrak{f}_1\}$ of $I_1$. We then compute a Gr\"{o}bner basis $G_k$ of $I_k$ by using some information of the previous computed Gr\"{o}bner bases $G_1,\cdots,G_{k-1}$, $2\leq k\leq\mn$.

We  assume that $\mathfrak{f}_k$'s are ordered increasingly by $\leq_h$ on $\LTh(\mathfrak{f}_k)$'s.
We also assume that for $2\leq j\leq \mn$, the generating set $\mathcal{F}_j$ is such that no term of $\mathfrak{f}_{j}$ is divisible by an element of $\LMh(G_{j-1})$. We can use the tropical division algorithm (Algorithm 1 in \cite{ADH2023}) to divide $\mathfrak{f}_j$ by $G_{j-1}$ if there is a term of $\Psih(\mathfrak{f}_j)$ which is divisible by an element of $\LMh(G_{j-1})$. 

\subsection{Syzygies}
Now, consider the module
$R^{\sn}:=\{(a_1,\cdots,a_{\sn})\mid a_i\in R, \ 1\leq i\leq \mn\}$
over $R$.
We know that $R^{\sn}=\bigoplus_{i=1}^{\sn} R \ve_i$ is a free module with the canonical basis $\{\ve_1,\cdots,\ve_{\sn}\}$. This implies that every $\a\in R^{\sn}$ can be uniquely written as $\a=\sum_{i=1}^{\sn} a_i\ve_i$ for some $a_1,\cdots,a_{\sn}\in R$. Moreover, it can be uniquely ({\it up to an order of the terms}) expressed as
$\a=\sum_{i=1}^{\sn}\sum_{(\gamma\alpha\beta)} c_{\gamma\alpha\beta}h^{\gamma}\x^{\alpha}\d^{\beta}\ve_i$
for some $c_{\gamma\alpha\beta}\in K^*$, and $h^{\gamma}\x^{\alpha}\d^{\beta}\in\mathcal{M}^{(h)}$, where $\mathcal{M}^{(h)}:=\{h^{\gamma}\x^{\alpha}\d^{\beta}\mid \gamma\in\Z_{\geq 0}, \alpha,\beta\in\Z_{\geq 0}^n\}\subset R$.
We define a $K$-vector space isomorphism
\vspace{-0.5ex}
\begin{eqnarray*}
\Phi:R^{\sn} &\longrightarrow& (K[h,\x,\xi])^{\sn}\\
h^{\gamma}\x^{\alpha}\d^{\beta}\ve_i &\mapsto& h^{\gamma}\x^{\alpha}\xi^{\beta}\ve_i
\end{eqnarray*}
and write $\mathbb{M}_{\xi}:=\{h^{\gamma}\x^{\alpha}\xi^{\beta}\ve_i  \mid  \gamma\in\Z_{\geq 0}, \alpha,\beta\in\Z_{\geq 0}^{n},  1\leq i\leq\mn\}$ and
$\mathbb{T}_{\xi}:=\{ch^{\gamma}\x^{\alpha}\xi^{\beta}\ve_i  \mid c\in K^*,  \gamma\in\Z_{\geq 0},   \alpha,\beta\in\Z_{\geq 0}^{n},  1\leq i\leq\mn\}$.

A module element $\g=(g_1,\cdots,g_{\sn})\in R^{\sn}$ is said to be \emph{homogeneous} if $g_i$ is homogeneous for all $1\leq i\leq\mn$. The total degrees of $g_i$'s need not be equal.

\begin{defn}
Let $\nu:R^{\sn}\longrightarrow I$ be the surjective $R$-module homomorphism given by $\ve_i\mapsto \mathfrak{f}_i$.
Any element $\bm{a}=(a_1,\cdots,a_{\sn})$ in $\ker(\nu)$ is called a \emph{syzygy} of $\mathcal{F}$.
This means that $\sum_{i=1}^{\sn}a_i \mathfrak{f}_i=0$. The set of all syzygies of $\mathcal{F}$ is denoted by $\syz{\mathcal{F}}$. Hence, $\syz{\mathcal{F}}=\ker(\nu)$.
\end{defn}

\begin{defn}
Let $\bm{a}=(a_1,\cdots,a_{\sn})$ be a syzygy of $\mathcal{F}$ and $i\in\{1,\cdots,\mn\}$ an index such that
$a_i\neq 0$ and $a_j=0$ for all $j>i$. The module monomial $\LMh(a_i)\ve_i\in \mathbb{M}_{\xi}$ is called the \emph{leading monomial} of the syzygy $\bm{a}$.
\end{defn}

We write $\lsyz{\mathcal{F}}$ the submodule of $(K[h,\x,\xi])^{\sn}$ generated by the leading monomials of syzygies of $\mathcal{F}$. The set ${\rm NS}(\syz{\mathcal{F}}):=\mathbb{M}_{\xi}\backslash\lsyz{\mathcal{F}}$ is called the \emph{normal set} of the syzygies of $\mathcal{F}$.

\begin{eg}
To provide a simple example, let us consider the polynomial ring $\Q[x,y,z]$ over the field $\Q$ with $2$-adic valuation. We use the tropical term order $\leq_h$ (restricted to $\Q[x,y,z]$) defined by the weight vectors $w=(1,1,1)$, $\omega=(-1,-1,-1)$, and the lexicographical order $\prec$ with $x\succ y\succ z$. Let $\mathcal{F}=\{\mathfrak{f}_1=z,\mathfrak{f}_2=y,\mathfrak{f}_3=x^2\}\subset \Q[x,y,z]$.
\begin{itemize}
\vspace{-1ex}
\item The module element $\bm{a}=(-x^2+y,-x^2-z,y+z)\in (\Q[x,y,z])^3$ is a non-homogeneous syzygy since $\nu(a)=0$ and $a_1=-x^2+y$ is non-homogeneous. The leading monomial of the syzygy $\bm{a}$ is $y\ve_3$.
\vspace{-1ex}
\item The syzygy $\bm{b}=-x^2\ve_2+y\ve_3$ is a homogeneous syzygy since $\nu(\bm{b})=0$ and both $-x^2$ and $y$ are homogeneous. The leading monomial of the syzygy $\bm{b}$ is $y\ve_3$.
\end{itemize}
\end{eg}

\subsection{Signatures}
In the classical signature-based algorithm (see: \cite{EF2017} for example), the (minimal) signature of a polynomial $f
\in K[\x]$ is defined as the minimal element of the set $\{\LM_{<_m}(\f)\mid \f\in(K[\x])^{\sn},  \nu(\f)=f\}$ for some total order $\leq_m$ on the set of module monomials of $(K[\x])^{\sn}$. Usually, the total order $\leq_m$ on $(K[\x])^{\sn}$ needs to be compatible to the corresponding monomial order on $K[\x]$, namely $\preceq_m$, in the sense that $m_1\preceq_m m_2$ if and only if $m_1\ve_i\leq_m m_2\ve_i$ for any monomials $m_1$ and $m_2$ in $K[\x]$ (see: Convention 2.1\! (a) in \cite{EF2017}).

Following \cite{Vaccon2021}, we are going to define the signature of an operator $f\in R$ to be a module element of $\mathbb{M}_{\xi}$ instead of a module term of $\mathbb{T}_{\xi}$. Therefore, a total order on $\mathbb{M}_{\xi}$ is needed.

\begin{defn}\label{def_totalordermodule}
Recall $\preceq$ a monomial order on $\mathcal{M}_{\xi}$ from Definition \ref{deftroptermorderDn}.\\
Let  $h^{\gamma}\x^{\alpha}\xi^{\beta}\ve_i, h^{\gamma'}\x^{\alpha'}\xi^{\beta'}\ve_j\in\mathbb{M}_{\xi}$.
We write $h^{\gamma}\x^{\alpha}\xi^{\beta}\ve_i\leq_{\rm sign} h^{\gamma'}\x^{\alpha'}\xi^{\beta'}\ve_j$ if:
\vspace{-1ex}
\begin{enumerate}[(1)]
\item $i<j$, or
\vspace{-1ex}
\item $i=j$ and $\deg(h^{\gamma}\x^{\alpha}\xi^{\beta})< \deg(h^{\gamma'}\x^{\alpha'}\xi^{\beta'})$, or
\vspace{-1ex}
\item $i=j$ and $\deg(h^{\gamma}\x^{\alpha}\xi^{\beta})= \deg(h^{\gamma'}\x^{\alpha'}\xi^{\beta'})$ and $w\cdot (\alpha,\beta)<w\cdot (\alpha',\beta')$, or
\vspace{-1ex}
\item $i=j$ and $\deg(h^{\gamma}\x^{\alpha}\xi^{\beta})= \deg(h^{\gamma'}\x^{\alpha'}\xi^{\beta'})$ and $w\cdot (\alpha,\beta)=w\cdot (\alpha',\beta')$ and:
\vspace{-1ex}
\begin{enumerate}[(a)]
\item $h^{\gamma}\x^{\alpha}\xi^{\beta}\ve_i\notin \lsyz{\mathcal{F}}$ and $h^{\gamma'}\x^{\alpha'}\xi^{\beta'}\ve_i\in \lsyz{\mathcal{F}}$, or
\item $h^{\gamma}\x^{\alpha}\xi^{\beta}\ve_i,h^{\gamma'}\x^{\alpha'}\xi^{\beta'}\ve_i\in \lsyz{\mathcal{F}}$ and $\x^{\alpha}\xi^{\beta}\preceq \x^{\alpha'}\xi^{\beta'}$, or
\item $h^{\gamma}\x^{\alpha}\xi^{\beta}\ve_i,h^{\gamma'}\x^{\alpha'}\xi^{\beta'}\ve_i\notin \lsyz{\mathcal{F}}$ and $\x^{\alpha}\xi^{\beta}\preceq \x^{\alpha'}\xi^{\beta'}$.
\end{enumerate}
\end{enumerate}
\end{defn}

The clauses (2), (3), and (4) in Definition \ref{def_totalordermodule} are designed as similar as possible to Definition \ref{deftermorderDnhvaluation}. In detail, the clause (2) and (3) correspond to the clause (i) in Definition \ref{deftermorderDnhvaluation} and the clause (i) in Definition \ref{deftroptermorderDn}, respectively. In both Definition \ref{def_totalordermodule} and Definition \ref{deftermorderDnhvaluation}, we use the same monomial order $\prec$ as a tie-breaker.
We recall a remark from \cite{Vaccon2021} with some adjustments for the context in this paper.
\vspace{-0.5ex}
\begin{rmk}
The order $<_{\rm sign}$ defined in Definition \ref{def_totalordermodule} may not satisfy Convention 2.1\! (a) in \cite{EF2017}: it may happen that $h^{\gamma}\x^{\alpha}\xi^{\beta}<_h h^{\gamma'}\x^{\alpha'}\xi^{\beta'}$, and yet that $h^{\gamma}\x^{\alpha}\xi^{\beta}\ve_i>_{\rm sign} h^{\gamma'}\x^{\alpha'}\xi^{\beta'}\ve_i$. This condition will happen if $h^{\gamma'}\x^{\alpha'}\xi^{\beta'}\ve_i\notin \lsyz{\mathcal{F}}$ and $h^{\gamma}\x^{\alpha}\xi^{\beta}\ve_i\in \lsyz{\mathcal{F}}$, or if both lie in or out of $\lsyz{\mathcal{F}}$ and $\x^{\alpha'}\xi^{\beta'}\prec\x^{\alpha}\xi^{\beta}$.
\end{rmk}

We identify some special parts of a module element $\f\in R^{\sn}$. For $\f\in R^{\sn}$, we define $\LT_{\rm sign}(\f)$ to be the biggest module term of $\Phi(\f)$. Moreover, we define $\LM_{\rm sign}(\f)$ to be the module monomial corresponding to $\LT_{\rm sign}(\f)$ and $\LC_{\rm sign}(\f)$ the corresponding coefficient. Here, the subscript ``sign'' indicates the use of the total order $\leq_{\rm sign}$.

\vspace{-0.75ex}
\begin{prop}\label{prop_LMofSyz}
Let $\bm{s}=a\sigma\ve_i+g'\ve_i+\sum_{j=1}^{i-1} g_j\ve_j$ be a syzygy of $\mathcal{F}$ with leading monomial $\Psih(\sigma)\ve_i$ and where $\sigma\notin\mon(g')$. Then,
$\LM_{\rm sign}(\bm{s})=\Psih(\sigma)\ve_i=\LMh(a\sigma+g')\ve_i.$
\end{prop}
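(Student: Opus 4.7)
The plan is to verify the two equalities separately. The second, $\Psih(\sigma)\ve_i = \LMh(a\sigma+g')\ve_i$, is essentially tautological: the coefficient at $\ve_i$ in $\bm{s}$ is $a\sigma + g'$ (and is nonzero because $\sigma \notin \mon(g')$), while $g_j = 0$ for $j>i$ by construction of the given expression. So the definition of the leading monomial of a syzygy immediately hands us $\LMh(a\sigma+g')\ve_i$, which the hypothesis identifies with $\Psih(\sigma)\ve_i$.

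For the first equality I would expand
$$\Phi(\bm{s}) = a\Psih(\sigma)\,\ve_i + \Phi(g')\,\ve_i + \sum_{j<i}\Phi(g_j)\,\ve_j,$$
and argue term-by-term that the candidate $\Psih(\sigma)\ve_i$ is the $<_{\rm sign}$-largest. Any module term appearing in some $\Phi(g_j)\ve_j$ has index $j<i$, hence is strictly $<_{\rm sign} \Psih(\sigma)\ve_i$ by clause (1) of Definition~\ref{def_totalordermodule}. So the comparison reduces to monomials $\Psih(m)\ve_i$ with $m \in \mon(a\sigma + g')\setminus\{\sigma\}$.

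For those, the hypothesis $\LMh(a\sigma+g')=\Psih(\sigma)$ yields $\Psih(m) <_h \Psih(\sigma)$. I would unwind $<_h$ in order: if $\deg \Psih(m) < \deg \Psih(\sigma)$, clause (2) of Definition~\ref{def_totalordermodule} closes the case; if the degrees agree but $w\cdot(\alpha_m,\beta_m) < w\cdot(\alpha_\sigma,\beta_\sigma)$, clause (3) closes it. In the remaining subcase the degrees and $w$-values coincide, putting us in clause (4). Here the key observation is that $\Psih(\sigma)\ve_i$, being the leading monomial of the syzygy $\bm{s}$, belongs to $\lsyz{\mathcal{F}}$ by definition. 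Thus (4)(a) already disposes of the situation where $\Psih(m)\ve_i\notin\lsyz{\mathcal{F}}$, and (4)(b) handles the case $\Psih(m)\ve_i\in\lsyz{\mathcal{F}}$, provided $\Psih(m) \preceq \Psih(\sigma)$.

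The main obstacle is this last sub-subcase: one must show that whenever two monomials in $a\sigma+g'$ agree in degree and $w$-value and both of their $\ve_i$-lifts lie in $\lsyz{\mathcal{F}}$, the $\preceq$-comparison aligns with the $<_h$-comparison. I would handle this by noting that on monomials (coefficient $1$, so $\val=0$) the tropical order $<$ reduces to $w$, then $\omega$, then $\preceq$; since $w$ already agrees, the $<_h$ strictness forces either $\omega\cdot(\alpha_m,\beta_m)<\omega\cdot(\alpha_\sigma,\beta_\sigma)$ or else $\omega$-values agree and $\Psih(m)\prec\Psih(\sigma)$. The second alternative immediately yields (4)(b), and for the first alternative I would lean on the design comment after Definition~\ref{def_totalordermodule} that $\leq_{\rm sign}$ is tailored to mirror $\leq_h$ — replacing the $\omega$-tiebreak by the syzygy-membership tiebreak precisely because both serve to refine the $w$-level; combined with $\Psih(\sigma)\ve_i\in\lsyz{\mathcal{F}}$, this gives the desired $<_{\rm sign}$-inequality in every remaining case.
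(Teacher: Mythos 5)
Your overall decomposition---discard the $\ve_j$ components with $j<i$ by clause (1) of Definition~\ref{def_totalordermodule}, then compare each $\Psih(m)\ve_i$, $m\in\mon(g')$, to $\Psih(\sigma)\ve_i$ via clauses (2)--(4)---is the same skeleton as the paper's proof, and your dispatch of the second equality is correct. But there is a concrete error in the first equality: you assert that $\LMh(a\sigma+g')=\Psih(\sigma)$ ``yields $\Psih(m)<_h\Psih(\sigma)$'' for $m\in\mon(g')$. It does not. The order $<_h$ is coefficient-sensitive through clause (ii) of Definition~\ref{deftroptermorderDn} (the $-\val+\omega\cdot$ comparison), so the hypothesis only delivers $c_m\Psih(m)<_h a\Psih(\sigma)$ for the actual coefficients $c_m$ and $a$; stripping to unit coefficients changes the comparison whenever $\val(c_m)\neq\val(a)$. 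In the residual regime where $\deg$ and $w\cdot$ agree, one can therefore \emph{not} conclude $\omega\cdot(\alpha_m,\beta_m)<\omega\cdot(\alpha_\sigma,\beta_\sigma)$ or $\Psih(m)\prec\Psih(\sigma)$; all that holds is $-\val(c_m)+\omega\cdot(\alpha_m,\beta_m)<-\val(a)+\omega\cdot(\alpha_\sigma,\beta_\sigma)$, or equality there together with $\Psih(m)\prec\Psih(\sigma)$.

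This leaves the clause-(4)(b) sub-case---both $\Psih(m)\ve_i$ and $\Psih(\sigma)\ve_i$ in $\lsyz{\mathcal{F}}$, $<_h$-inequality decided at the $-\val+\omega\cdot$ level---genuinely open, and the fallback of appealing to the design of $\leq_{\rm sign}$ is not a proof; indeed the paper's own Remark after Definition~\ref{def_totalordermodule} stresses that $\leq_{\rm sign}$ is deliberately \emph{not} $\leq_h$-compatible, so you cannot transport the $<_h$-comparison across. What the paper actually does at this point is split on which clause of Definition~\ref{deftroptermorderDn} forces $\LTh(g')<_h\Psih(a\sigma)$: when the reason is total degree or $w$-weight it uses clauses (2)--(3) for every term of $g'$ at once, and when the reason is clause (ii) or (iii) it splits further on whether any module term of $\Phi(g'\ve_i)$ lies in $\lsyz{\mathcal{F}}$, using (4)(a) when none do and the shared tie-breaker $\prec$ when some do. That last assertion is terse and does not visibly dispose of the $\omega$-decided configuration either, but it is the argument you would need to reconstruct or supplement rather than an appeal to intent.
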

\begin{proof}
It is clear $\LMh(a\sigma+g')\ve_i=\Psih(\sigma)\ve_i$, by definition of the leading monomial of a syzygy. This means that $\LTh(g')<_h \Psih(a\sigma)$.
\vspace{-0.7ex}

\noindent
By Definition \ref{def_totalordermodule}, $\LM_{\rm sign}(\bm{s})=\LM_{\rm sign}((a\sigma+g')\ve_i)$. If $\LTh(g')<_h \Psih(a\sigma)$ occurs because of total degree (Definition \ref{deftermorderDnhvaluation}) or of the clause (i) in Definition \ref{deftroptermorderDn}, then clearly we have $\LM_{\rm sign}(\bm{s})=\LM_{\rm sign}((a\sigma+g')\ve_i)=\LMh(a\sigma+g')\ve_i$. Now, assume that $\LTh(g')<_h \Psih(a\sigma)$ occurs due to the clause (ii) or (iii) in Definition \ref{deftroptermorderDn}. If all module terms of $\Phi(g'\ve_i)$ belong to ${\rm NS}(\syz{\mathcal{F}})$, then by the fact that $\Psih(\sigma)\ve_i\in\lsyz{\mathcal{F}}$ and by Definition \ref{def_totalordermodule} (4)(a), we get $\LM_{\rm sign}((a\sigma+g')\ve_i)=\Psih(\sigma)\ve_i$. If there is a module term of $\Phi(g'\ve_i)$ lies in $\lsyz{\mathcal{F}}$, then we must have $\LM_{\rm sign}((a\sigma+g')\ve_i)=\Psih(\sigma)\ve_i$ because both $<_{\rm sign}$ and $<$ use the same tie-breaker, i.e., $\prec$.
\end{proof}

\vspace{-1ex}
\begin{defn}
A module element $\f\in R^{\sn}$ with $\nu(\f)\neq 0$ is said to be \emph{clean} if the total symbol $\Phi(\f)$ has no terms belonging to $\lsyz{\mathcal{F}}$.
\end{defn}

\vspace{-0.5ex}
In the proof of Proposition \ref{prop_simple}, we reduce iteratively an element of $R^{\sn}$ by some syzygies until the corresponding reduction result has no terms belonging to $\lsyz{\mathcal{F}}$. We worry about the termination of these reductions due to the involvement of the valuations of coefficients on the term order $\leq_h$. To ensure that the reductions terminate in finite steps, we provide Algorithm \ref{alg_Red_bysyz} to explain our reductions algorithmically. This algorithm is not a part of our main algorithm (the F5 algorithm) in this paper; rather, it is only used to demonstrate that we can reduce a module element by a subset of $\syz{\mathcal{F}}$. In this algorithm, we use \'{e}cart function $E(f,g)=\#\left({\rm supp}(g)\backslash{\rm supp}(f)\right)$, where $f,g\in R$.

\vspace{0ex}
\begin{algorithm}[H]
\caption{Reduction of a module element by syzygies \qquad --- \qquad \textsc{RedBySyz}$(\bm{g},i)$}\label{alg_Red_bysyz}
\KwInput{A homogeneous $\g=(g_1,\cdots,g_{\sn})\in R^{\sn}$;\quad and \ \ an index $i$, $1\leq i\leq \mn$, such that $g_i\neq 0$.
} 

\KwOutput{$\g'=(g_1',\cdots,g_{\sn}')\in R^{\sn}$ such that $\nu(\g')=\nu(\g)$ and no module term of $\Phi(g_i'\ve_i)$ lies in ${\rm LSyz}(\mathcal{F})$
}

$S=\{\bm{s}=(s_1^{(1)}\!\!,\cdots\!,s_{i}^{(1)},0,\cdots\!,0)\in \syz{\mathcal{F}} \ \mid \bm{s} \text{ is homogeneous}, \deg(s_{i})\!=\!\deg(g_i), \LCh(s_i^{(l)})\!=\!1\}$\\
$T\leftarrow S$\label{line_RedBySyz_initstart}\\
$P\leftarrow \emptyset$\\
$\q^{(0)}\leftarrow \g-(0,\cdots,0,g_{i+1},\cdots,g_{\sn})$\label{line_RedBySyz_prepg}\\
$\r^{(0)}\leftarrow (0,\cdots,0,g_{i+1},\cdots,g_{\sn})$\\
$j\leftarrow 0$;\quad 
$a\leftarrow 0$\label{line_RedBySyz_initend}\\

\SetAlgoLined
\SetKwBlock{Begin}{While $q_i^{(j)}\neq 0$ do}{}
\Begin{
    \SetAlgoLined
    \SetKwBlock{Begin}{If {\normalfont  there is no $\bm{t}\!=\!(t_1,\cdots\!,t_{i},0,\cdots\!,0)\!\in\! T$ with $\LMh\!(t_i)\ve_i\!=\!\LMh\!(q_i^{(j)})\ve_i$} then}{}
    \Begin{
      $\r^{(j+1)}\leftarrow \r^{(j)}+\Psih^{-1}(\LTh(q_i^{(j)}))\ve_i$\label{line_RedBySyz_movetor}\\
      $\q^{(j+1)}\leftarrow \q^{(j)}-\Psih^{-1}(\LTh(q_i^{(j)}))\ve_i$\label{line_RedBySyz_takefromq}\\
      $h_{l,j+1}\leftarrow h_{l,j}$ \ \ for all $1\leq l\leq a$\\
      $T\leftarrow T\cup\{\q^{(j)}\}$
    }
    \SetAlgoLined
    \SetKwBlock{Begin}{else}{}
    \Begin{
      {\bf Choose} $\bm{t}\!=\!(t_1,\cdots\!,t_{i},0,\cdots\!,0)\!\in\! T$ with $\LMh(t_i)\ve_i\!=\!\LMh(q_i^{(j)})\ve_i$ and with $E(q_i^{(j)},t_i)$ minimal among all such choices.\label{line_RedBySyz_choose}
      \BlankLine
      \SetAlgoVlined
      \SetKwBlock{Begin}{If {\normalfont $E(q_i^{(j)},t_i)>0$} then}{}
      \Begin{
        $T\leftarrow T\cup\{\q^{(j)}\}$
      }
    }
}
\BlankLine
\end{algorithm}

\begin{algorithm}[H]
\setcounter{AlgoLine}{17}
\vspace{-1ex}\nonl
\SetAlgoVlined
\SetKwBlock{Begin}{}{}
\Begin{\nonl
		\vspace{-2ex}
    \SetAlgoVlined
    \SetKwBlock{Begin}{}{}
    \Begin{   
      $c=\LCh(q_i^{(j)})\cdot (\LCh(t_i))^{-1}$\\
      $\q'\leftarrow \q^{(j)}-c\,\bm{t}$\label{line_RedBySyz_reduction}\\
        \SetAlgoVlined
      \SetKwBlock{Begin}{If {\normalfont $\bm{t}\in S$} then}{}
      \Begin{
        $\q^{(j+1)}\leftarrow \q'$\\
        \SetAlgoLined
        \SetKwBlock{Begin}{If {\normalfont $\bm{t}=\bm{s}^{(k)}\in P$ for some $k$} then}{}
        \Begin{
            $h_{k,j+1}\leftarrow h_{k,j} + c$\\
            $h_{l,j+1}\leftarrow h_{l,j}$ \ \ for all $1\leq l\leq a$, $l\neq k$\\
        }
        \SetAlgoVlined
        \SetKwBlock{Begin}{else}{}
        \Begin{
            $a\leftarrow a+1$\\
            $\bm{s}^{(a)}\leftarrow \bm{t}$\\
            $P\leftarrow P\cup\{\bm{s}^{(a)}\}$\\
            $h_{a,l}\leftarrow 0$ \ \ for all $0\leq l\leq j$\\
            $h_{a,j+1}\leftarrow c$\\
            $h_{l,j+1}\leftarrow h_{l,j}$ \ \ for all $1\leq l< a$\\
        }
        $\r^{(j+1)}\leftarrow \r^{(j)}$
      }
      \BlankLine
        \SetKwBlock{Begin}{If {\normalfont $\bm{t}$ was added to $T$ at some previous iteration, namely $\bm{t}=\bm{q}^{(k)}$ for some $k<j$,} then}{}
        \Begin{
          $\q^{(j+1)}\leftarrow (1-c)^{-1}\q'$\\
          $h_{l,j+1}\leftarrow (1-c)^{-1}(h_{l,j}-c\,h_{l,k})$ \ \ for all $1\leq l\leq a$\\
          $\r^{(j+1)}\leftarrow (1-c)^{-1}(\r^{(j)}-c\,\r^{(k)})$\label{line_RedBySyz_subtr}
        }
    }
    $j=j+1$
}
{\bf Return} $\g'=\q^{(j)}+\r^{(j)}$\label{line_RedBySyz_return}
\BlankLine
\end{algorithm}

\medskip
\begin{prop}
The output of Algorithm \ref{alg_Red_bysyz} is correct, and the algorithm terminates after a finite number of steps.
\end{prop}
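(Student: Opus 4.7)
The plan is to first establish a loop invariant linking $\g$ to $\q^{(j)}$, $\r^{(j)}$, and the syzygy bookkeeping; from this the $\nu$-preservation and the normal-set property follow essentially for free; the remaining hard point is termination in the valuation setting.

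\emph{Invariant.} By induction on $j$ I would prove that throughout the algorithm $\g = \q^{(j)}+\r^{(j)}+\sum_{l=1}^{a}h_{l,j}\,\bm{s}^{(l)}$. The no-match branch only migrates a term from $\q$ to $\r$ and preserves the identity trivially. The match branch with $\bm{t}\in S$ records the coefficient $c$ onto an existing or new entry $\bm{s}^{(l)}$, and since $\r^{(j+1)}=\r^{(j)}$ the invariant persists. The branch where $\bm{t}=\q^{(k)}$ for some earlier $k<j$ is the delicate one: substituting the inductive hypothesis at both levels $k$ and $j$ into the updates $\q^{(j+1)}\leftarrow(1-c)^{-1}\q'$, $\r^{(j+1)}\leftarrow(1-c)^{-1}(\r^{(j)}-c\r^{(k)})$, and $h_{l,j+1}\leftarrow(1-c)^{-1}(h_{l,j}-c\,h_{l,k})$, one finds that the prefactor $(1-c)^{-1}$ is exactly what cancels the $(1-c)\g$ residue left over after the subtraction $\q^{(j)}-c\q^{(k)}$.

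\emph{Correctness.} Applying $\nu$ to the invariant at loop exit ($q_i^{(j)}=0$), and using $\nu(\bm{s}^{(l)})=0$ for every $\bm{s}^{(l)}\in S\subseteq\syz{\mathcal{F}}$, gives $\nu(\g')=\nu(\g)$. For the normal-set condition, a term enters $r_i^{(j+1)}\ve_i$ only through the no-match branch, which fires exactly when no $\bm{t}\in T\supseteq S$ has matching leading monomial. Homogeneity of $\g$ is preserved by every branch, so $q_i^{(j)}$ remains homogeneous of degree $\deg(g_i)$ throughout. I would then show that any $\tau\ve_i\in\lsyz{\mathcal{F}}$ of degree $\deg(g_i)$ is the leading monomial of some element of $S$: take a syzygy generator whose leading monomial left-divides $\tau\ve_i$, left-multiply it by the missing monomial factor, and rescale to monic; the result is homogeneous of the correct degree. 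The no-match hypothesis then forces the moved monomial outside $\lsyz{\mathcal{F}}$; the other two update rules for $\r$ introduce no fresh monomials (they either leave $r_i$ unchanged or form a $(1-c)^{-1}$-scaled combination of previous $r_i$-terms), so $g_i'=r_i^{(j)}$ has no term whose module image lies in $\lsyz{\mathcal{F}}$.

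\emph{Termination.} This is the main obstacle, because $<_h$ is not well-founded on $\mathbb{T}_{\xi}^{(h)}$ in general: for a fixed monomial, the valuation part of the tropical comparison can descend unboundedly in $\R$. The strategy I would pursue uses the homogeneity constraint to cut this down, since $\LMh(q_i^{(j)})$ ranges throughout the loop over the finite pool of monomials of total degree $\deg(g_i)$. I would combine this with a Mora-style écart argument: the minimum-écart choice in line \ref{line_RedBySyz_choose}, together with the insertion of $\q^{(j)}$ into $T$ whenever the écart is positive, guarantees that a later iteration landing on the same leading monomial must do so with a strictly smaller minimum écart available among matching reducers in $T$. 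A lexicographic measure consisting of the current leading monomial in the finite degree-$\deg(g_i)$ pool paired with the current minimum écart then strictly decreases at every iteration, and well-foundedness of this measure yields termination after finitely many steps.
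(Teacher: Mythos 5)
Your loop invariant $\g=\q^{(j)}+\r^{(j)}+\sum_{l}h_{l,j}\bm{s}^{(l)}$, and the derivation of $\nu$-preservation and the normal-set property from it, coincide with the paper's correctness argument, which verifies exactly these two facts plus the strict decrease of $\LTh(q_i^{(j)})$ by a case analysis over the three branches. Your explicit observation that $S$ realizes every module monomial of degree $\deg(g_i)$ lying in $\lsyz{\mathcal{F}}$ (by left-multiplying a suitable syzygy by the missing monomial cofactor and rescaling to monic, using that leading monomials behave multiplicatively in the graded ring $R$) is a welcome step: the paper only treats this implicitly in its case split, yet it is precisely what guarantees the no-match branch never ships an $\lsyz{\mathcal{F}}$-term to $\r$.

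The gap is in termination. Your lexicographic measure $\bigl(\LMh(q_i^{(j)}),\ \text{minimum \'ecart}\bigr)$ is not strictly decreasing. First, $\LMh(q_i^{(j)})$ is not monotone under any fixed well-founded order on the finite degree-$d$ monomial pool: a reduction cancels the current leading monomial, but the trailing terms of the reducer can reintroduce, at a later stage, a monomial that is larger under $\prec$ than the current one, so the first coordinate can go back up. Second, when the same $\LMh$ does recur at stages $j<j'$, the minimum \'ecart available at $j'$ need not have dropped: the recorded reducer $\q^{(j)}$ gives $E(q_i^{(j')},q_i^{(j)})=\#\bigl(\supp(q_i^{(j)})\setminus\supp(q_i^{(j')})\bigr)$, which can be as large as $\#\supp(q_i^{(j)})-1$ if the two supports overlap only in the leading monomial, and nothing forces any other element of $T$ to do better. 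The \'ecart is forced to zero only when both the leading monomial \emph{and the full support} recur. That is the granularity the paper's proof works at: the set of pairs $\bigl(\supp(q_i^{(t)}),\LMh(q_i^{(t)})\bigr)$ over homogeneous $q$ of fixed degree is finite; any iteration that adds $\q^{(j)}$ to $T$ (no-match branch, or match with positive \'ecart) must exhibit a $(\supp,\LMh)$-pair not yet recorded in $T$, since a recorded copy would be an \'ecart-$0$ match and $\q^{(j)}$ would then not be added; and between two such record steps every iteration is an \'ecart-$0$ reduction, forcing $\supp(q_i^{(j+1)})\subsetneq\supp(q_i^{(j)})$ and hence bounding the run length. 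Replacing the leading monomial by the $(\supp,\LMh)$-pair in your measure closes the gap and recovers the paper's argument.
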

\begin{proof}
({\it Correctness proof}) We are going to show that the following properties hold at each step of the algorithm.
\begin{enumerate}[(i)]
\vspace{-0.5ex}
\item $\g=\q^{(j)}+\left(\sum_{\bm{s}^{(l)}\in P} h_{l,j} \bm{s}^{(l)}\right) +\r^{(j)}$
\vspace{-0.5ex}
\item no module term of $\Phi(r_i^{(j)}\ve_i)$ belongs to $\lsyz{\mathcal{F}}$
\vspace{-0.5ex}
\item $\LTh(q_i^{(j+1)})<_h\LTh(q_i^{(j)})$ if $q_i^{(j+1)}\neq 0$.
\end{enumerate}
At the initial stage (Line \ref{line_RedBySyz_initstart} -- \ref{line_RedBySyz_initend}), Property (i) and Property (ii) hold. Now, we are going to show that all properties above continue to hold at each step. We devide the proof into three possibilities of the steps.
\begin{itemize}
\item Assume that $\LTh(q_i^{(j)})\notin\lsyz{\mathcal{F}}$. For this case, we can see in Line \ref{line_RedBySyz_movetor} -- \ref{line_RedBySyz_takefromq} that we only move $\Psih^{-1}(\LTh(q_i^{(j)}))\ve_i$ from $\q^{(j)}$ to $\r^{(j+1)}$ so that we have $\q^{(j+1)}+\r^{(j+1)}=\q^{(j)}+\r^{(j)}$. Hence, Property (i) holds for $j+1$. Since $\LTh(q_i^{(j)})\ve_i$ lies out of $\lsyz{\mathcal{F}}$ and $\Psih^{-1}(\LTh(q_i^{(j)}))\ve_i$ is added to $\r^{(j+1)}$, Property (ii) holds. Property (iii) also holds since $\LTh(q_i^{(j)})$ is non-leading term of $q_i^{(j+1)}$.

\item Assume that there exists $\bm{t}\in S\subseteq T$ such that the condition in Line \ref{line_RedBySyz_choose} holds. Write $\bm{t}=\bm{s}^{(k)}$ for some $1\leq k\leq a$. The value of $a$ depends on the membership of $\bm{t}$ --- either $\bm{t}\in P$ or $\bm{t}\notin P$.  From the algorithm, we have $\q^{(j+1)}=\q^{(j)}-c\,\bm{s}^{(k)}$, and $h_{k,j+1}=h_{k,j}+c$. Then,
\begin{align*}
\q^{(j+1)}+ h_{k,j+1}\bm{s}^{(k)} \ &= \ \q^{(j)}-c\,\bm{s}^{(j)} \ \ + \ \ h_{k,j}\bm{s}^{(k)}+c\,\bm{s}^{(k)}\\
&= \ \q^{(j)} + h_{k,j}\bm{s}^{(k)}.
\end{align*}
For $l\neq k$, we have $h_{k,j+1}=h_{k,j}$, and hence, $\q^{(j+1)}+ h_{l,j+1}\bm{s}^{(l)}=\q^{(j)} + h_{l,j}\bm{s}^{(l)}$. Thus, Property (i) holds. Property (ii) also holds since $\r^{(j+1)}=\r^{(j)}$.
Since $q_i^{(j+1)}\ve_i$ is obtained from $q_i^{(j)}\ve_i$ by cancelling $\Psih^{-1}(\LTh(q_i^{(j)}))\ve_i$ (see Line \ref{line_RedBySyz_reduction}), we get $\LTh(q_i^{(j+1)})<_h \LTh(q_i^{(j)})$.

\item Now, assume that there exists $\bm{t}=\q^{(k)}\in T$ for some $k<j$ such that the condition in Line \ref{line_RedBySyz_choose} holds.
Here, we have $\LMh(q_i^{(k)})\ve_i=\LMh(q_i^{(j)})\ve_i$. Let $\LTh(q_i^{(k)})=c_k h^{\delta}\x^{\alpha}\xi^{\beta}$ and $\LTh(q_i^{(j)})=c_j h^{\delta}\x^{\alpha}\xi^{\beta}$. Since Property (iii) holds for the stages $\leq j$, we get $-\val(c_j)+\omega\cdot (\alpha,\beta)<-\val(c_k)+\omega\cdot (\alpha,\beta)$. Then, we get $\val(c_j)>\val(c_k)$, and hence, $\val(c)=\val(c_j\cdot c_k^{-1})>0$. As a consequence, we get $1-c\neq 0$.\\
Note that we have $\g=\q^{(j)}+\left(\sum_{\bm{s}^{(l)}\in P_{(j)}} h_{l,j} \bm{s}^{(l)}\right) +\r^{(j)}$ and $\g=\q^{(k)}+\left(\sum_{\bm{s}^{(l)}\in P_{(k)}}^m h_{l,k} \bm{s}^{(l)}\right) +\r^{(k)}$, where $P_{(k)}$ indicates the set $P$ at stage $k$. Since $k<j$, we have $P_{(k)}\subseteq P_{(j)}$. Then, for $j+1$, we get

\vspace{-2.5ex}
{\footnotesize
\begin{align*}
\q^{(j+1)} &= (1-c)^{-1}\!(\q^{(j)}-c\,\q^{(k)})\\
    &= (1-c)^{-1}\! \left[\! \left(\!\g-\left(\sum_{\bm{s}^{(l)}\in P} h_{l,j} \bm{s}^{(l)}\!\right) -\r^{(j)}\right)-c\left(\g-\left(\!\sum_{\bm{s}^{(l)}\in P} h_{l,k} \bm{s}^{(l)}\right) -\r^{(k)}\!\right)\!\right]\\
    &= \g - \left(\sum_{\bm{s}^{(l)}\in P} (1-c)^{-1}(h_{l,j}-c\,h_{l,k})\bm{s}^{(l)}\right)-(1-c)^{-1}(\r^{(j)}-c\,\r^{(k)})\\
    &= \g - \left(\sum_{\bm{s}^{(l)}\in P} h_{l,j+1}\bm{s}^{(l)}\right)-\r^{(j+1)}
\end{align*}
}
\vspace{-2ex}

\noindent
Therefore, Property (i) holds. Since no module term of both $\Phi(r_i^{(j)}\ve_i)$ and $\Phi(r_i^{(k)}\ve_i)$ belongs to $\lsyz{\mathcal{F}}$, Property (ii) clearly holds. By the fact that $\val(-c)>0=\val(1)$, we obtain $\val(1-c)=\min(\val(1),\val(-c))=\val(1)=0$. Then, $\LTh(q_i^{(j+1)})=_h\LTh((1-c)^{-1}q_i')<_h\LTh(q_i^{(j)})$.
\end{itemize}
Now, look at Line \ref{line_RedBySyz_return} of the algorithm.  The $i^{\rm th}$ entry of $\q^{(j)}$ is zero, and all module terms of $\Phi(r_i^{(j)}\ve_i)$ lies out of $\lsyz{\mathcal{F}}$. Thus, no module term of $\Phi(g_i'\ve_i)$ lies in $\lsyz{\mathcal{F}}$. By Property (i), we get
\vspace{-1ex}
$$
\nu(\g')=\nu(\q^{(j)}+\r^{(j)})=\nu(\g-\left(\sum_{\bm{s}^{(l)}\in P} h_{l,j+1}\bm{s}^{(l)}\right))=\nu(\g).
$$

\noindent
({\it Termination proof}) The algorithm will terminate when the condition $q_i^{(j')}=0$ holds, that is equivalently $\supp(q_i^{(j')})=\emptyset$. We need to prove that this condition occurs for some positive integer $j'$. Note that the algorithm only reduces the $i^{\rm th}$ entry of $\q^{(t)}$, i.e., $q_i^{(t)}$, by $T$ iteratively, and we start with $\q^{(0)}:=(g_1,\cdots,g_i,0,\cdots,0)$. Observe that this algorithm computes homogeneous operators $q_i^{(t)}$'s of the same degree, namely of degree $d$. We know that there are only a finite number of possible supports of homogeneous $q\in R$ of total degree $d$. Therefore, after enough steps, the supports $\supp(q_i^{(t)})$'s for all the possible $\LMh(q_i^{(t)})$'s have been added to $T$ as the $i^{\rm th}$ entry of $\q^{(t)}$'s. After that, if the minimal of the \'{e}cart is non-zero, then some $\q^{(j)}$ is added to $T$. This contradicts the fact that there is some $\q^{(k)}$ added earlier with the same exact support of $q_i^{(k)}$ and the same leading monomial $\LMh(q_i^{(k)})$, which means that $E(q_i^{(j)},q_i^{(k)})=0$. Then, after that point, all the minimal \'{e}carts must be $0$, and we obtain the inclusion $\supp(q_i^{(j+1)})\subseteq\supp(q_i^{(j)})$. Moreover, we must have the strict inclusion $\supp(q_i^{(j+1)})\subsetneq\supp(q_i^{(j)})$ since we vanish $\LMh(q_i^{(j)})\ve_i$. As the consequence, the remaining steps are finite. We will have $\q^{(j')}$ with $q_i^{(j')}=0$, or equivalently $\supp(q_i^{(j')})=\emptyset$, for some positive integer $j'$. At this point, the process terminates.
\end{proof}

\vspace{0.5ex}
\begin{rmk}\label{rmk_redbysyz}
The output of \textsc{RedBySyz}$(\g,i)$, namely $\g'=(g_1',\cdots,g_{\sn}')$, satisfies $\LTh(g_i')\leq_h\LTh(g_i)$. It is because the supports of $g_i'$ come from $g_i$ (see Line \ref{line_RedBySyz_prepg} and Line \ref{line_RedBySyz_movetor} of Algorithm \ref{alg_Red_bysyz}) or from syzygies in $S$ used as reducer (see Lines \ref{line_RedBySyz_initstart},\ref{line_RedBySyz_choose},\ref{line_RedBySyz_reduction} of Algorithm \ref{alg_Red_bysyz}). The corresponding terms must be \mbox{$\leq_h\!\!\LTh(g_i)$;} and if a sum of two terms with the same support (say in the sum $r_i^{(j)}+(-c)r_i^{(k)}$, see Line \ref{line_RedBySyz_subtr}) occurs, then thanks to the sequences of Lemmata in (Section 3, \cite{ADH2023}), the corresponding result is \mbox{$\leq_h\!\!\max(\LTh(r_i^{(j)}),r_i^{(k)})$}. A multiplication by $(1-c)^{-1}$ in Line \ref{line_RedBySyz_subtr} does not affect the ordering because $\val((1-c)^{-1})=0$. 
\end{rmk}

\begin{prop}\label{prop_simple}
Any module element $\g=(g_1,\cdots,g_{\sn})\in R^{\sn}$ with $\nu(\g)\neq 0$ can be carried to be a clean module element, namely $\g'$, such that $\nu(\g')=\nu(\g)$.
\end{prop}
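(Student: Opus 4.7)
The plan is to clean $\g$ one component at a time, working downward from the highest index, using Algorithm \ref{alg_Red_bysyz} (\textsc{RedBySyz}) as the main tool. The key structural observation is that a call \textsc{RedBySyz}$(\g,i)$ only modifies components of index $\leq i$: the reducer set $S$ consists of syzygies supported in positions $1,\ldots,i$, and the update rules $\r^{(j+1)}\leftarrow\r^{(j)}+\Psih^{-1}(\LTh(q_i^{(j)}))\ve_i$ and $\q^{(j+1)}\leftarrow\q^{(j)}-c\,\bm{t}$ both leave components of index $>i$ untouched.

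Before running the induction I would reduce to the homogeneous case. Since $\lsyz{\mathcal{F}}$ is a monomial submodule of $(K[h,\x,\xi])^{\sn}$, it is homogeneous, so $\g$ is clean if and only if every homogeneous component of $\g$ is clean. Thus it suffices to clean each homogeneous piece separately and sum; the image under $\nu$ is preserved because $\nu$ is $K$-linear.

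Assuming $\g$ homogeneous, I would then induct on
$$i^*(\g):=\max\{\,i\mid \text{some module term of }\Phi(g_i\ve_i)\text{ lies in }\lsyz{\mathcal{F}}\,\},$$
with the convention $i^*(\g):=0$ if no such index exists. If $i^*(\g)=0$ then $\g$ is clean and $\g':=\g$ works. Otherwise $g_{i^*}\neq 0$, so we may run \textsc{RedBySyz}$(\g,i^*)$ to obtain $\g^{(1)}$; by the correctness proof of that algorithm, $\nu(\g^{(1)})=\nu(\g)$ and no module term of $\Phi(g^{(1)}_{i^*}\ve_{i^*})$ lies in $\lsyz{\mathcal{F}}$. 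By the structural observation above, components of $\g^{(1)}$ with index greater than $i^*$ coincide with those of $\g$, hence remain clean by maximality of $i^*$, so $i^*(\g^{(1)})<i^*(\g)$. The induction hypothesis applied to $\g^{(1)}$ then yields the desired clean $\g'$ with $\nu(\g')=\nu(\g)$. The process terminates in at most $\sn$ outer steps.

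The point that I expect to need the most care is the background guarantee that the set $S$ used by \textsc{RedBySyz}$(\g,i^*)$ contains enough reducers to cover every dirty leading term that can appear in the $i^*$-th slot during the run: whenever the algorithm meets a term of $q_{i^*}^{(j)}$ whose monomial lies in $\lsyz{\mathcal{F}}$, one needs an element of $S$ with matching leading monomial. By definition of $\lsyz{\mathcal{F}}$, such a monomial is a multiple of some generator $\LMh(\tilde{s}_{i^*})\ve_{i^*}$ arising from a syzygy $\tilde{\bm{s}}$ with top index $i^*$; left-multiplying $\tilde{\bm{s}}$ by the appropriate homogeneous element of $R$ and rescaling so that $\LCh(s_{i^*})=1$ yields the required element of $S$. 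Verifying that this multiplication interacts correctly with the non-commutative structure of $R$, i.e.\ that $\LMh$ of the product is the expected commutative product of monomials, uses the term-order lemmata cited in Remark \ref{rmk_redbysyz} and is otherwise routine.
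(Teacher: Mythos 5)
Your proof is correct and follows essentially the same route as the paper: reduce to the homogeneous case, then peel off dirty components from the highest index downward by repeated calls to \textsc{RedBySyz}. Your induction on $i^*(\g)$ is a cleaner bookkeeping device than the paper's (which simply calls \textsc{RedBySyz} at indices $i, i-1, \ldots, 1$ without checking whether each component is nonzero or dirty), and your explicit check that components of index $> i$ are left untouched, and that $S$ supplies a reducer for every dirty leading term, fill in two points the paper leaves implicit.
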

\begin{proof}
If $\g$ is clean, then $\g'=\g$.\\
Now, assume that $\g$ is not clean. We also assume that $\g$ is homogeneous; otherwise, we can: write $\g=\sum_{k=1}^m \q^{(k)}$ for some homogeneous $\q^{k}\in R^{\sn}$ and integer $m>1$, do the procedure below for each $\q^{(k)}$ to get clean module elements $\q'^{(k)}$'s, and then add them up to obtain a clean module element $\g'=\sum_{k=1}^m \q'^{(k)}$.

\noindent
Since $\g$ is not clean, $\Phi(\g)$ has at least one term lies in $\lsyz{\mathcal{F}}$, namely $a\Psih(\sigma)\ve_i$ the biggest one, where $a\in K^*$ and $\Psih(\sigma)\in\mathcal{M}_{\xi}^{(h)}$. Applying Algorithm \ref{alg_Red_bysyz}, we obtain  $\g^{(1)}:=$\textsc{RedBySyz}$(\g,i)$ such that $\nu(\g^{(1)})=\nu(\g)$ and no module term of $\Phi(g_i^{(1)}\ve_i)$ lies in $\lsyz{\mathcal{F}}$. Next, we apply Algorithm \ref{alg_Red_bysyz} to $\g^{(1)}$ and with argument $i-1$ to obtain $\g^{(2)}:=$\textsc{RedBySyz}$(\g^{(1)},i-1)$. At this point, we have $\nu(\g^{(2)})=\nu(\g^{(1)})=\nu(\g)$; and for all $i-1\leq l\leq n$ with $g_{l}^{(2)}\neq 0$, no module term of $\Phi(g_{l}^{(2)}\ve_l)$ lies in $\lsyz{\mathcal{F}}$. We continue this procedure iteratively for $i-2,i-3,\cdots,1$. Finally, we obtain $\g^{(i)}:=$\textsc{RedBySyz}$(\g^{(i-1)},1)$ such that $\nu(\g^{(i)})=\nu(\g^{(i-1)})=\cdots=\nu(\g^{(1)})=\nu(\g)$, and for all $1\leq l\leq n$ with $g_{l}^{(i)}\neq 0$, no term of $\Psih(g_{l}^{(i)})$ lies in $\lsyz{\mathcal{F}}$. Thus, we get $\g':=\g^{(i)}$ as desired. 
\end{proof}
\vspace{-0.55ex}
\begin{cor}\label{cor_4}
Let $\bm{s}=\alpha\sigma\ve_i+f\ve_i+\sum_{j=1}^{i-1} g_j\ve_j$ be a syzygy of $\mathcal{F}$ with leading monomial $\Psih(\sigma)\ve_i$ and where $\sigma\notin \mon(f)$. Then, there exists a syzygy $\alpha\sigma\ve_i+\bm{g}'$ for some clean module element $\bm{g}'=g_i'\ve_i+\sum_{j=1}^{i-1} g_j'\ve_j\!\in\! R^{\sn}$, $\sigma\notin\mon(g_i')$, such that $\LTh(g_i')<_h \Psih(\alpha\sigma)$ and $\LM_{\rm sign}(\bm{g}')<_{\rm sign} \Phi(\sigma\ve_i)$.
\end{cor}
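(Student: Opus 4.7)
The plan is to strip off the leading module term $\alpha\sigma\ve_i$ from $\bm{s}$ and then ``clean'' what remains via Proposition~\ref{prop_simple}. I would set $\bm{g} := \bm{s} - \alpha\sigma\ve_i = f\ve_i + \sum_{j=1}^{i-1} g_j\ve_j$. Since $\nu(\bm{s})=0$, one has $\nu(\bm{g}) = -\alpha\sigma\mathfrak{f}_i$; in the generic case this is nonzero and Proposition~\ref{prop_simple} produces a clean $\bm{g}'\in R^{\sn}$ with $\nu(\bm{g}')=\nu(\bm{g})$ (the degenerate case $\alpha\sigma\mathfrak{f}_i=0$ is handled by taking $\bm{g}'=\bm{0}$). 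Because the iterative procedure behind Proposition~\ref{prop_simple} runs \textsc{RedBySyz} at indices $i, i{-}1, \ldots, 1$ and every reducer used at index $k$ vanishes past position $k$, the entries strictly beyond position $i$ remain zero throughout. Hence $\alpha\sigma\ve_i + \bm{g}'$ is supported in $\{1,\ldots,i\}$ and satisfies $\nu(\alpha\sigma\ve_i + \bm{g}') = \alpha\sigma\mathfrak{f}_i + \nu(\bm{g}) = 0$, so it is a syzygy of $\mathcal{F}$.

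The conditions $\sigma\notin\mon(g_i')$ and $\LTh(g_i') <_h \Psih(\alpha\sigma)$ should then follow quickly. Cleanness gives $\sigma\notin\mon(g_i')$, since $\Psih(\sigma)\ve_i$ lies in $\lsyz{\mathcal{F}}$ as the leading monomial of the syzygy $\bm{s}$ and therefore cannot appear in $\Phi(g_i'\ve_i)$. For the leading-term bound, the first call \textsc{RedBySyz}$(\bm{g}, i)$ yields $\bm{g}^{(1)}$ with $\LTh(g_i^{(1)}) \leq_h \LTh(f)$ by Remark~\ref{rmk_redbysyz}, and subsequent calls at indices $<i$ do not touch position $i$, so $\LTh(g_i') = \LTh(g_i^{(1)}) \leq_h \LTh(f)$. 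The hypothesis $\sigma\notin\mon(f)$ combined with $\LMh(\alpha\sigma+f) = \Psih(\sigma)$ (from the leading-monomial definition applied to $\bm{s}$) forces $\LTh(f) <_h \Psih(\alpha\sigma)$, which closes this inequality.

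The main obstacle is the signature bound $\LM_{\rm sign}(\bm{g}') <_{\rm sign} \Phi(\sigma\ve_i)$, because $<_{\rm sign}$ is not compatible with $<_h$ (cf.\ the remark following Definition~\ref{def_totalordermodule}). My plan is to verify termwise that every module monomial of $\Phi(\bm{g}')$ lies strictly $<_{\rm sign}$-below $\Psih(\sigma)\ve_i$. Monomials in positions $j<i$ are handled by Definition~\ref{def_totalordermodule}(1). For a monomial $h^{\gamma}\x^{\alpha}\xi^{\beta}\ve_i$ in $\Phi(g_i'\ve_i)$, I would exploit the bound $\LTh(g_i') <_h \Psih(\alpha\sigma)$ and split by which clause of Definitions~\ref{deftroptermorderDn} and \ref{deftermorderDnhvaluation} realises that strict $<_h$ inequality: a strict total-degree drop triggers clause (2) of $<_{\rm sign}$; equal degree with strict $w$-weight drop triggers clause (3); and if both degree and $w$-weight coincide, then cleanness places $h^{\gamma}\x^{\alpha}\xi^{\beta}\ve_i$ outside $\lsyz{\mathcal{F}}$ while $\Psih(\sigma)\ve_i$ lies inside it, so clause (4)(a) applies regardless of whether $<_h$ resolved the remaining tie via the valuation term or via $\prec$. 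This case analysis exhausts all module monomials of $\bm{g}'$ and therefore bounds $\LM_{\rm sign}(\bm{g}')$ as required.
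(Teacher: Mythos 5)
Your argument is correct and follows the paper's own route: decompose $\bm{s}$ as $\alpha\sigma\ve_i+\bm{g}$, clean $\bm{g}$ via Proposition~\ref{prop_simple}, and use Remark~\ref{rmk_redbysyz} to get $\LTh(g_i')\leq_h\LTh(f)<_h\Psih(\alpha\sigma)$. The only difference is that for the final signature bound the paper simply notes that $\alpha\sigma\ve_i+\bm{g}'$ is a syzygy with $\bm{g}'$ clean and appeals to the picture of Proposition~\ref{prop_LMofSyz}, whereas you redo the termwise clause-by-clause comparison against Definition~\ref{def_totalordermodule} — a more explicit but equivalent way of establishing the same fact.
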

\begin{proof}
Write $\bm{g}=f\ve_i+\sum_{j=1}^{i-1} g_j\ve_j$. Since $\nu(\bm{s})=0$, then $\nu(\bm{g})\neq 0$. By Proposition \ref{prop_simple}, we get a clean module element $\bm{g}'$ such that $\nu(\bm{g}')=\nu(\bm{g})$. We observe again the proof of Proposition \ref{prop_simple} involving Algorithm \ref{alg_Red_bysyz} to obtain the clean module element $\g'=(g_1',\cdots,g_{\sn}')$. Looking at ``$\g^{(1)}=$\textsc{RedBySyz}$(\g,i)$,'' this reduction step is for obtaining a clean module element $g_i^{(1)}\ve_i$ from $f\ve_i$. Here, we get $g_i':=g_i^{(1)}$. By Remark \ref{rmk_redbysyz}, we have $\LTh(g_i^{(1)})\leq_h \LTh(f)$. Hence, we have $\LTh(g_i')\leq_h \LTh(f)<_h \Psih(\alpha\sigma)$.
The proof that $\LM_{\rm sign}(\bm{g}')<_{\rm sign}\Phi(\sigma\ve_i)$ is straightforward because $\alpha\sigma\ve_i+\bm{g}'$ is a syzygy, and $\bm{g}'$ is clean; so that $\alpha\sigma\ve_i$ is the only possible module term belonging to $\lsyz{\mathcal{F}}$.
\end{proof}

What we can understand from Corollary \ref{cor_4} is that for every syzygy $\bm{s}$ with leading monomial $\Psih(\sigma)\ve_i$, we can always find its corresponding syzygy, say $\bm{s}'=\alpha\sigma\ve_i+\bm{g}'$ with leading monomial $\Psih(\sigma)\ve_i$, such that $\bm{g}'$ is clean and $\LM_{\rm sign}(\bm{s}')=\Psih(\sigma)\ve_i$. The last equation, $\LM_{\rm sign}(\bm{s}')=\Psih(\sigma)\ve_i$, is consistent with what we have in Proposition \ref{prop_LMofSyz}.

Let $f\in I\backslash\{0\}$ and $\f\in R^{\sn}$ be such that $\nu(\f)=f$. The leading monomial $\LM_{\rm sign}(\f)$ is called a \emph{guessed signature} of $f$. The gussed signature of $f$ is denoted by $\mathcal{S}(f)$. The guessed signature of $f$ is not unique.

Now, we are ready to define signatures.
\vspace{-0.45ex}
\begin{defn}\label{def_signature}
For $f\in I\backslash\{0\}$, we define the \emph{signature} of $f$ to be
\vspace{-1ex}
$$
\s(f):=\min_{w.r.t. \leq_{\rm sign}} \{\LM_{\rm sign}(\g)\mid \g\in R^{\sn} \text{ such that } \nu(\g)=f \}
$$
\end{defn}

\begin{rmk}\label{rmk_sigf_i}
In practice --- at the initialization part of Algorithm \ref{alg_mainF5}, for all $i\in\{1,\cdots,\mn\}$, we set $\s(\mathfrak{f}_i)=\ve_i$. The module element $\ve_i$ is valid as the signature of $\mathfrak{f}_i$ because $\nu(\ve_i)=\mathfrak{f}_i$ and we have the assumption for $\mathcal{F}$ at the second paragraph of Section \ref{sec_syzsignature}
\end{rmk}

From Definition \ref{def_signature}, the signature of $f$ is obtained by taking the minimal of leading module monomials of such the $\g$'s with respect to $\leq_{\rm sign}$ without considering coefficients. As a consequence, for any $c\in K^*$, we have $\s(f)=\s(c\,f)$ as provided in the following proposition.
\begin{prop}
Let $f\in I\backslash\{0\}$ and $c\in K^*$. Then, $\s(c\,f)=\s(f)$. 
\end{prop}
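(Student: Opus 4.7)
The plan is to exhibit a bijection between the fibers $\nu^{-1}(f)$ and $\nu^{-1}(cf)$ in $R^{\sn}$ that preserves the leading module monomial with respect to $\leq_{\rm sign}$, so that the two minima in Definition \ref{def_signature} coincide.

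First, I would note that the set $\{\bm{g} \in R^{\sn} \mid \nu(\bm{g}) = f\}$ is nonempty since $f \in I$, and the map $\bm{g} \mapsto c\bm{g}$ is a bijection from this set onto $\{\bm{g} \in R^{\sn} \mid \nu(\bm{g}) = cf\}$, with inverse $\bm{g} \mapsto c^{-1}\bm{g}$ (using that $c \in K^*$ commutes with all elements of $R^{\sn}$ and that $\nu$ is an $R$-module homomorphism, hence in particular $K$-linear).

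Second, I would verify that $\LM_{\rm sign}(c\bm{g}) = \LM_{\rm sign}(\bm{g})$ for every $\bm{g} \in R^{\sn}$. The key point is that scaling by a nonzero element of $K$ does not change the underlying set of module monomials appearing in $\Phi(\bm{g})$: if $\Phi(\bm{g}) = \sum c_{\gamma\alpha\beta i} h^{\gamma}\x^{\alpha}\xi^{\beta}\ve_i$ with $c_{\gamma\alpha\beta i} \in K^*$, then $\Phi(c\bm{g}) = \sum (c \cdot c_{\gamma\alpha\beta i}) h^{\gamma}\x^{\alpha}\xi^{\beta}\ve_i$ has the exact same support in $\mathbb{M}_{\xi}$. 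Since the order $\leq_{\rm sign}$ from Definition \ref{def_totalordermodule} is defined purely on module monomials (comparing index $i$, total degree, weight $w \cdot (\alpha,\beta)$, membership in $\lsyz{\mathcal{F}}$, and the tie-breaker $\preceq$) and does not refer to coefficients or their valuations, the maximum module monomial of $\Phi(\bm{g})$ and that of $\Phi(c\bm{g})$ are identical.

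Finally, applying this identity to every $\bm{g}$ in the fiber, the two sets
\[
\{\LM_{\rm sign}(\bm{g}) \mid \nu(\bm{g}) = f\} \quad\text{and}\quad \{\LM_{\rm sign}(\bm{h}) \mid \nu(\bm{h}) = cf\}
\]
coincide, so their $\leq_{\rm sign}$-minima are equal, giving $\mathfrak{s}(cf) = \mathfrak{s}(f)$. The argument is essentially a verification, and I do not anticipate any real obstacle; the only thing to be careful about is to emphasize that $\leq_{\rm sign}$ ignores coefficients entirely (unlike $\leq_h$, which does involve $\val$), which is exactly why the statement holds even in the valued, tropical setting.
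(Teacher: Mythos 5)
Your proof is correct and rests on exactly the two observations the paper uses: multiplication by $c$ is a bijection between the fibers $\nu^{-1}(f)$ and $\nu^{-1}(cf)$, and $\leq_{\rm sign}$ compares only module monomials (not coefficients), so $\LM_{\rm sign}(c\bm{g})=\LM_{\rm sign}(\bm{g})$. The paper packages the same idea as a two-case proof by contradiction rather than your direct identification of the two sets being minimized, but the mathematical content is the same.
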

\begin{proof}
Let $\s(f)=\Phi(\sigma\ve_i)$ for some $\sigma\in\mathcal{M}^{(h)}_{\xi}$. This means that there exists $\f\in R^{\sn}$ such that: $\LM_{\rm sign}(\f)=\Phi(\sigma\ve_i)$, $\nu(\f)=f$, and there is no $\f'\in R^{\sn}$ with $\LM_{\rm sign}(\f')<_{\rm sign}\Phi(\sigma\ve_i)$ such that $\nu(\f')=f$. On the other hand, let $\s(c\,f)=\Phi(\tau\ve_k)$. This means that there exists $\g\in R^{\sn}$ such that: $\LM_{\rm sign}(\g)=\Phi(\tau\ve_k)$, $\nu(\g)=c\,f$, and there is no $\g'\in R^{\sn}$ with $\LM_{\rm sign}(\g')<_{\rm sign}\Phi(\tau\ve_k)$ such that $\nu(\g')=c\,f$. Suppose that $\Phi(\sigma\ve_i)\neq\Phi(\tau\ve_k)$.
\par\noindent
Case-1: if $\Phi(\tau\ve_k)<_{\rm sign}\Phi(\sigma\ve_i)$. From $\nu(\g)=c\,f$ and $\LM_{\rm sign}(\g)=\Phi(\tau\ve_k)$, we get $\nu(c^{-1}\g)=f$ and $\LM_{\rm sign}(c^{-1}\g)=\Phi(\tau\ve_k)$. This contradicts the minimality of $\s(f)$.
\par\noindent
Case-2: if $\Phi(\sigma\ve_i)<_{\rm sign}\Phi(\tau\ve_k)$. From $\nu(\f)=f$ and $\LM_{\rm sign}(\f)=\Phi(\sigma\ve_i)$, we get $\nu(c\,\f)=c\,f$ and $\LM_{\rm sign}(c\,\f)=\Phi(\sigma\ve_i)$. This contradicts the minimality of $\s(c\,f)$.
\end{proof}

\vspace{0ex}
\begin{prop}
The signature of $f\in I\backslash\{0\}$ lies in ${\rm NS}(\syz{\mathcal{F}})$.
\end{prop}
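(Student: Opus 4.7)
The plan is to argue by contradiction. Suppose that $\s(f)=\Phi(\sigma\ve_i)$ lies in $\lsyz{\mathcal{F}}$, and let $\f\in R^{\sn}$ be a preimage achieving this minimum, i.e.\ $\nu(\f)=f$ and $\LM_{\rm sign}(\f)=\Phi(\sigma\ve_i)$. I will construct another preimage $\f'$ of $f$ whose $\LM_{\rm sign}$ is strictly smaller with respect to $\leq_{\rm sign}$, contradicting the minimality in Definition \ref{def_signature}.

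The key ingredient is Corollary \ref{cor_4}. Since $\Phi(\sigma\ve_i)\in\lsyz{\mathcal{F}}$, there is some syzygy of $\mathcal{F}$ with leading monomial $\Phi(\sigma\ve_i)$, and the corollary upgrades this to a syzygy of the form $\bm{s}'=\alpha\sigma\ve_i+\bm{g}'$ with $\bm{g}'$ clean, $\sigma\notin\mon(g_i')$, and $\LM_{\rm sign}(\bm{g}')<_{\rm sign}\Phi(\sigma\ve_i)$. Writing $\f=c\,\sigma\ve_i+\f_{\rm rest}$, where $c\in K^*$ is the coefficient of $\Phi(\sigma)\ve_i$ in $\Phi(\f)$ and every module term of $\Phi(\f_{\rm rest})$ has its module monomial strictly less than $\Phi(\sigma\ve_i)$ under $\leq_{\rm sign}$, I set
\[
\f':=\f-\tfrac{c}{\alpha}\bm{s}'=\f_{\rm rest}-\tfrac{c}{\alpha}\bm{g}'.
\]
Since $\nu(\bm{s}')=0$, clearly $\nu(\f')=\nu(\f)=f$.

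It remains to verify that $\LM_{\rm sign}(\f')<_{\rm sign}\Phi(\sigma\ve_i)$. Every module term of $\Phi(\f_{\rm rest})$ has module monomial $<_{\rm sign}\Phi(\sigma\ve_i)$ by construction, and every module term of $\Phi(\tfrac{c}{\alpha}\bm{g}')$ has module monomial $\leq_{\rm sign}\LM_{\rm sign}(\bm{g}')<_{\rm sign}\Phi(\sigma\ve_i)$. Since the set of module monomials appearing in the sum $\f_{\rm rest}-\tfrac{c}{\alpha}\bm{g}'$ is contained in the union of the monomial sets of the two summands (possibly with cancellations, which only removes terms), the maximum with respect to $\leq_{\rm sign}$ is still strictly less than $\Phi(\sigma\ve_i)$. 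This gives a preimage of $f$ with a smaller signature, contradicting the minimality, so $\s(f)\in{\rm NS}(\syz{\mathcal{F}})$.

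The only subtle point I expect is making sure that the ``clean replacement'' provided by Corollary \ref{cor_4} really lets me cancel the top term of $\f$ without resurrecting a module monomial equal to or larger than $\Phi(\sigma\ve_i)$; the condition $\sigma\notin\mon(g_i')$ together with the strict inequality $\LM_{\rm sign}(\bm{g}')<_{\rm sign}\Phi(\sigma\ve_i)$ from the corollary handles this, which is why invoking Corollary \ref{cor_4} (rather than an arbitrary syzygy with the right leading monomial) is essential.
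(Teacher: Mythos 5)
Your proof is correct, and it takes a route that is organized differently from the paper's even though both rest on the same foundations. The paper subtracts a \emph{raw} (not necessarily clean) syzygy $\g$ from $\f$ and then handles the incompatibility of $<_h$ and $<_{\rm sign}$ via a case split: if $\LT_h(g')<_h\LC_{\rm sign}(\f)\Psih(\sigma)$ is forced by total degree or $w$-weight (clause (i) of Definitions \ref{deftermorderDnhvaluation}/\ref{deftroptermorderDn}), the conclusion $\LM_{\rm sign}(\f-\g)<_{\rm sign}\Psih(\sigma)\ve_i$ follows directly; otherwise, some term of $g'$ might have a \emph{larger} module monomial under $\leq_{\rm sign}$ (since the orders are incompatible), so the paper invokes Proposition \ref{prop_simple} to \emph{clean $\f-\g$ after the subtraction}. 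You instead pre-package the cleaning step by invoking Corollary \ref{cor_4}, which replaces the raw syzygy by one of the form $\alpha\sigma\ve_i+\bm{g}'$ with $\bm{g}'$ clean and, crucially, $\LM_{\rm sign}(\bm{g}')<_{\rm sign}\Phi(\sigma\ve_i)$. Once the syzygy's non-leading part is certified to be strictly below $\Phi(\sigma\ve_i)$ under $\leq_{\rm sign}$, the subtraction $\f'=\f_{\rm rest}-\tfrac{c}{\alpha}\bm{g}'$ immediately has all module monomials strictly below $\Phi(\sigma\ve_i)$, with no case split and no need to clean $\f'$ itself. Since Corollary \ref{cor_4} is itself proved via Proposition \ref{prop_simple}, the two arguments are the same at the foundation; yours just factors the appeal to cleanliness through the corollary, which makes the incompatibility issue invisible in the main line of the proof, while the paper's is slightly more self-contained. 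One small point both proofs leave implicit: $\Psih(\sigma)\ve_i\in\lsyz{\mathcal{F}}$ a priori only says $\Psih(\sigma)\ve_i$ is a monomial multiple of some syzygy's leading monomial; producing a syzygy whose leading monomial is exactly $\Psih(\sigma)\ve_i$ requires that left-multiplying a syzygy by a monomial of $R$ scales the leading monomial as expected, a fact about $\leq_h$ established in the cited framework. The paper tacitly makes the same jump, so this is not a defect particular to your proof.
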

\begin{proof}
Let $\s(f)=\Psih(\sigma)\ve_i$ and $\f\in R^{\sn}$ be such that $\LM_{\rm sign}(\f)=\Psih(\sigma)\ve_i$ and $\nu(\f)=f$, and there is no $\q\in R^{\sn}$ with $\nu(\q)=f$ and $\LM_{\rm sign}(\q)<_{\rm sign}\Psih(\sigma)\ve_i$.
Suppose that $\Psih(\sigma)\ve_i$ belongs to $\lsyz{\mathcal{F}}$. Then, there exists $\g=\LC_{\rm sign}(\f)\sigma\ve_i+g'\ve_i+\sum_{j=1}^{i-1}r_j\ve_j\in R^{\sn}$ where $g'\in R$, $\sigma\notin\mon(g')$, $\LT_h(g')<_h \LC_{\rm sign}(\f)\Psih(\sigma)$, and $r_j\in R$, such that $\nu(\g)=0$. We have $\nu(\f-\g)=f$, and in the substraction $\f-\g$, the term $\LC_{\rm sign}(\f)\sigma\ve_i$ is vanished.

\noindent
If $\LT_h(g')<_h \LC_{\rm sign}(\f)\Psih(\sigma)$ occurs because of total degree (the clause (i) in Definition \ref{deftermorderDnhvaluation}) or $w$-weighted total degree (the clause (i) in Definition \ref{deftroptermorderDn}), then by Definition \ref{def_totalordermodule}, we get $\LM_{\rm sign}(\f-\g)<_{\rm sign}\Psih(\sigma)\ve_i$. This contradicts the minimality of $\s(f)$.
\par\noindent
Now, assume that $\LT_h(g')<_h \LC_{\rm sign}(\f)\Psih(\sigma)$ occurs because of the clause (ii) or (iii) in Definition \ref{deftroptermorderDn}. If $\f-\g$ is not clean, then by Proposition  \ref{prop_simple}, we can carry $\f-\g$ to be a clean module element $\f'$ such that $\nu(\f')=f$. By the assumption that $\Psih(\sigma)\ve_i\in\lsyz{\mathcal{F}}$ and by the definition of $\leq_{\rm sign}$, all terms of $\Phi(\f')$ are strictly smaller than $\Psih(\sigma)\ve_i$ with respect to $<_{\rm sign}$. Hence, $\LM_{\rm sign}(\f')<_{\rm sign}\Psih(\sigma)\ve_i=\s(f)$. This contradicts the minimality of $\s(f)$.
\end{proof}

\vspace{-0.5ex}
\begin{prop}\label{prop_criteria_tdotsigf}
Let $t\in\mathcal{T}^{(h)}$ and $f\in I\backslash\{0\}$. Then,
\vspace{-0.4ex}
\begin{align*}
\s(tf)<_{\rm sign}\Psih(t)\s(f) \quad&\Leftrightarrow\quad \Psih(t)\s(f)\in \lsyz{\mathcal{F}}\\
\s(tf)=\Psih(t)\s(f) \quad&\Leftrightarrow\quad \Psih(t)\s(f)\in{\rm NS}(\syz{\mathcal{F}}).
\end{align*}
\end{prop}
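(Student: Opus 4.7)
The plan is to use the partition $\mathbb{M}_\xi={\rm NS}(\syz{\mathcal{F}})\sqcup\lsyz{\mathcal{F}}$: since $\s(tf)\leq_{\rm sign}\Psih(t)\s(f)$ forces the two sides to be either equal or strictly ordered, the two asserted equivalences collapse to the pair $(\Psih(t)\s(f)\in\lsyz{\mathcal{F}}\Rightarrow\s(tf)<_{\rm sign}\Psih(t)\s(f))$ and $(\Psih(t)\s(f)\in{\rm NS}(\syz{\mathcal{F}})\Rightarrow\s(tf)=\Psih(t)\s(f))$. Throughout I fix $\s(f)=\Psih(\sigma)\ve_i$ and a clean $\f\in R^{\sn}$ with $\nu(\f)=f$ and $\LM_{\rm sign}(\f)=\Psih(\sigma)\ve_i$ (its existence follows from the definition of $\s(f)$ together with Proposition~\ref{prop_simple}).

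First I would establish $\s(tf)\leq_{\rm sign}\Psih(t)\s(f)$ by applying Proposition~\ref{prop_simple} to $t\f$ to obtain a clean $\g\in R^{\sn}$ with $\nu(\g)=tf$. Because $\Psih(t)$ shifts degree and $w$-weight of every module term of $\Phi(t\f)$ in position $i$ by the same constants, each such term either lies in ${\rm NS}(\syz{\mathcal{F}})$ with $\leq_{\rm sign}$-rank at most $\Psih(t)\Psih(\sigma)\ve_i$, or lies in $\lsyz{\mathcal{F}}$ and is therefore eliminated during cleaning. By Remark~\ref{rmk_redbysyz}, the remainder terms that replace those in $\lsyz{\mathcal{F}}$ are $\leq_h$ the original $<_h$-leading, and an analysis of clauses~(2)--(4) of Definition~\ref{def_totalordermodule} shows they are likewise $\leq_{\rm sign}\Psih(t)\Psih(\sigma)\ve_i$, which gives the bound.

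For the implication $\Psih(t)\s(f)\in{\rm NS}(\syz{\mathcal{F}})\Rightarrow\s(tf)=\Psih(t)\s(f)$, I would argue by contradiction: suppose $\s(tf)<_{\rm sign}\Psih(t)\s(f)$ and pick a witness $\g\in R^{\sn}$ with $\nu(\g)=tf$ and $\LM_{\rm sign}(\g)<_{\rm sign}\Psih(t)\Psih(\sigma)\ve_i$. In particular $\g$ has no module term at $\Psih(t)\Psih(\sigma)\ve_i$ and $\g_j=0$ for $j>i$, so $t\f-\g\in\syz{\mathcal{F}}$, its highest nonzero index is $i$, and the monomial $\Psih(t)\Psih(\sigma)$ survives in the $i$-th entry with nonzero coefficient. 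Applying Proposition~\ref{prop_LMofSyz} (after a harmless adjustment ensuring this monomial is the $<_h$-leading of the $i$-th entry), the syzygy leading monomial equals $\Psih(t)\Psih(\sigma)\ve_i$, forcing $\Psih(t)\s(f)\in\lsyz{\mathcal{F}}$ and contradicting the hypothesis.

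For the converse $\Psih(t)\s(f)\in\lsyz{\mathcal{F}}\Rightarrow\s(tf)<_{\rm sign}\Psih(t)\s(f)$, I invoke Corollary~\ref{cor_4} to select a syzygy $\bm{s}$ with syzygy leading monomial $\Psih(t)\Psih(\sigma)\ve_i$, in the clean form $\bm{s}=\beta\,\Psih^{-1}(\Psih(t)\Psih(\sigma))\ve_i+\bm{g}'$ with $\bm{g}'$ clean and $\LM_{\rm sign}(\bm{g}')<_{\rm sign}\Psih(t)\Psih(\sigma)\ve_i$. Then a suitable scalar combination $\h$ of $t\f$ and $\bm{s}$ that cancels the module term at $\Psih(t)\Psih(\sigma)\ve_i$ satisfies $\nu(\h)=c\cdot tf$ for some $c\in K^*$ and $\LM_{\rm sign}(\h)<_{\rm sign}\Psih(t)\Psih(\sigma)\ve_i$, whence $\s(tf)=\s(c\cdot tf)\leq_{\rm sign}\LM_{\rm sign}(\h)<_{\rm sign}\Psih(t)\s(f)$. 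The hardest part will be the monotonicity step: multiplication by $t$ is not automatically $\leq_{\rm sign}$-preserving because clause~(4) of Definition~\ref{def_totalordermodule} can swap sides under the ${\rm NS}/\lsyz$ distinction, so the cleanness of $\f$ together with Remark~\ref{rmk_redbysyz} must be used to rule out terms exceeding $\Psih(t)\s(f)$ after cleaning.
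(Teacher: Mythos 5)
The overall setup — fixing a minimal representative $\f$ with $\LM_{\rm sign}(\f)=\s(f)=\Psih(\sigma)\ve_i$, multiplying by $t$, and arguing each implication through syzygies with leading monomial $\Psih(t)\Psih(\sigma)\ve_i$ — matches the paper, and your treatment of the direction $\Psih(t)\s(f)\in\lsyz{\mathcal{F}}\Rightarrow\s(tf)<_{\rm sign}\Psih(t)\s(f)$ (cancel via a clean syzygy from Corollary~\ref{cor_4}, then clean and compare under clauses (2)--(4) of Definition~\ref{def_totalordermodule}) is in substance the same as the paper's. The real divergence, and the gap, is in the other direction.

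Your plan for $\Psih(t)\s(f)\in{\rm NS}(\syz{\mathcal{F}})\Rightarrow\s(tf)=\Psih(t)\s(f)$ forms the syzygy $t\f-\g$ and, after a ``harmless adjustment,'' invokes Proposition~\ref{prop_LMofSyz} to conclude the syzygy's leading monomial is $\Psih(t)\Psih(\sigma)\ve_i$, hence in $\lsyz{\mathcal{F}}$. But the leading monomial of a syzygy is by definition $\LMh$ of its top-index coordinate, i.e.\ of $at\sigma+tf'-g_i$, and nothing forces this $\LMh$ to equal $\Psih(t)\Psih(\sigma)$: you only know $\LM_{\rm sign}(f'\ve_i)<_{\rm sign}\Psih(\sigma)\ve_i$ and $\LM_{\rm sign}(\g)<_{\rm sign}\Psih(t)\Psih(\sigma)\ve_i$, and the paper's own remark emphasizes that these $<_{\rm sign}$ inequalities do \emph{not} transfer to $<_h$ inequalities between the corresponding terms. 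The ``adjustment'' of reducing away any $<_h$-larger monomials in the $i$-th entry is essentially the cleaning procedure, and during that procedure the coefficient at $\Psih(t)\Psih(\sigma)\ve_i$ may itself be cancelled by a lower term of a reducing syzygy before it ever becomes the $<_h$-leading term, so the adjustment is not harmless and $\Psih(t)\s(f)\in\lsyz{\mathcal{F}}$ does not follow. The paper handles this direction by a genuinely different move: it takes a minimal representative $\h$ of $tf$, makes both $t\f$ and $\h$ clean via Proposition~\ref{prop_simple}, and observes that $t\f-\h$ would then be a syzygy with no module term in $\lsyz{\mathcal{F}}$, which is impossible because every nonzero syzygy has its leading monomial in $\lsyz{\mathcal{F}}$ — no claim about which specific monomial leads the syzygy is ever needed. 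I would rewrite your argument for this direction along the paper's cleanness-contradiction lines rather than via Proposition~\ref{prop_LMofSyz}.
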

\begin{proof}
Let $\s(f)=\Psih(\sigma)\ve_i$. By the definition of signature, we have $f=\nu(\f)$ for some
$\f=a\sigma\ve_i+f'\ve_i+\sum_{j=1}^{i-1}q_j\ve_j\in R^{\sn},$
where $a\in K^*$, $f'\in R$ with $\LM_{\rm sign}(f'\ve_i)<_{\rm sign}\Psih(\sigma)\ve_i$, and $q_j\in R$; and there is no $\q\in R^{\sn}$ such that $\nu(\q)=f$ and $\LM_{\rm sign}(\q)<_{\rm sign} \Psih(\sigma)\ve_i$. Here, we have $\nu(t\f)=tf$ and $\LM_{\rm sign}(t\f)$ $=\Psih(t)\Psih(\sigma)\ve_i$.
\par\noindent
If $\Psih(t)\Psih(\sigma)\ve_i\in \lsyz{\mathcal{F}})$, then there exists $\g=at\sigma\ve_i+g'\ve_i+\sum_{j=1}^{i-1}r_j\ve_j\in R^{\sn}$ where $g'\in R$ with $\LTh(g')<_h\Psih(at\sigma)$, $r_j\in R$, and $\nu(\g)=0$. Since $\nu$ is homomorphism, we get $\nu(t\f-\g)=tf$. The module term $at\sigma\ve_i$ is vanished in the subtraction $t\f-\g$. If $t\f-\g$ is not clean, then we can carry it to be a clean module element $\f^*$ such that $\nu(\f^*)=tf$. By the hypothesis $\Psih(t)\Psih(\sigma)\ve_i\in \lsyz{\mathcal{F}})$ and by the definition of $\leq_{\rm sign}$, we have that all terms of $\Phi(\f^*)$ are strictly less than $\Psih(t)\Psih(\sigma)\ve_i$ with respect to $<_{\rm sign}$. Thus, we conclude that $\s(tf)<_{\rm sign} \Psih(t)\Psih(\sigma)\ve_i=\Psih(t)\s(f)$.
\par\noindent
Now, assume that $\s(tf)<_{\rm sign} \Psih(t)\s(f)$, and suppose that $\Psih(t)\s(f)\in{\rm NS}(\syz{\mathcal{F}})$. Let $\h$ be an element of $R^{\sn}$ such that $\nu(\h)=tf$, $\LM_{\rm sign}(\h)=\s(tf)$, and there is no $\q\in R^{\sn}$ with $\nu(\q)=tf$ and $\LM_{\rm sign}(\q)<_{\rm sign} \LM_{\rm sign}(\h)$. We may assume that both $t\f$ and $\h$ are clean; otherwise, bring it to be clean module elements by using Proposition \ref{prop_simple}. We have $\nu(t\f-\h)=tf-tf=0$, and hence, $t\f-\h$ is a syzygy. This contradicts the fact that $t\f-\h$ is clean.
\par\noindent
The second statement of the lemma follows the first statement.
\end{proof}

The one-to-one correspondence between $\LMh(I\backslash\{0\})$ and ${\rm NS}(\syz{\mathcal{F}})$ presented in Proposition \ref{prop_bijection} plays an important role in the next section. To prove this proposition, we need the following lemma.

\begin{lem}\label{lem_sig_af_bg}
Let $f,g\in I\backslash \{0\}$ be such that $\s(f)=\s(g)=\sigma\ve_i$ and $\LMh(f)\neq \LMh(g)$.
Then, there exist $a,b\in K^*$ such that
$
\s(af+bg)<_{\rm sign} \sigma\ve_i.
$
\end{lem}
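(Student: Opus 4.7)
The plan is to produce a cancellation of the common leading signature module term at the module level, then push the resulting representative down to $I$ via $\nu$. First, I invoke Definition \ref{def_signature} to pick module representatives $\f,\g \in R^{\sn}$ with $\nu(\f)=f$, $\nu(\g)=g$, and $\LM_{\rm sign}(\f)=\LM_{\rm sign}(\g)=\sigma\ve_i$. Let $\alpha:=\LC_{\rm sign}(\f)$ and $\beta:=\LC_{\rm sign}(\g)$, both in $K^*$.

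Next, I form the module element $\h:=\beta\f-\alpha\g \in R^{\sn}$. Because the order $\leq_{\rm sign}$ depends only on the underlying module monomials and not on the $K^*$-coefficients, scaling by the central scalars $\alpha,\beta$ leaves the leading module monomials of $\f$ and $\g$ at $\sigma\ve_i$, with respective coefficients $\beta\alpha$ and $\alpha\beta$. These cancel in $\h$, so that the module term at $\sigma\ve_i$ vanishes and every surviving module term of $\Phi(\h)$ is strictly $<_{\rm sign}\sigma\ve_i$. In particular, either $\h=0$ or $\LM_{\rm sign}(\h)<_{\rm sign}\sigma\ve_i$.

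Applying $\nu$, I get $\nu(\h)=\beta f - \alpha g$. I must rule out $\beta f -\alpha g = 0$: if the combination vanished, then $f=(\alpha/\beta)g$, and since scaling by an element of $K^*$ does not change the leading monomial computed via $\Psih$, we would obtain $\LMh(f)=\LMh(g)$, contradicting the hypothesis. Hence $\beta f - \alpha g \in I\setminus\{0\}$, and in particular $\h\neq 0$, so $\LM_{\rm sign}(\h)$ is well-defined and $<_{\rm sign}\sigma\ve_i$.

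Finally, by the minimality in Definition \ref{def_signature}, $\s(\beta f - \alpha g)\leq_{\rm sign}\LM_{\rm sign}(\h)<_{\rm sign}\sigma\ve_i$. Taking $a:=\beta$ and $b:=-\alpha$ yields the claim. The only subtle point — which I would flag as the main thing to verify carefully — is that the cancellation at $\sigma\ve_i$ really does lower $\LM_{\rm sign}$; this rests on the fact that $<_{\rm sign}$ is an order on module monomials (Definition \ref{def_totalordermodule}) so that the $K^*$-coefficients $\alpha,\beta$ do not interfere with which module monomial is deemed largest, together with the routine observation that for any $c\in K^*$ and any $\bm{p}\in R^{\sn}$ one has $\LM_{\rm sign}(c\bm{p})=\LM_{\rm sign}(\bm{p})$ and $\LC_{\rm sign}(c\bm{p})=c\LC_{\rm sign}(\bm{p})$.
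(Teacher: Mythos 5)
Your proof is correct and is essentially the paper's own argument: you pick module representatives attaining the minimal leading module monomial $\sigma\ve_i$, take the cross-coefficient combination $a=\LC_{\rm sign}(\g)$, $b=-\LC_{\rm sign}(\f)$ to cancel that term, use $\LMh(f)\neq\LMh(g)$ to guarantee $af+bg\neq 0$, and conclude by minimality in Definition \ref{def_signature}. The extra remarks you add (that $\leq_{\rm sign}$ ignores coefficients and that scalars commute past $\LM_{\rm sign}$) are routine elaborations of exactly the step the paper relies on.
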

\begin{proof}
Since $\s(f)=\s(g)=\sigma\ve_i$, there exist $\f,\g\in R^{\sn}$ such that $\nu(\f)=f$, $\nu(\g)=g$, and $\LM_{\rm sign}(\f)=\LM_{\rm sign}(\g)=\sigma\ve_i$ is minimal among all such module elements $\f$ and $\g$.
Set $a=\LC_{\rm sign}(\g)$ and $b=-\LC_{\rm sign}(\f)$. Then, $\nu(a\f+b\g)=af+bg\neq 0$ since $\LMh(f)\neq \LMh(g)$, and $\LT_{\rm sign}(a\f)+\LT_{\rm sign}(b\g)=\bm{0}$. We get $\LM_{\rm sign}(a\f+b\g)<_{\rm sign} \sigma\ve_i$, and hence, $\s(af+bg)<_{\rm sign}\sigma\ve_i$. 
\end{proof}

\vspace{0ex}
\begin{prop}\label{prop_bijection}
The function
\begin{eqnarray*}
\Omega: \LMh(I\backslash\{0\})&\longrightarrow& {\rm NS}(\syz{\mathcal{F}})\\
m&\mapsto&\min_{\leq_{\rm sign}}\; \{\s(f)\mid f\in I\backslash\{0\} \ \text{ with } \ \LMh(f)=m\}
\end{eqnarray*}
\vspace{-3ex}

\noindent
is bijective.
\end{prop}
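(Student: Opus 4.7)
I would split the argument into verifying well-definedness, proving injectivity via Lemma \ref{lem_sig_af_bg}, and proving surjectivity by exhibiting an explicit preimage and running a minimality argument. For well-definedness, fix $m\in\LMh(I\backslash\{0\})$; the set $\{\s(f)\mid f\in I\backslash\{0\},\,\LMh(f)=m\}$ is non-empty, lies inside ${\rm NS}(\syz{\mathcal{F}})$ by the preceding proposition, and admits a minimum because $\leq_{\rm sign}$ is a well-order on $\mathbb{M}_{\xi}$: in any descending chain, the index $i$ stabilizes first, then the total degree, after which only finitely many candidates remain.

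For injectivity, I would assume $\Omega(m_1)=\Omega(m_2)=\sigma\ve_i$ with $m_1\neq m_2$, and choose witnesses $f_1,f_2\in I\backslash\{0\}$ with $\LMh(f_k)=m_k$ and $\s(f_k)=\sigma\ve_i$. Lemma \ref{lem_sig_af_bg} supplies $a,b\in K^*$ such that $\s(af_1+bf_2)<_{\rm sign}\sigma\ve_i$. Since $m_1\neq m_2$, without loss of generality $m_1<_h m_2$, and then $m_2$ does not appear in $f_1$ (it would exceed $\LMh(f_1)$), so $af_1+bf_2\neq 0$ with $\LMh(af_1+bf_2)=m_2$. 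Hence $\Omega(m_2)\leq_{\rm sign}\s(af_1+bf_2)<_{\rm sign}\sigma\ve_i$, contradicting $\Omega(m_2)=\sigma\ve_i$.

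For surjectivity, fix $\sigma\ve_i\in{\rm NS}(\syz{\mathcal{F}})$ and write $\sigma=\Psih(t)$ for the unique monomial $t\in R$. Since $\sigma\ve_i\notin\lsyz{\mathcal{F}}$, the element $t\mathfrak{f}_i$ cannot vanish, for otherwise $t\ve_i$ would be a syzygy with leading monomial $\sigma\ve_i$. Invoking Remark \ref{rmk_sigf_i} to set $\s(\mathfrak{f}_i)=\ve_i$ and applying Proposition \ref{prop_criteria_tdotsigf}, I conclude $\s(t\mathfrak{f}_i)=\Psih(t)\ve_i=\sigma\ve_i$, so the set $S:=\{f\in I\backslash\{0\}\mid \s(f)=\sigma\ve_i\}$ is non-empty. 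Pick $f_0\in S$ minimizing $\LMh$ (possible since $\leq_h$ is a well-order) and let $m:=\LMh(f_0)$; I claim $\Omega(m)=\sigma\ve_i$.

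Suppose for contradiction $\Omega(m)<_{\rm sign}\sigma\ve_i$, witnessed by $g\in I\backslash\{0\}$ with $\LMh(g)=m$ and $\s(g)=\Omega(m)$. Setting $c:=\LCh(f_0)\LCh(g)^{-1}$, the leading $m$-terms in $f_0-cg$ cancel. If $f_0-cg=0$, then $\s(f_0)=\s(g)<_{\rm sign}\sigma\ve_i$, a contradiction. Otherwise $\LMh(f_0-cg)<_h m$, and I would split on $\s(f_0-cg)$: if it equals $\sigma\ve_i$, then $f_0-cg\in S$ violates minimality of $m$; if it is strictly less than $\sigma\ve_i$, then summing a signature-minimizing representation of $f_0-cg$ with $c$ times one of $g$ produces a representation of $f_0$ whose $\LM_{\rm sign}$ is strictly below $\sigma\ve_i$, contradicting $\s(f_0)=\sigma\ve_i$. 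The main obstacle is this final dichotomy in the surjectivity step, where the valuation-sensitive order $\leq_{\rm sign}$ must be handled by lifting to module representations rather than arguing directly on coefficients; the remaining parts are bookkeeping using the earlier lemma and propositions.
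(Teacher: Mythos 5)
Your overall decomposition (well\-/definedness, injectivity via Lemma~\ref{lem_sig_af_bg}, surjectivity via a minimality argument on leading monomials) mirrors the paper's proof, and the well\-/definedness paragraph is a welcome addition the paper elides. The surjectivity argument is also essentially the paper's: the paper defines $A=\{m\mid\exists g,\ \LMh(g)=m,\ \s(g)=\tau\ve_j\}$, seeds it with $\tau\mathfrak f_j$, takes $t=\min_{\leq_h}(A)$, and derives both $\LMh(ag+bg')<_h t$ and $\s(ag+bg')=\tau\ve_j$ to contradict minimality. Your version replaces the direct claim $\s(f_0-cg)=\sigma\ve_i$ with a dichotomy on $\s(f_0-cg)$, which is valid though slightly more laborious than the paper's one\-/line observation.

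There is, however, a genuine gap in the injectivity step. You write: ``without loss of generality $m_1<_h m_2$, and then $m_2$ does not appear in $f_1$ (it would exceed $\LMh(f_1)$), so\ldots\ $\LMh(af_1+bf_2)=m_2$.'' This imports classical Gr\"obner intuition that fails under a tropical term order: $\leq_h$ compares \emph{terms}, not bare monomials, and the valuation of the coefficient enters the comparison. If $m_1<_h m_2$ holds via clause~(ii) or~(iii) of Definition~\ref{deftroptermorderDn} (so $\deg$ and the $w$\-/weight are tied), then a term $cm_2$ with $\val(c)$ sufficiently large satisfies $cm_2<_h\LCh(f_1)m_1$, so $m_2$ \emph{can} appear in $f_1$ even though $m_1<_h m_2$ as monomials. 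For the same reason, which of $\LTh(af_1)$, $\LTh(bf_2)$ is larger in $<_h$ --- and hence which of $m_1,m_2$ equals $\LMh(af_1+bf_2)$ --- is governed by the coefficients $a,b,\LCh(f_1),\LCh(f_2)$, not by the monomial comparison $m_1<_h m_2$. The paper sidesteps this by taking the WLOG on the surviving leading term directly: since $\LMh(af_1)\neq\LMh(bf_2)$, the leading terms cannot cancel, so $\LTh(af_1+bf_2)=\max(\LTh(af_1),\LTh(bf_2))$; assuming without loss of generality this equals $\LTh(af_1)$ gives $\LMh(af_1+bf_2)=m_1$ and the contradiction $\Omega(m_1)<_{\rm sign}\sigma\ve_i$. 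Your argument reaches the same contradiction once this substitution is made, so the fix is local, but as written the step is unsound in the tropical setting --- which is precisely the setting where this paper needs care.
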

\begin{proof}
Suppose that there exist $m_1,m_2\in\LMh(I\backslash\{0\})$ and $\sigma\ve_i\in {\rm NS}(\lsyz{\mathcal{F}})$ such that $m_1\neq m_2$ and $\Omega(m_1)=\sigma\ve_i=\Omega(m_2)$. Then, there exist $f,g\in I\backslash\{0\}$ such that $\LMh(f)=m_1$, $\LMh(g)=m_2$, and $\s(f)=\s(g)=\sigma\ve_i$ is minimal among all such $f$ and $g$. By Lemma \ref{lem_sig_af_bg}, there exist $a,b\in K^*$ such that $\s(af+bg)<_{\rm sign}\sigma\ve_k$. Note that $\LTh(af+bg)=\max(\LTh(af),\LTh(bg))$. Assume that $\LTh(af+bg)=\LTh(af)$. We have $\LMh(af)=m_1$. Thus, we have the operator $af+bg$ with $\LMh(af+bg)=m_1$ and $\Omega(m_1)<_{\rm sign}\sigma\ve_i$, contradicting the choice of such the operator $f$ with $\LM_h(f)=m_1$ such that $\s(f)$ is minimal.\\
Now, let $\tau\ve_j\in{\rm NS}(\syz{\mathcal{F}})$, and set
\vspace{-1ex}
$$
A=\{m\in\mathcal{M}_{\xi}^{(h)}\mid \exists g\in I\backslash\{0\}, \LMh(g)=m, \s(g)=\tau\ve_j\}.
$$
This set is not empty since $\LTh(\tau \mathfrak{f}_j)$ belongs to $A$. Write $t=\min(A)$ with respect to $\leq_h$. Then, we have an operator $g\in I\backslash\{0\}$ with $\LMh(g)=t$ and with $\s(g)=\tau\ve_j$. Suppose that $\Omega(t)<_{\rm sign}\tau\ve_j$. Then, there exists $g'\in I\backslash\{0\}$ such that $\LMh(g')=t$ and $\s(g')<_{\rm sign}\tau\ve_j$. Since $\LMh(g)=\LMh(g')=t$, there exist $a,b\in K^*$ such that $\LMh(ag+bg')<_h t$. Moreover, since $\s(g')<_{\rm sign}\s(g)=\tau\ve_j$, we have $\s(ag+bg')=\tau\ve_j$. Both facts $\LMh(ag+bg')<_h t$ and $\s(ag+bg')=\tau\ve_j$ contradict the minimality of $t$.
\end{proof}

\section{\texorpdfstring{$\s$}{}-Gr\"{o}bner bases}
\label{sec_sigGB}
\vspace{-1ex}
Before we discuss the theory of $\s$-Gr\"{o}bner bases, we need to understand the notion of $\s$-top reduction and $\s$-top-irreducibility.

\vspace{-1.5ex}
\subsection{\texorpdfstring{$\s$}{}-top-reduction}
\vspace{-1ex}
An $\s$-top-reduction of a non-zero $f\in I$ by a non-zero $g\in I$  with respect to a given monomial module $\sigma\ve_i$ is more specific than the classical reduction. When $\sigma\ve_i=\s(f)$, the $\s$-top-reductions preserve the signature from any changes during reductions. As consequence, we can keep track of where an element of $I$ come from.
\begin{defn}\label{def_stopred}
Let $f,g\in I\backslash\{0\}$, $q\in I$, and $\sigma\ve_i\in\mathbb{M}_{\xi}$. We say that $f$ $\s$-\emph{top-reduces} to $q$ by $g$ with respect to $\sigma\ve_i$ if there exists $t\in\mathcal{T}^{(h)}$ such that:
\vspace{-0.25ex}
\begin{enumerate}[(i)]
\item $\LMh(tg)=\LMh(f)$;
\vspace{-0.5ex}
\item $\s(tg)<_{\rm sign}\sigma\ve_i$.
\end{enumerate}
The operator $g$ is called $\s$-\emph{top-reducer} for $f$ with respect to $\sigma\ve_i$. In addition, if $\sigma\ve_i$ is not specified, then we mean that $\sigma\ve_i=\s(f)$.
\end{defn}

\vspace{-1ex}
\begin{defn}
We say that $f\in I$ is $\s$-\emph{top-irreducible} with respect to $\sigma\ve_i$ if $f=0$ or there is no $\s$-top-reducer $g\in I\backslash\{0\}$ for $f$ with respect to $\sigma\ve_i$.
\end{defn}

The concept of $\s$-top-irreducibility gives rise to the following theorem and its corollary.

\begin{thm}\label{thm_fzeroiff}
Let $f\in I$ be such that $f$ is $\s$-top-irreducible with respect to $\sigma\ve_i\in\mathbb{M}_{\xi}$, and $f$ is of the form $f=\nu(\;(\Psih^{-1}(a\sigma)+g)\ve_i+\sum_{j=1}^{i-1}q_j\ve_j)$ for some $a\in K^*$, $q_i,g\in R$ with $\LTh(g)<_h a\sigma$ and $\sigma\notin\mon(g)$. Then,
\vspace{-1ex}
$$
f=0\quad \text{if and only if}\quad \sigma\ve_i\in\lsyz{\mathcal{F}}.
$$
\vspace{-4ex}

\noindent
In addition, if $f\neq 0$, then $\sigma\ve_i=\s(f)=\Omega(\LMh(f))$.
\end{thm}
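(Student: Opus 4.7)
The plan is to prove the biconditional first and then use it to pin down the signature. Throughout, write $\h := (\Psih^{-1}(a\sigma)+g)\ve_i + \sum_{j=1}^{i-1} q_j\ve_j$ for the given representation of $f$; by construction $\nu(\h)=f$, the $i$-th component has leading term $a\sigma$ (since $\LTh(g)<_h a\sigma$ and $\sigma\notin\mon(g)$), and all components of index $>i$ vanish. For $(\Rightarrow)$, if $f=\nu(\h)=0$ then $\h$ is a syzygy of $\mathcal{F}$ whose leading monomial, in the sense of Section \ref{sec_syzsignature}, is $\Psih(\sigma)\ve_i=\sigma\ve_i$, so $\sigma\ve_i\in\lsyz{\mathcal{F}}$ by definition. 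For $(\Leftarrow)$, assume $\sigma\ve_i\in\lsyz{\mathcal{F}}$ and, toward a contradiction, $f\neq 0$. Invoke Corollary \ref{cor_4} to extract a syzygy $\bm{s}'=\alpha\sigma\ve_i+\bm{g}'$ with $\bm{g}'$ clean, $\LTh(g_i')<_h\Psih(\alpha\sigma)$, and $\LM_{\rm sign}(\bm{g}')<_{\rm sign}\Phi(\sigma\ve_i)$. Replacing $\h$ by $\h-(a/\alpha)\bm{s}'$ preserves $\nu(\cdot)=f$ and cancels the module term at $\sigma\ve_i$; after an additional cleaning via Proposition \ref{prop_simple}, I obtain a representative $\h^{\star}$ of $f$ whose module terms all lie in ${\rm NS}(\syz{\mathcal{F}})$, so clause (4)(a) of Definition \ref{def_totalordermodule} applied against $\sigma\ve_i\in\lsyz{\mathcal{F}}$ forces $\LM_{\rm sign}(\h^{\star})<_{\rm sign}\sigma\ve_i$. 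Hence $\s(f)<_{\rm sign}\sigma\ve_i$, and taking $t=1$ with reducer $g=f$ in Definition \ref{def_stopred} exhibits $f$ as its own $\s$-top-reducer with respect to $\sigma\ve_i$, contradicting $\s$-top-irreducibility. Therefore $f=0$.

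For the additional claim, suppose $f\neq 0$, so by the biconditional $\sigma\ve_i\notin\lsyz{\mathcal{F}}$. The same self-reducer argument in reverse rules out $\s(f)<_{\rm sign}\sigma\ve_i$ (else $f$ would reduce itself), so $\s(f)\geq_{\rm sign}\sigma\ve_i$. For the opposite direction, applying Proposition \ref{prop_simple} to $\h$ yields a clean representative $\bar\h$ of $f$; cleanness together with $\sigma\ve_i\notin\lsyz{\mathcal{F}}$ and the bound $\LTh(g)<_h a\sigma$ gives $\LM_{\rm sign}(\bar\h)\leq_{\rm sign}\sigma\ve_i$ through a case analysis over clauses (2)--(4) of Definition \ref{def_totalordermodule}, so $\s(f)\leq_{\rm sign}\sigma\ve_i$ and hence $\s(f)=\sigma\ve_i$. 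For the identification with $\Omega$, by definition $\Omega(\LMh(f))\leq_{\rm sign}\s(f)$; strict inequality would supply some $g\in I\setminus\{0\}$ with $\LMh(g)=\LMh(f)$ and $\s(g)<_{\rm sign}\sigma\ve_i$, which (via $t=1$ in Definition \ref{def_stopred}) would be an $\s$-top-reducer for $f$ with respect to $\sigma\ve_i$, again contradicting irreducibility. Thus $\Omega(\LMh(f))=\s(f)=\sigma\ve_i$.

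The main obstacle I foresee is establishing the signature bounds $\LM_{\rm sign}(\h^{\star})<_{\rm sign}\sigma\ve_i$ in the biconditional and $\LM_{\rm sign}(\bar\h)\leq_{\rm sign}\sigma\ve_i$ in the additional claim. The difficulty is the explicit incompatibility between $\leq_{\rm sign}$ and $\leq_h$ recorded in the Remark after Definition \ref{def_totalordermodule}: a monomial $t$ with $t<_h\sigma$ may still satisfy $t\ve_i>_{\rm sign}\sigma\ve_i$, because Definitions \ref{deftroptermorderDn} and \ref{deftermorderDnhvaluation} break ties by valuations and $\omega$ whereas Definition \ref{def_totalordermodule} breaks them by $\lsyz$-membership and $\preceq$. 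The plan for handling this is to first invoke Proposition \ref{prop_simple} so that all remaining module terms lie in ${\rm NS}(\syz{\mathcal{F}})$, and then case-split on which clause of Definitions \ref{deftroptermorderDn} and \ref{deftermorderDnhvaluation} produced the strict inequality $t<_h\sigma$, so that the matching clause (2), (3), or (4)(a)/(c) of Definition \ref{def_totalordermodule} yields the desired $\leq_{\rm sign}$ inequality; the membership $\sigma\ve_i\in\lsyz{\mathcal{F}}$ (resp.\ $\sigma\ve_i\notin\lsyz{\mathcal{F}}$) is what makes clause (4)(a) point in the favorable direction in each case.
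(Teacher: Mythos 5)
Your $(\Rightarrow)$ direction matches the paper. Your $(\Leftarrow)$ direction is a genuinely different route: the paper goes through the bijection $\Omega$, setting $m=\LMh(f)$, $\tau\ve_k=\Omega(m)$, asserting $\Omega(m)<_{\rm sign}\sigma\ve_i$, and then producing an $\s$-top-reducer $q$ with $\s(q)=\tau\ve_k$ to contradict irreducibility. You instead build an explicit representation of $f$ with strictly smaller $\LM_{\rm sign}$ by subtracting a clean syzygy from Corollary \ref{cor_4} and then cleaning via Proposition \ref{prop_simple}, and you conclude via the observation that a nonzero $f$ with $\s(f)<_{\rm sign}\sigma\ve_i$ is its own $\s$-top-reducer. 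This works, and it is arguably more self-contained than the paper's one-line assertion $\Omega(m)<_{\rm sign}\sigma\ve_i$; it succeeds here precisely because $\sigma\ve_i\in\lsyz{\mathcal{F}}$ while every surviving module term lies in ${\rm NS}(\syz{\mathcal{F}})$, so clause (4)(a) of Definition \ref{def_totalordermodule} resolves every tied comparison in your favor.

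There is, however, a genuine gap in the additional claim, at the very step you flag as the main obstacle. To get $\s(f)\leq_{\rm sign}\sigma\ve_i$ you need $\LM_{\rm sign}(\bar\h)\leq_{\rm sign}\sigma\ve_i$. After cleaning, \emph{all} module terms of $\Phi(\bar\h)$ lie in ${\rm NS}(\syz{\mathcal{F}})$, and so does $\sigma\ve_i$ (you have just shown $\sigma\ve_i\notin\lsyz{\mathcal{F}}$). For two module monomials both in the normal set and with equal degree and $w$-weight, the applicable tie-break is clause (4)(c), which compares by $\preceq$ --- \emph{not} clause (4)(a). But the hypothesis $\LTh(g)<_h a\sigma$ (propagated to $\bar g_i$ via Remark \ref{rmk_redbysyz}) only gives a $<_h$ comparison of terms. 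When that $<_h$ is decided by the valuation/$\omega$ clause (ii) of Definition \ref{deftroptermorderDn}, it carries no information about $\preceq$: a monomial $m$ of $\bar g_i$ can satisfy $m\succ\sigma$ while its term is $<_h a\sigma$, giving $m\ve_i>_{\rm sign}\sigma\ve_i$ by (4)(c). Your closing sentence attributes the rescue in both halves to clause (4)(a), but (4)(a) bites only when exactly one of the two module monomials lies in $\lsyz{\mathcal{F}}$, which is the situation in $(\Leftarrow)$ and not here. The bound $\s(f)\leq_{\rm sign}\sigma\ve_i$ therefore needs a different mechanism; note that the paper's own proof of this clause simply asserts $\sigma\ve_i=\s(f)$ from $\sigma\ve_i\in{\rm NS}(\syz{\mathcal{F}})$ and the given representation without supplying the estimate, so this is a point worth scrutinizing even in the reference argument.
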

\begin{proof}
$(\Rightarrow).$ If $0=f=\nu(\;(\Psih^{-1}(a\sigma)+g)\;\ve_i+\sum_{j=1}^{i-1}q_j\ve_j)$, then  $\sigma\ve_i\in\lsyz{\mathcal{F}}$.
\par
\vspace{-1ex}

\noindent
$(\Leftarrow).$ Let $\sigma\ve_i\in\lsyz{\mathcal{F}}$. Suppose that $f\neq 0$. Let $m=\LMh(f)$ and $\tau\ve_k=\Omega(m)$. Here, we have $\Omega(m)<_{\rm sign}\sigma\ve_i$. From the fact that $\Omega(m)=\tau\ve_k$, there exists $q\in I$ such that $\LMh(q)=m$ and $\s(q)=\tau\ve_k$. Since $\tau\ve_k<_{\rm sign}\sigma\ve_i$, $q$ is an $\s$-top-reducer for $f$ with respect to $\sigma\ve_i$. This contradicts the hypothesis. Thus, we get $f=0$.
\par
\noindent
For the second statement, assume that $f\neq 0$. We claim that $\sigma\ve_i\in{\rm NS(\syz{\mathcal{F}})}$. To prove this claim, suppose that $\sigma\ve_i\in\lsyz{\mathcal{F}}$. Let $m=\LMh(f)$ and $\tau\ve_k=\Omega(m)$. Then, there exists $q\in I$ such that $\LMh(q)=m$ and $\s(q)=\tau\ve_k$. Here, $q$ is an $\s$-top-reducer for $f$ with respect to $\sigma\ve_i$, contradicting the hypothesis. Thus, we must have $\sigma\ve_i\in{\rm NS(\syz{\mathcal{F}})}$.\\
Since $\sigma\ve_i\in{\rm NS(\syz{\mathcal{F}})}$ and $f=\nu((\Psih^{-1}(a\sigma)+g)\ve_i+\sum_{j=1}^{i-1}q_j\ve_j)$, we get $\sigma\ve_i=\s(f)$. Now, suppose that $\sigma\ve_i\neq\Omega(\LMh(f))$. Then, we must have $\Omega(\LMh(f))<_{\rm sign}\sigma\ve_i$. There exists $q\in I$ such that $\LMh(q)=\LMh(f)$ and $\s(q)=\Omega(\LMh(q))$. Since $\Omega(\LMh(q))<_{\rm sign}\sigma\ve_i$, $q$ is an $\s$-top-reducer for $f$ with respect to $\sigma\ve_i$. This contradicts the hypothesis.
\end{proof}

\vspace{0ex}
\begin{cor}\label{cor_irriffomega}
A non-zero $f\in I$ is $\s$-top-irreducible with respect to $\s(f)$ if and only if $\s(f)=\Omega(\LMh(f))$.
\end{cor}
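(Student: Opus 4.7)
The plan is to prove both directions directly from the definition of $\Omega$ in Proposition \ref{prop_bijection} and the definition of $\s$-top-reduction (Definition \ref{def_stopred}), exploiting the trivial multiplier $t = 1 \in \mathcal{T}^{(h)}$.

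For the forward direction, suppose $f$ is $\s$-top-irreducible with respect to $\s(f)$. Since $f \in I\setminus\{0\}$ satisfies $\LMh(f)=\LMh(f)$, the definition of $\Omega$ immediately gives $\Omega(\LMh(f)) \leq_{\rm sign} \s(f)$. I would then suppose for contradiction that the inequality is strict. This yields a witness $g \in I\setminus\{0\}$ with $\LMh(g) = \LMh(f)$ and $\s(g) <_{\rm sign} \s(f)$. Taking $t = 1 \in \mathcal{T}^{(h)}$, we have $\LMh(tg) = \LMh(f)$ and $\s(tg) = \s(g) <_{\rm sign} \s(f)$, so $g$ is an $\s$-top-reducer of $f$ with respect to $\s(f)$, contradicting the hypothesis. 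Hence $\s(f) = \Omega(\LMh(f))$.

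For the converse, suppose $\s(f) = \Omega(\LMh(f))$ and assume for contradiction that $f$ is $\s$-top-reducible with respect to $\s(f)$. Unwinding Definition \ref{def_stopred}, there exist $g \in I\setminus\{0\}$ and $t\in\mathcal{T}^{(h)}$ with $\LMh(tg) = \LMh(f)$ and $\s(tg) <_{\rm sign} \s(f)$. Observe that $tg \in I\setminus\{0\}$ since $\LMh(tg) = \LMh(f)$ is nonzero. Then by the very definition of $\Omega$,
\[
\Omega(\LMh(f)) \;=\; \Omega(\LMh(tg)) \;\leq_{\rm sign}\; \s(tg) \;<_{\rm sign}\; \s(f) \;=\; \Omega(\LMh(f)),
\]
which is absurd.

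I do not anticipate a real obstacle here; the only delicate point is checking that $tg \neq 0$ so that $\Omega(\LMh(tg))$ is defined and satisfies the minimality bound, but this is immediate from $\LMh(tg) = \LMh(f) \neq 0$. Note also that one could alternatively invoke Theorem \ref{thm_fzeroiff} for the forward direction, but that would require separately verifying that a witness $\f$ for $\s(f)$ has $f_i$ of the specific shape $\Psih^{-1}(a\sigma) + g$ with $\LTh(g) <_h a\sigma$, which is extra bookkeeping; the direct argument above sidesteps this entirely.
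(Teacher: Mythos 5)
Your proof is correct, and the forward direction takes a genuinely different route from the paper's. The paper derives the forward implication by invoking the second statement of Theorem \ref{thm_fzeroiff}, which requires exhibiting $f$ in the specific shape $\nu\bigl((\Psih^{-1}(a\sigma)+g)\ve_i+\sum_{j<i}q_j\ve_j\bigr)$ with $\LTh(g)<_h a\sigma$, $\sigma\notin\mon(g)$, and $\sigma\ve_i=\s(f)$; as you observe, confirming that hypothesis is not entirely routine because $\leq_{\rm sign}$ is incompatible with $\leq_h$ in tie-break cases. Your version sidesteps this: the bound $\Omega(\LMh(f)) \leq_{\rm sign} \s(f)$ is immediate from the definition of $\Omega$ since $f$ itself competes in the minimum, and strict inequality would furnish, via the trivial multiplier $t=1\in\mathcal{T}^{(h)}$, an $\s$-top-reducer of $f$ with respect to $\s(f)$, contradicting irreducibility. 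This buys a self-contained argument that does not lean on Theorem \ref{thm_fzeroiff} or on any preimage bookkeeping. For the converse, both you and the paper appeal to minimality in the definition of $\Omega$; the paper writes ``clearly,'' and you simply unwind Definition \ref{def_stopred} (with the observation that $tg\neq 0$ since $\LMh(tg)=\LMh(f)$ is defined) to make the contradiction explicit — the substance is the same.
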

\begin{proof}
If $f\neq 0$ and $f$ is $\s$-top-irreducible with respect to $\s(f)$, then by the second statement of Theorem \ref{thm_fzeroiff}, we get $\s(f)=\Omega(\LMh(f))$. Conversely, if $\s(f)=\Omega(\LMh(f))$, then by the definition of $\Omega$, clearly $f$ is $\s$-top-irreducible with respect to $\s(f)$.
\end{proof}

For the $\s$-top-reduction in Definition \ref{def_stopred}, we can also say ``$f$ $\s$-\emph{top-reduces regularly} to $q$ by $g$ with respect to $\sigma\ve_i$.'' Another type of reduction is defined as follows.
\vspace{-1ex}
\begin{defn}\label{def_singularstopred}
Let $f,g\in I\backslash\{0\}$, $q\in I$, and $\sigma\ve_i\in\mathbb{M}_{\xi}$.
We say that $f$ $\s$-\emph{top-reduces singularly} to $q$ by $g$ with respect to $\sigma\ve_i$ if there exists $t\in\mathcal{T}^{(h)}$ such that:
\begin{enumerate}[(i)]
\vspace{-1ex}
\item $\LMh(tg)=\LMh(f)$;
\vspace{-1ex}
\item $\s(tg)=\sigma\ve_i$.
\end{enumerate}
\vspace{-0.5ex}

\noindent
The operator $g$ is called a \emph{singular} $\s$-\emph{top-reducer} for $f$ with respect to $\sigma\ve_i$. If $\sigma\ve_i$ is not specified, then we mean that $\sigma\ve_i=\s(f)$.
\end{defn}

Let $f,g\in I\backslash\{0\}$ and $q\in I$ be such that $f$ $\s$-top-reduces singularly to $q$ by $g$ with respect to $\s(f)$. It means that there exists $t\in\mathcal{T}^{(h)}$ such that the axioms (i) and (ii) in Definition \ref{def_singularstopred} hold. Let
$$\f=a\,\Phi^{-1}(\s(f))+(\text{lower module terms w.r.t.}<_{\rm sign})\quad \text{and}$$
$$\h=b\,\Phi^{-1}(\s(tg))+(\text{lower module terms w.r.t.}<_{\rm sign})$$
be elements of $R^{\sn}$ such that $a,b\in K^*$, $\nu(\f)=f$, $\nu(\h)=tg$, $\LM_{\rm sign}(\f)=\s(f)$, and $\LM_{\rm sign}(\h)=\s(tg)=\s(f)$.
In this reduction, we have $f-tg=q$, and its corresponding module representation is $\f-\h=\q$, where $\q\in R^{\sn}$ satisfies $\nu(\q)=q$. Since $\s(f)=\s(tg)$, we have either $\LM_{\rm sign}(\q)=\s(f)$ or $\LM_{\rm sign}(\q)<_{\rm sign}\s(f)$.
\begin{rmk}\label{rmk_preservesigornot}
Consider again what we have in the paragraph above.
\begin{enumerate}[(i)]
\vspace{-0.7ex}
\item The case $\LM_{\rm sign}(\q)=\s(f)$ occurs when $a-b\neq 0$. The meaningful of this case is that the reduction process preserves the signature.
\vspace{-0.7ex}
\item The case $\LM_{\rm sign}(\q)<_{\rm sign}\s(f)$ occurs when $a-b=0$. Such the reduction is not allowed in our computation because it will change the signature, and we are not able anymore to track of where the operator $q$ come from.
\end{enumerate}
\end{rmk}

\subsection{\texorpdfstring{$\s$}{}-Gr\"{o}bner bases}
In this sub-section, we present a theory of $\s$-Gr\"{o}bner basis. We are going to show that once we have an $\s$-Gr\"{o}bner basis, we automatically also have a Gr\"{o}bner basis.
\begin{defn}\label{def_sGB}
A non-empty subset $G\subset I$ is called an \emph{$\s$-Gr\"{o}bner basis} of $I$ if $RG=I$ and for every $\s$-top-irreducible $f\in I\backslash\{0\}$, there exists $g\in G$ and $m\in\mathcal{M}^{(h)}$ such that:
\vspace{-0.7ex}
\begin{enumerate}[(i)]
\item $\LMh(mg)=\LMh(f)$ and
\vspace{-0.5ex}
\item $\Psih(m)\s(g)=\s(f)$.
\end{enumerate}
\end{defn}

\noindent In particular, we have the following definition.
\begin{defn}\label{def_sGBupto}
A non-empty subset $G\subset I$ is called an \emph{$\s$-Gr\"{o}bner basis for $I$ in signature} $\sigma\ve_k$ (\emph{up to signature} $\sigma\ve_k$, resp.) if $RG=I_k$ and for every $\s$-top-irreducible $f\in I_k\backslash\{0\}$ with $\s(f)=\sigma\ve_k$ (with $\s(f)<_{\rm sign}\sigma\ve_k$, resp.), there exists $g\in G$ and $m\in\mathcal{M}^{(h)}$ such that:
\vspace{-0.7ex}
\begin{enumerate}[(i)]
\item $\LMh(mg)=\LMh(f)$ and
\vspace{-0.25ex}
\item $\Psih(m)\s(g)=\s(f)$.
\end{enumerate}
\end{defn}

The connection between Definition \ref{def_sGB} and Definition \ref{def_sGBupto} is that a non-empty subset $G\subset I$ is an $\s$-Gr\"{o}bner basis of $I$ if $G$ is an $\s$-Gr\"{o}bner basis for $I$ in all signatures.

\begin{prop}\label{prop_GBthenreducer}
If $G$ is an $\s$-Gr\"{o}bner basis, then for any $\s$-top-reducible $f\in I\backslash\{0\}$ with respect to $\s(f)$, there exist $g\in G$ and $m\in \mathcal{M}^{(h)}$ such that:
\vspace{-0.7ex}
\begin{enumerate}[(i)]
\item $\LMh(mg)=\LMh(f)$ and
\vspace{-0.5ex}
\item $\s(mg)=\Psih(m)\s(g)<_{\rm sign}\s(f)$.
\end{enumerate}
\end{prop}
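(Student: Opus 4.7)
My plan is to reduce the claim to a direct application of Definition \ref{def_sGB} by producing an auxiliary element $p \in I \setminus \{0\}$ which is $\s$-top-irreducible with respect to $\s(p)$, shares the leading monomial of $f$, and already carries a strictly smaller signature than $\s(f)$. Once such a $p$ is in hand, the $\s$-Gr\"obner basis hypothesis hands over the pair $(g,m)$, and Proposition \ref{prop_criteria_tdotsigf} will upgrade the equation $\Psih(m)\s(g)=\s(p)$ to the signature-preservation identity $\s(mg)=\Psih(m)\s(g)$ required by part (ii).

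To construct $p$, I would first unpack the hypothesis that $f$ is $\s$-top-reducible with respect to $\s(f)$: by Definition \ref{def_stopred}, there is some $g'\in I\setminus\{0\}$ and $t\in\mathcal{T}^{(h)}$ with $\LMh(tg')=\LMh(f)$ and $\s(tg')<_{\rm sign}\s(f)$. Feeding $q:=tg'$ into the bijection $\Omega$ of Proposition \ref{prop_bijection}, which picks the minimal signature among all elements of $I\setminus\{0\}$ having a given leading monomial, yields $\Omega(\LMh(f))\leq_{\rm sign}\s(q)<_{\rm sign}\s(f)$. I then choose any $p\in I\setminus\{0\}$ with $\LMh(p)=\LMh(f)$ realising the minimum, so that $\s(p)=\Omega(\LMh(p))$. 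By Corollary \ref{cor_irriffomega}, this $p$ is automatically $\s$-top-irreducible with respect to $\s(p)$, as required.

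Applying Definition \ref{def_sGB} to $p$ then produces $g\in G$ and $m\in\mathcal{M}^{(h)}$ with $\LMh(mg)=\LMh(p)=\LMh(f)$, which is clause (i), together with $\Psih(m)\s(g)=\s(p)<_{\rm sign}\s(f)$, which is the strict inequality in clause (ii). The final equality $\s(mg)=\Psih(m)\s(g)$ follows from Proposition \ref{prop_criteria_tdotsigf}: that proposition reduces it to the membership $\Psih(m)\s(g)\in{\rm NS}(\syz{\mathcal{F}})$, which is automatic because $\Psih(m)\s(g)=\s(p)$ and every signature lies in the normal set of syzygies, as established earlier in Section \ref{sec_syzsignature}.

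The only subtle point I foresee is verifying that the minimum defining $\Omega(\LMh(f))$ is genuinely attained and strictly less than $\s(f)$; both are immediate once any $\s$-top-reducer for $f$ has been exhibited, so this is really a bookkeeping issue rather than a substantive obstacle. The crux of the argument is the clever use of $\Omega$ together with Corollary \ref{cor_irriffomega} to promote an arbitrary reducer into an $\s$-top-irreducible one, which is precisely where the $\s$-Gr\"obner basis axiom can be invoked.
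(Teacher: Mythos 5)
Your proof is correct and takes essentially the same route as the paper: exhibit an element of $I\setminus\{0\}$ with leading monomial $\LMh(f)$ and signature equal to $\Omega(\LMh(f))$, note via Corollary~\ref{cor_irriffomega} that it is $\s$-top-irreducible, and then invoke Definition~\ref{def_sGB}. The one place where you go beyond the paper is that you explicitly justify the equality $\s(mg)=\Psih(m)\s(g)$ by combining Proposition~\ref{prop_criteria_tdotsigf} with the fact that signatures lie in ${\rm NS}(\syz{\mathcal{F}})$; the paper simply asserts this chain of equalities, so your extra step is a welcome clarification rather than a divergence in method.
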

\begin{proof}
Let $f\in I\backslash\{0\}$ be such that $\s$-top-reducible with respect to $\s(f)$. Then, there exists $q\in I\backslash\{0\}$ such that $\LMh(q)=\LMh(f)$ and $\s(q)<_{\rm sign}\s(f)$. We choose such the $q$ such that $\s(q)=\Omega(\LMh(q))$. By Corollary \ref{cor_irriffomega}, we get $q$ is $\s$-top-irreducible with respect to $\s(q)$. Since $G$ is an $\s$-Gr\"{o}bner basis, there exist $g\in G$ and $m\in\mathcal{M}^{(h)}$ such that $\LMh(mg)=\LMh(q)=\LMh(f)$ and $\s(mg)=\Psih(m)\s(g)=\s(q)<_{\rm sign}\s(f)$, as desired.
\end{proof}

\smallskip
The meaningful of Proposition \ref{prop_GBthenreducer} above is that any non-zero $\s$-top-reducible element of $I$ can be $\s$-top-reduced by an element of an $\s$-Gr\"{o}bner basis. This fact is important to prove the following proposition.

\begin{prop}
If $G$ is an $\s$-Gr\"{o}bner basis of $I$, then $G$ is a Gr\"{o}bner basis of $I$.
\end{prop}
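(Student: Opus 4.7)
The plan is to derive the two defining conditions of a Gröbner basis for $I$ from the hypothesis that $G$ is an $\s$-Gröbner basis. The condition $RG = I$ is built into Definition \ref{def_sGB}, so the only substantive task is $\langle \LMh(G) \rangle = \langle \LMh(I) \rangle$. The inclusion $\langle \LMh(G) \rangle \subseteq \langle \LMh(I) \rangle$ is immediate from $G \subset I$, so I would focus on showing that for every $f \in I \setminus \{0\}$ there exist $g \in G$ and $m \in \mathcal{M}^{(h)}$ with $\LMh(mg) = \LMh(f)$.

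Fix a nonzero $f \in I$ and split into two cases according to whether $f$ is $\s$-top-irreducible with respect to $\s(f)$. If $f$ is $\s$-top-irreducible with respect to $\s(f)$, the defining property of an $\s$-Gröbner basis (Definition \ref{def_sGB}) directly supplies $g \in G$ and $m \in \mathcal{M}^{(h)}$ with $\LMh(mg) = \LMh(f)$. If instead $f$ is $\s$-top-reducible with respect to $\s(f)$, Proposition \ref{prop_GBthenreducer} delivers the same conclusion (together with the unused extra information $\s(mg) <_{\rm sign} \s(f)$). In either branch I obtain the key equality $\LMh(mg) = \LMh(f)$.

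It then remains to convert $\LMh(mg) = \LMh(f)$ into a divisibility statement $\LMh(g) \mid \LMh(f)$ inside $\mathcal{M}_\xi^{(h)}$. For left multiplication by $m \in \mathcal{M}^{(h)}$ in $R_n = D_n^{(h)}\!(K)$, the non-commutative correction terms arising from $\d_i x_i = x_i \d_i + h^2$ have the same total degree but strictly smaller $w$-weighted degree than the ``leading'' product, because the weight vector satisfies $w_{n+i} > \max_j w_j \geq 0$; hence under $<_h$ they are dominated by $\Psih(m) \cdot \LMh(g)$. This yields $\LMh(mg) = \Psih(m)\cdot\LMh(g)$ in $K[h,\x,\xi]$, so that $\LMh(g) \mid \LMh(f)$ and therefore $\LMh(f) \in \langle \LMh(G) \rangle$. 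Taking $f$ arbitrary completes the proof.

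The only conceptual hurdle is the $\s$-top-reducible/irreducible dichotomy: Definition \ref{def_sGB} by itself only governs $\s$-top-irreducible $f$, so without Proposition \ref{prop_GBthenreducer} there would be a gap. With that proposition in hand, the rest is routine bookkeeping on how leading monomials behave under left multiplication by a monomial in $R_n$, and no further term-order gymnastics are needed.
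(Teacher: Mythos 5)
Your proof follows essentially the same route as the paper: reduce to showing $\langle\LMh(I)\rangle\subseteq\langle\LMh(G)\rangle$, then split on whether $f$ is $\s$-top-irreducible (apply Definition \ref{def_sGB} directly) or $\s$-top-reducible (apply Proposition \ref{prop_GBthenreducer}), obtaining $\LMh(mg)=\LMh(f)$ in each case. Your final paragraph, justifying $\LMh(mg)=\Psih(m)\LMh(g)$ via the weight constraint $w_{n+i}>\max_j w_j$, makes explicit a step that the paper's proof leaves implicit, but this is elaboration rather than a different argument.
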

\begin{proof}
By the definition of $\s$-Gr\"{o}bner basis, we have $RG=I$. Next, we need to prove that $\langle\LMh(G)\rangle=\langle\LMh(I)\rangle$.
Let $m\in\langle \LMh(I)\rangle$, and let $f\in I$ be such that $\LMh(f)=m$.
If $f$ is $\s$-top-reducible with respect to $\s(f)$, then by Proposition \ref{prop_GBthenreducer}, there exist $g\in G$ and $m_1\in \mathcal{M}^{(h)}$ such that $\LMh(m_1 g)=\LMh(f)$ and $\s(m_1 g)<_{\rm sign}\s(f)$.
If $f$ is $\s$-irredicible, then by the definition of $\s$-Gr\"{o}bner basis, there exists $g'\in G$ and $t\in\mathcal{M}^{(h)}$ such that $\LMh(tg')=\LMh(f)$ and $\Psih(t)\s(g')=\s(f)$. From two cases above, we conclude that $m=\LMh(f)$ lies in $\langle\LMh(G)\rangle$. Thus, we have $\langle\LMh(I)\rangle\subseteq\langle\LMh(G)\rangle$.
\end{proof}

\begin{prop}\label{propcontainfiniteGB}
Every $\s$-Gr\"{o}bner basis of $I$ contains a finite $\s$-Gr\"{o}bner basis of $I$.
\end{prop}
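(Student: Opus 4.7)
My plan is to exhibit a finite $G'\subseteq G$ by selecting representatives for a finite family of minimal ``shapes'' $(\s(g),\LMh(g))$, and then to verify that the covering property of Definition \ref{def_sGB} transfers from $G$ to $G'$ by composing divisibility quotients.

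First I would introduce a quasi-order $\preceq$ on $G$: declare $g'\preceq g$ iff there exists a single $t\in\mathcal{M}^{(h)}$ with $\Psih(t)\,\s(g')=\s(g)$ and $\Psih(t)\,\LMh(g')=\LMh(g)$. The common-quotient condition is what makes the order \emph{compose} with covering: if $g\in G$ covers an $\s$-top-irreducible $f\in I\backslash\{0\}$ via a monomial $m$, and $g'\preceq g$ via $t$, then
$$
\Psih(mt)\,\s(g') = \Psih(m)\,\s(g)=\s(f) \quad\text{and}\quad \LMh\bigl((mt)\,g'\bigr)=\Psih(mt)\,\LMh(g')=\LMh(f),
$$
so $g'$ still covers $f$ (via $mt$). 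Consequently any subset of $G$ that contains at least one representative of each $\preceq$-minimal class remains an $\s$-Gr\"{o}bner basis.

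The heart of the argument is to show that $\preceq$ has only finitely many minimal classes. Partitioning $G$ by the index $i_g$ of $\s(g)=\sigma_g\ve_{i_g}$ gives finitely many groups. Within a fixed index, $g'\preceq g$ forces the integer \emph{type} $\tau_g\in\mathbb{Z}^{2n+1}$, defined as the exponent-vector difference between $\LMh(g)$ and $\sigma_g$, to be preserved; comparable elements share a common type, and within a fixed type the order reduces to componentwise divisibility on $\sigma_g\in\mathbb{Z}_{\geq 0}^{2n+1}$, which by Dickson's lemma admits finitely many minima. It then remains to bound the number of types that arise among shapes required to cover $\s$-top-irreducible elements: by Corollary \ref{cor_irriffomega} these shapes lie on the graph of $\Omega$, and Hilbert's basis theorem applied on the commutative side via $\Phi$ guarantees that $\lsyz{\mathcal{F}}\subseteq(K[h,\x,\xi])^{\sn}$ is finitely generated. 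Outside a finite list of ``critical'' leading monomials --- determined by $\LMh(\mathfrak{f}_1),\ldots,\LMh(\mathfrak{f}_{\sn})$ together with the leading module monomials of a generating set of $\lsyz{\mathcal{F}}$ --- the bijection $\Omega$ acts as a pure monomial shift and contributes no new types.

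Selecting one $g\in G$ from each minimal $\preceq$-class across all indices therefore produces the required finite $G'\subseteq G$. The principal obstacle, I expect, is precisely the finiteness of types: a plain coordinate-wise Dickson on the pairs $(\sigma_g,\LMh(g))$ is insufficient because it does not enforce the common-quotient condition, and one must leverage the Noetherian structure of the syzygy module on the polynomial side to cap the family of types that can arise in the covering problem.
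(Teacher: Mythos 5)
Your opening reduction is sound: the quasi-order $g'\preceq g$ defined by the existence of a \emph{common} quotient $t$ with $\Psih(t)\,\s(g')=\s(g)$ and $\Psih(t)\,\LMh(g')=\LMh(g)$ does compose with the covering condition in Definition~\ref{def_sGB}, so any subset of $G$ containing a $\preceq$-minimal representative below each covering $g$ would again be an $\s$-Gr\"obner basis. The gap is exactly where you flag it: you never actually prove there are only finitely many $\preceq$-minimal classes. Partitioning by index and by the integer ``type'' $\tau_g\in\Z^{2n+1}$ handles each type separately via Dickson, but the set of types is a subset of $\Z^{2n+1}$ and is not a priori finite; nothing in the argument bounds it. The appeal to Noetherianity of $\lsyz{\mathcal{F}}$ / Hilbert's basis theorem is not a proof --- it is not explained why types would ``act as a pure monomial shift'' outside a finite exceptional set, nor how a generating set of the syzygy module would control the exponent difference $\LMh(g)-\sigma_g$ for arbitrary $g\in G$. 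As written, the finiteness step is missing, and it is precisely the step you identify as ``the heart of the argument.''

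The paper sidesteps this difficulty by \emph{not} insisting on a common quotient at the selection stage. It applies Dickson's lemma coordinate-wise to $\Lambda(g)=(\LMh(g),\s(g))\in(\mathcal{M}_\xi^{(h)})^{\sn+1}$, obtaining a finite $G_0$ with separate (and possibly different) quotients $u$ for the leading-monomial coordinate and $v$ for the signature coordinate. The mismatch $u\neq v$ is then ruled out \emph{a posteriori}: given an $\s$-top-irreducible $f$ covered by some $g_i\in G\setminus G_0$ via $m$ and with $(u,v\ve_k)\cdot\Lambda(g_{j_i})=\Lambda(g_i)$ for some $g_{j_i}\in G_0$, if $u<_h v$ then $m':=u\Psih(m)$ gives an $\s$-top-reducer for $f$, and if $u>_h v$ then $m':=v\Psih(m)$ produces an operator $q=f-am'g_{j_i}$ with $\LMh(q)=\LMh(f)$ and $\s(q)<_{\rm sign}\s(f)$, which again is an $\s$-top-reducer for $f$; both contradict $\s$-top-irreducibility, so $u=v$ is forced and your composition argument goes through. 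You should adopt this ``select coordinate-wise, then eliminate the bad cases by contradiction'' structure: it delivers finiteness for free from plain Dickson and turns the hard common-quotient requirement from a hypothesis of the selection into a consequence of the hypotheses on $f$.
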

\begin{proof}
Let $G=\{g_i\mid i\in L\}$ be an $\s$-Gr\"{o}bner basis of $I$, where $L$ is an index set. Define a function
\vspace{-0.7ex}
\begin{eqnarray*}
\Lambda:G&\longrightarrow&\mathcal{M}_{\xi}^{(h)}\times\mathbb{M}_{\xi}\\
g_i&\mapsto&(\LMh(g_i),\s(g_i)).
\end{eqnarray*}
\vspace{-3.5ex}

\noindent
We can write $\Lambda(G)=\LMh(G)\times \{\s(g_i)\mid g_i\in G\}$, a subset of the monoid $\mathcal{M}^{(h)}_{\xi}\times \mathbb{M}_{\xi}$. Moreover, we know that $\mathcal{M}^{(h)}_{\xi}\times \mathbb{M}_{\xi}\cong (\mathcal{M}_{\xi}^{(h)})^{\sn+1}$, so the set $\Lambda(G)$ can be viewed as a subset of $(\mathcal{M}_{\xi}^{(h)})^{\sn+1}$. From the set $\Lambda(G)$, we can construct $M=(M_1,\cdots,M_{\sn+1})$, where $M_k$ is the monomial ideal generated by monomials in the $k^{\rm th}$ coordinate of $\Lambda(G)$, $1\leq k\leq \mn+1$. Here, we have $\Lambda(G)\subseteq M$.
By the Dickson lemma, there exists a finite subset $J\subset L$ such that monomials in the $k^{\rm th}$ coordinate of $\Lambda(\{g_j\mid j\in J\})$ generates $M_k$, $1\leq k\leq \mn+1$. Write $G_0:=\{g_j\mid j\in J\}$.

\noindent
We are going to prove that $G':=G_0\cup F$ is an $\s$-Groebner basis of $I$. We have taken the union with $\mathcal{F}$ to ensure that $G'$ generates the left ideal $I$. Let $f\in I$ be $\s$-top-irreducible with respect to $\s(f)$. Since $G$ is an $\s$-Gr\"{o}bner basis, there exist $g_i \in G$ and $m\in \mathcal{M}^{(h)}$ such that $\Psih(m)\LMh(g_i)=\LMh(mg_i)=\LMh(f)$ and $\s(mg_i)=\Psih(m)\s(g_i)=\s(f)$. If $i\in J$, then it is clear that $g_i\in G'$. Now, assume that $i\in L\backslash J$. Then, since $\Lambda(g_i)\in M$, there exist $j_i\in J$, $u\in \mathcal{M}^{(h)}_{\xi}$, and $v\ve_k\in\mathbb{M}_{\xi}$ such that
$(u,v\ve_k)\cdot \Lambda(g_{j_i})=\Lambda(g_i)=(\LMh(g_i),\s(g_i)).$
\begin{itemize}
\vspace{-0.75ex}
\item If $u=v$, then $m^*:=u\Psih(m)\in \mathcal{M}^{(h)}_{\xi}$ satisfies $m^*\LMh(g_{j_i})=\LMh(f)$ and $m^*\s(g_{j_i})=\s(f)$. This case satisfies the $\s$-Gr\"{o}bner basis axioms for $f$.
\vspace{-0.75ex}
\item If $u<_h v$, then $m':=u\Psih(m)\in \mathcal{M}^{(h)}_{\xi}$ satisfies $m'\LMh(g_{j_i})=\LMh(f)$ and $m'\s(g_{j_i})<_{\rm sign}\s(f)$. It contradicts the hypothesis that $f$ is $\s$-top-irreducible with respect to $\s(f)$.
\vspace{-0.75ex}
\item If $u>_h v$, then there exist $a\in K^*$ and $m':=v\Psih(m)\in\mathcal{M}_{\xi}^{(h)}$ such that the operator $q:=f-am'g_{j_i}\in R$ satisfies that $\LMh(q)=\LMh(f)$ and $\s(q)<_{\rm sign}\s(f)$. This contradicts the hypothesis that $f$ is $\s$-top-irreducible with respect to $\s(f)$.
\end{itemize}
Thus, we conclude that $G'$ is an $\s$-Groebner basis of $I$.
\end{proof}

\vspace{0ex}
\begin{defn}\label{def_primitive}
A non-zero $\s$-top-irreducible $g\in I$ is said to be \emph{primitive $\s$-top-irreducible} if there are no $\s$-top-irreducible $g'\in I\backslash\{0\}$ and $m'\in\mathcal{M}^{(h)}\backslash\{1\}$ such that:
\begin{enumerate}[(i)]
\vspace{-0.7ex}
\item $\LMh(m'g')=\LMh(g)$ and
\vspace{-0.5ex}
\item $\s(m'g')=\s(g)$.
\end{enumerate}
\end{defn}

Following the proof of Proposition \ref{propcontainfiniteGB}, in practice, we keep a subset of primitive $\s$-top-irreducible operators with different leading monomials in computing a finite $\s$-Gr\"{o}bner basis $G'$. We remark that in that proof, we set the finite $\s$-Gr\"{o}bner basis $G'$ including the set $\mathcal{F}$. In addition, any redundant operators in $G'$ are not allowed. A redundant operator is defined as follows.
\begin{defn}\label{def_redundant}
A non-zero operator $f\in I$ is said to be \emph{redundant} to a set $G\subset I$ if there exists an operator $g\in G$ and $t\in \mathcal{T}^{(h)}$ such that $f=tg$ \ and \ $\s(f)<_{\rm sign}\Psih(t)\s(g)$.
\end{defn}

From the definition above, if the the operator $f$ is a redundant to $G$, then the module monomial $\Psih(t)\s(g)$ belongs to $\lsyz{\mathcal{F}}$ since $mg-f=0$ and $\Psih(t)\s(g)>_{\rm sign}\s(f)$. An example of a redundant operator can be seen in Example \ref{eg_redundant}.

\vspace{-1ex}
\section{The F5 Tropical Algorithm}

\subsection{The F5 Criterion}
We begin with the definition of normal pairs.
\begin{defn}\label{def_npair}
Let $g_1,g_2\!\in\! I\backslash\{0\}$ with $\LTh(g_1)\!=\!ch^{\gamma}\x^{\alpha}\xi^{\beta}$ and $\LTh(g_2)\!=\!dh^{\gamma'}\x^{\alpha'}\xi^{\beta'}$.
Let
\vspace{0.2ex}
$$\eta =(\max(\alpha_1,\alpha_1'),\!\cdots\!,\max(\alpha_n,\alpha_n')),\quad
\delta =(\max(\beta_1,\beta_1'),\!\cdots\!,\max(\beta_n,\beta_n')),$$
\vspace{-2ex}
$$\mu =\max(\gamma,\gamma'),\quad u_1=dh^{\mu-\gamma}\x^{\eta-\alpha}\d^{\delta-\beta},\quad \text{and}\quad u_2=ch^{\mu-\gamma'}\x^{\eta-\alpha'}\d^{\delta-\beta'}.$$
\vspace{-2ex}

\noindent
We say that $(g_1,g_2)$ is a \emph{normal pair} if:
\begin{enumerate}[(i)]
\item the $g_i$'s are primitive $\s$-irreducible;
\item $\s(u_ig_i)=\LMh(u_i)\s(g_i)$, $i=1,2$;
\item $\s(u_1g_1)\neq\s(u_2 g_2)$;
\item ${\rm spair}(g_1,g_2):=u_1g_1-u_2g_2$ is not a redundant operator to $G$.
\end{enumerate}
\end{defn}
\noindent
\begin{rmk}\label{rmk_sig_normalpairs}
We remark that if $(g_1,g_2)$ is a normal pair, then we have
$$\s({\rm spair}(g_1,g_2))=\max(\s(u_1g_1),\s(u_2g_2)).$$
Moreover, if $\s(u_1g_1)>_{\rm sign}\s(u_2g_2)$, then $u_1\neq 1$. 
\end{rmk}

The following theorem is an F5 criterion. Unlike in \cite{Arri2011} and in \cite{Vaccon2021}, we do not write explicitly the property: for all $i\in\{1,\cdots,\mn\}$, there exists $g\in G$ such that $\s(g)=\ve_i$. We replace it by the hypotesis of the theorem said that $\mathcal{F}\subseteq G$ and by what we have in Remark \ref{rmk_sigf_i}. Therefore, in principle, this theorem is equivalent to what we have in \cite{Vaccon2021}.

\begin{thm}[F5 criterion]\label{thm_F5criterion}
Let $G=\{g\in I\mid g \text{ is }\s\text{-top-irreducible}\}$ be such that contains the set $\mathcal{F}$.
If for every normal pair $(g_1,g_2)\in G\times G$,  there exist $g\in G$ and $m\in\mathcal{M}^{(h)}$ such that $mg$ is $\s$-top-irreducible and $\Psih(m)\s(g)=\s(mg)=\s({\rm spair}(g_1,g_2))$, then $G$ is an $\s$-Gr\"{o}bner basis of $I$.
\end{thm}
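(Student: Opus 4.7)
The plan is to check the two axioms of Definition \ref{def_sGB}. The first, $RG=I$, is immediate: since $\mathcal{F}\subseteq G\subseteq I$ and $\mathcal{F}$ generates $I$, we have $I=R\mathcal{F}\subseteq RG\subseteq I$. For the second axiom, I would argue by contradiction on a minimal counterexample: assume some $\s$-top-irreducible $f\in I\setminus\{0\}$ admits no pair $(g,m)\in G\times\mathcal{M}^{(h)}$ with $\LMh(mg)=\LMh(f)$ and $\Psih(m)\s(g)=\s(f)$, and choose such $f$ with $\s(f)$ minimal w.r.t.\ $\leq_{\rm sign}$.

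Using $RG=I$, I would expand $f=\sum_k c_k t_k g_k$ with $c_k\in K^*$, $t_k\in\mathcal{M}^{(h)}$, $g_k\in G$, and pick the expansion minimising $M:=\max_k\Psih(t_k)\s(g_k)$ w.r.t.\ $\leq_{\rm sign}$. Choosing module lifts $\g_k$ realising $\s(g_k)$, the element $\bm{\phi}:=\sum_k c_k t_k\g_k$ satisfies $\nu(\bm{\phi})=f$ and $\LM_{\rm sign}(\bm{\phi})\leq_{\rm sign}M$, so $\s(f)\leq_{\rm sign}M$. First I would show $M=\s(f)$; otherwise either (a) exactly one index realises $M$, forcing $M\in\lsyz{\mathcal{F}}$, whence Proposition \ref{prop_criteria_tdotsigf} gives $\s(t_kg_k)<_{\rm sign}M\leq_{\rm sign}\s(f)$, so by minimality of $\s(f)$ the element $t_kg_k$ admits an expansion over $G$ of strictly smaller max-sig, reducing $M$; or (b) two indices $k_1,k_2$ realise $M$ with sig-cancellation, in which case ${\rm spair}(g_{k_1},g_{k_2})$ has signature $M$ and, after reducing to primitive $\s$-top-irreducible representatives and discarding redundancies (Definitions \ref{def_primitive} and \ref{def_redundant}), forms a normal pair (Definition \ref{def_npair}) to which the F5 hypothesis applies, yielding $g\in G$ and $m\in\mathcal{M}^{(h)}$ with $\Psih(m)\s(g)=\s({\rm spair})=M$ and $mg$ $\s$-top-irreducible; substituting $mg$ for the canceling pair again decreases $M$.

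With $M=\s(f)$ established, set $S=\{k:\Psih(t_k)\s(g_k)=M\}$. If some $k\in S$ satisfies $\LMh(t_kg_k)=\LMh(f)$, then $(g,m)=(g_k,t_k)$ gives the desired witness, contradicting the choice of $f$. Otherwise the $\LMh$-leading terms of the $t_kg_k$ for $k\in S$ cancel against each other, producing a normal-pair configuration for some $k_1,k_2\in S$; applying the F5 hypothesis exactly as in case (b) above and substituting reduces the multiplicity at the top of the well-ordered multiset $\{\Psih(t_k)\s(g_k)\}_k$, again contradicting the minimality of the expansion.

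The main obstacles I anticipate are twofold. First, the passage from a generic canceling pair $(g_{k_1},g_{k_2})$ to an honest normal pair requires a short separate argument that the relevant cancellation can always be witnessed by primitive $\s$-top-irreducible representatives, using Definition \ref{def_primitive} and the non-redundancy clause (iv) of Definition \ref{def_npair}. Second, because $\leq_{\rm sign}$ is not compatible with $<_h$ (the Remark after Definition \ref{def_totalordermodule}), one cannot transport relative orderings through multiplication by $\Psih(t)$; every sig-comparison must be routed through Proposition \ref{prop_criteria_tdotsigf} and the distinction between $\lsyz{\mathcal{F}}$ and ${\rm NS}(\syz{\mathcal{F}})$. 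Managing this incompatibility while ensuring the induction on both $\s(f)$ and $M$ is well-founded is the technical heart of the argument.
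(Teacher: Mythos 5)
The paper itself does not prove Theorem~\ref{thm_F5criterion}: immediately after the statement it writes that the proof is ``analogous to the proof of Theorem 4.2 in \cite{Vaccon2021}'' and gives no further detail. So there is no paper-internal argument to compare against; what can be said is whether your sketch matches the argument one would import from \cite{Arri2011} and \cite{Vaccon2021}, and whether the steps as written are sound.

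Your outline does follow the expected route --- a minimal counterexample on $\s(f)$, an expansion $f=\sum_k c_k t_k g_k$ chosen to minimise $M:=\max_k\Psih(t_k)\s(g_k)$, and an S-pair substitution when the top of the expansion exhibits a signature cancellation. That is the right skeleton, and the two obstacles you flag at the end (promoting a generic canceling pair to a genuine normal pair, and managing the incompatibility of $\leq_{\rm sign}$ with $<_h$ by routing everything through Proposition~\ref{prop_criteria_tdotsigf}) are indeed real and must be dealt with explicitly.

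There is, however, a concrete gap in case~(a) that goes beyond imprecision. You write ``$\s(t_kg_k)<_{\rm sign}M\leq_{\rm sign}\s(f)$'' --- but if exactly one index realises $M$ and $M\neq\s(f)$, then $\LM_{\rm sign}(\bm{\phi})=M$ so $\s(f)\leq_{\rm sign}M$, and the strict case you are handling is $\s(f)<_{\rm sign}M$, the opposite inequality. More importantly, the step ``by minimality of $\s(f)$ the element $t_kg_k$ admits an expansion over $G$ of strictly smaller max-sig'' does not follow from the choice of minimal counterexample $f$: the minimality of $\s(f)$ only tells you something about \emph{$\s$-top-irreducible} elements with signature $<_{\rm sign}\s(f)$, and here you need to produce, for an arbitrary $t_kg_k$ with $\s(t_kg_k)<_{\rm sign}M$ (not necessarily $<_{\rm sign}\s(f)$, and not necessarily $\s$-top-irreducible), a representation whose max-sig is $\leq_{\rm sign}\s(t_kg_k)$. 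That bridging statement is an auxiliary lemma in its own right --- it requires first $\s$-top-reducing $t_kg_k$ to an irreducible remainder, applying the $\s$-Gr\"obner property (for smaller signatures) to peel off its leading monomial, and iterating with a separate well-ordering argument on $\LMh$. In \cite{Arri2011} and \cite{Vaccon2021} this is handled either by proving a standalone ``$\s$-representation'' lemma or by a careful double induction on $(\sigma,\LMh)$; invoking ``minimality of $\s(f)$'' alone does not supply it. A cleaner framing is to do the outer induction on $M$ rather than on $\s(f)$, and to prove by a simultaneous induction that $G$ is an $\s$-Gr\"obner basis up to signature $\sigma$ (Definition~\ref{def_sGBupto}) and that every $h\in I$ with $\s(h)<_{\rm sign}\sigma$ has a representation over $G$ with max-sig $\leq_{\rm sign}\s(h)$. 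You would also want to say a word about why $\leq_{\rm sign}$ is well-founded on $\mathbb{M}_{\xi}$, since the entire argument rests on it.
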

\noindent
We can prove that $G$ satisfying the hypothesis of Theorem \ref{thm_F5criterion} is an $\s$-Gr\"{o}bner basis of $I$ in an analogous way to the proof of Theorem 4.2 in \cite{Vaccon2021}.

\begin{rmk}(Rewritable Criterion).\label{rmk_rewriteable}
If two or more normal pairs are such that their S-pairs have the same signature, then we can freely consider just one of them. This is based on the sufficient condition of the F5 criterion above --- we only care about the signature $\s({\rm spair}(g_1,g_2))$, without involving the corresponding S-pair explicitly. Following \cite{Arri2011}, in practice (see Line \ref{line_sGB_prunerewriteable} of Algorithm \ref{alg_sGB}), we remove $(f,\sigma)$ from $P$ if there exists another operator $f'$ such that $\s(f')=\sigma$ and $\LT_h(f')<_h\LT(f)$.
\end{rmk}

\begin{eg}\label{eg_redundant}
Let $R=D_2^{(h)}(\Q)$ with $3$-adic valuation. Define the tropical term order $\leq_h$ by using $w=(1,1,2,2)$, $\omega=(-1,-1,1,1)$, and the lexicographic order $\prec$ with $\dx\succ\dy\succ x\succ y$. Consider the subset $\mathcal{F}=\{\mathfrak{f}_1=\underline{2\dy}+y, \ \mathfrak{f}_2=\underline{4xy}+3x^2\}$ where $\s(\mathfrak{f}_1)=\ve_1$ and $\s(\mathfrak{f}_2)=\ve_2$, and set $G:=\mathcal{F}$.
\begin{itemize}
\item We can check that $(\mathfrak{f}_2,\mathfrak{f}_1)$ is a normal pair. Moreover, we have ${\rm spair}(\mathfrak{f}_2,\mathfrak{f}_1)=\underline{6x^2\dy}-4xy^2+8xh^2$ with $\mathcal{S}({\rm spair}(\mathfrak{f}_2,\mathfrak{f}_1))=\dy\ve_2$. We can $\s$-top-reduce ${\rm spair}(\mathfrak{f}_2,\mathfrak{f}_1)$ by $\mathfrak{f}_1$ with respect to $\dy\ve_2$, and the result is $\underline{-4xy^2}-3x^2y+8xh^2$. We continue to $\s$-top-reduce $\underline{-4xy^2}-3x^2y+8xh^2$ by $\mathfrak{f}_2$ with respect to $\dy\ve_2$ to obtain $g_2:=8xh^2$. Here, we have the $\s$-top-irreducible operator $g_1$ with respect to $\s(g_1)=\dy\ve_2$.
\item Next, we compute ${\rm spair}(g_1,\mathfrak{f}_1)=-8xyh^2$ with $\mathcal{S}({\rm spair}(g_1,\mathfrak{f}_1))=\dy^2\ve_2$ and also compute ${\rm spair}(g_1,\mathfrak{f}_2)=-24x^2h^2$ with $\mathcal{S}({\rm spair}(g_1,\mathfrak{f}_2))=\xi_y\dy\ve_2$.
\item Add the operator $g_1$ to the set $G$. At this point, we have $G=\{\mathfrak{f}_1,\mathfrak{f}_2,g_1\}$.
\item Since ${\rm spair}(g_1,\mathfrak{f}_2)=-24x^2h^2=-3x\cdot g_1$ and
$$\s({\rm spair}(g_1,\mathfrak{f}_2))\leq_{\rm sign}\mathcal{S}({\rm spair}(g_1,\mathfrak{f}_2))=\xi_y\dy\ve_2<_{\rm sign}-3\xi_x\dy\ve_2=-3\xi_x\s(g_1),$$
${\rm spair}(g_1,\mathfrak{f}_2)$ is redundant to the set $G$ (see Definition \ref{def_redundant}), and hence, the pair $(g_1,\mathfrak{f}_2)$ in not a normal pair.\\
The operator ${\rm spair}(g_1,\mathfrak{f}_1)=-8xyh^2$ $\s$-top-reduces to $0$ by $g_1$ with respect to $\mathcal{S}({\rm spair}(g_1,\mathfrak{f}_1))=\dy^2\ve_2$. Here, we get $\dy^2\ve_2\in\lsyz{\mathcal{F}}$.
\end{itemize}
By the computation above, we can see that the set $G$ satisfies the sufficient condition in Theorem \ref{thm_F5criterion}. Thus, the set $G$ is an $\s$-Gr\"{o}bner besis of the left ideal $R\mathcal{F}$.
\end{eg}

\subsection{Algorithms}
\vspace{-1.5ex}
Algorithm \ref{alg_mainF5} computes Gr\"{o}bner bases $G_1, \cdots,G_{\sn}$ for the left ideal $I_1,\cdots,I_{\sn}$, respectively. Looking at Line \ref{line_mainF5_divalg}, the element $\mathfrak{f}_j$ of the generating set $\mathcal{F}_j$ is reduced by the operator parts of the set $G_i$, an $\s$-Gr\"{o}bner basis of the left ideal $I_i=I_{j-1}$. This reduction uses the tropical division algorithm in the paper of \cite{ADH2023}. If the corresponding reduction result is $0$, then we can skip the computation of $\s$-Gr\"{o}bner basis using Algorithm \ref{alg_sGB} because $\mathfrak{f}_j\in I_{j-1}$, which means that $G_i$ is also an $\s$-Gr\"{o}bner basis of $I_j$. Because of this skipped step, one can see that the index $i\leq j$.

At the beginning of Algorithm \ref{alg_mainF5}, we do not have any information about syzygies, and even we do not have principal syzygies $\mathfrak{f}_i\ve_j-\mathfrak{f}_j\ve_i$ since our ring $R$ is not commutative. What we can do is: collect leading monomials of syzygies during computation in the framework of algorithms (Algorithm \ref{alg_mainF5} and Algorithm \ref{alg_sGB}), starting from the smallest one to the bigger ones.
That is the reason why we always take a normal pair in $P$ with minimal (guessed) signature $a\sigma$ (Line \ref{line_sGB_minsig} of Algorithm \ref{alg_sGB}) as a candidate of either a leading monomial of syzygy or a valid signature; and why at Line \ref{line_mainF5_sigGr} of Algorithm \ref{alg_mainF5}, we always keep the set of known leading monomials of syzygies $S$ for each iteration $j=2,\cdots,\mn$. The notation $(f,a\sigma)$ means an operator $f$ with (a guessed) signature $a\sigma$.

In the framework of algorithms, signatures or guessed signatures are saved with coefficients. These coefficients are obtained from the corresponding module elements during the computation process. There is no deep meaning of these coefficients in Algorithm \ref{alg_mainF5} and Algorithm \ref{alg_sGB} bacause ordering module terms with respect to $\leq_{\rm sign}$ does not depend on the coefficients. The purpose of writing these coefficients is only for explaining Algorithm \ref{alg_sigtopred} regarding to Remark \ref{rmk_preservesigornot}.

Now, let us see Algorithm \ref{alg_sGB}. Line \ref{line_sGB_npairs} computes the set $P$ of all normal pairs $(r,g)$'s, where $g\in G_{i-1}$. To check the condition (ii), $\s(u_ig_i)=\LMh(u_i)\s(g_i)$, of the definition of normal pair (Definition \ref{def_npair}), we apply Proposition \ref{prop_criteria_tdotsigf}. The module monomial $\LMh(u_i)\s(g_i)$ is called a guessed signature of $u_ig_i$. If $\LMh(u_i)\s(g_i)$ is divisible by a known leading monomial of a syzygy in $S$, then the $\LMh(u_i)\s(g_i)$ is a leading monomial of a syzygy, and hence, it does not satisfy the condition (ii) of normal pair. The $\s$-top-reduction at Line \ref{line_sGB_stopred} will also help us to see whether $\sigma:=\LMh(u_i)\s(g_i)$ belongs to $\lsyz{\mathcal{F}}$ or not. If the corresponding reduction result is $0$, then $\sigma$ lies in $\lsyz{\mathcal{F}}$. Otherwise, $\sigma$ is certified as a signature. At Line \ref{line_sGB_prunebysyz}, we remove all elements $(f,a\sigma)\in P$ where $\sigma$ is divisible by an element of the updated $S$. We also remove elements of $P$ (see Line \ref{line_sGB_prunerewriteable}) which satisfy the rewritable criterion based on Remark \ref{rmk_rewriteable}.

The correctness of this algorithm framework (Algorithm \ref{alg_mainF5} and Algorithm \ref{alg_sGB}) is based on Theorem \ref{thm_F5criterion}. The termination of Algorithm \ref{alg_sGB} is guaranted by Proposition \ref{propcontainfiniteGB} since we only save primitive $\s$-irreducible operators which are not redundant elements.

\vspace{-2ex}
\begin{minipage}[t]{0.45\textwidth}
{\small
\begin{algorithm}[H]
\caption{Main framework of tropical F5 algorithm}\label{alg_mainF5}
\KwInput{$\mathcal{F}=\{\mathfrak{f}_1,\cdots,\mathfrak{f}_{\sn}\}\subset R$ {homogeneous},\\
~~~ ordered increasingly by $\leq_h$ on their $\LT_h$}

\KwOutput{$G$: a Gr\"{o}bner basis of the left ideal $I$\\
~~~\qquad\qquad   generated by $\mathcal{F}$}
\BlankLine
$S\leftarrow \emptyset$\\
$G_1\leftarrow\{(\mathfrak{f}_1,\ve_1)\}$\\
$i\leftarrow 1$\\
\SetAlgoVlined
\SetKwBlock{Begin}{for $j=2,\cdots,\mn$ do}{}
\Begin{
  $r\leftarrow$\texttt{Div}$(\mathfrak{f}_j,{\rm op\_part}(G_i))$\label{line_mainF5_divalg}\\
  \SetKwBlock{Begin}{if $r\neq 0$ then}{}
  \Begin{
    $i\leftarrow i+1$\\
    $G_i, S\leftarrow {\rm SigGr}((r,\ve_i), ~G_{i-1}, ~S, ~j)$\label{line_mainF5_sigGr}
  }
}
{\bf Return:} $G:={\rm op\_part}(G_i)$
\end{algorithm}
}
\end{minipage}
~~~
\begin{minipage}[t]{0.51\textwidth}
{\small
\begin{algorithm}[H]
\caption{SigGr(~)}\label{alg_sGB}
\KwInput{$(r,\ve_i)$,\quad $G_{i-1}$,\quad $S$,\quad $j$}

\KwOutput{$G_i$: an $\s$-Gr\"{o}bner basis of the left ideal $I_j$ generated by $\{\mathfrak{f}_1,\cdots,\mathfrak{f}_j\}$\\
\qquad\quad~ \ $S$~: leading terms of syzygies
}
\medskip
\noindent
$[P,S]\leftarrow {\rm NPairs}(\emptyset,\ (r,\ve_i),\ G_{i-1},\ S)$\label{line_sGB_npairs}\\
$G_i\leftarrow G_{i-1}\cup\{(r,\ve_i)\}$\\
\SetAlgoVlined
\SetKwBlock{Begin}{while $P\neq \emptyset$ then}{}
\Begin{
  {{\bf Prune} $P$ by $S$}\label{line_sGB_prunebysyz}\\
  {{\bf Prune} $P$ using rewritable criterion}\label{line_sGB_prunerewriteable}\\
  {\bf Take} $(f,a\sigma)\in P$ with minimal $a\sigma$\label{line_sGB_minsig}\\
  $(f,b\sigma) \leftarrow \s$-top-reduce$(f,a\sigma, G)$\label{line_sGB_stopred}\\
  \SetAlgoLined
  \SetKwBlock{Begin}{if $f\neq 0$ then}{}
  \Begin{
    $[P,S]\leftarrow$ {\small${\rm Update } \ {\rm NPairs}(P,\ (f,b\sigma),\ G_{i},\ S)$}\\
    $G_i\leftarrow G_i\cup\{(f,b\sigma)\}$
  }
  \SetAlgoVlined
  \SetKwBlock{Begin}{else}{}
  \Begin{
    $S\leftarrow S\cup\{b\sigma\}$
  }
}
{\bf Return:} \ $G_i$ , \ $S$
\end{algorithm}
}
\end{minipage}

To $\s$-top-reduce $(f,a\sigma)$ by $G$ in Line \ref{line_sGB_stopred} of Algorithm \ref{alg_sGB}, we use Algorithm \ref{alg_sigtopred}. It is a modification of the tropical division algorithm (privided in the paper of \cite{ADH2023}). In this algorithm, we only allow $\s$-top-reducers in $G$ with respect to $a\sigma$ and reducers from the previous computed operators $q_{j'}$'s, $j'<j$ as a reducer of $q_j$ (see Line \ref{line_sigtopred_reducers}). The termination reason of Algorithm \ref{alg_sigtopred} is the same reason of termination of tropical division algorithm, that is, by the fact that the number of supports $\supp(q_j)$'s for operators $q_j$'s with a fixed degree is finite (see the paper of \cite{ADH2023}). In addition, if there is no a reducer found for $q_j$, then the algorithm is immediately terminated (see Line \ref{line_sigtopred_noreducer}).
The proof that the outputs satisfy (i) and (ii) follows the proof in the tropical division algorithm. Moreover, due to the hypothesis on $G$ --- a Gr\"{o}bner basis up to signature $a\sigma$, the output $r$ is an $\s$-top-irreducible operator with respect to $a\sigma$ (Clause (iii) of the algorithm output).

\begin{algorithm}[H]
\caption{$\s$-Top-Reduction}\label{alg_sigtopred}
\BlankLine
\KwInput{$f\in I$ homogeneous, $a\in K^*$, \ $\sigma\in \mathbb{M}_{\xi}$ such that $\mathcal{S}(f)=a\sigma$;\\
\quad\quad\quad \ $G\!=\!\{g_1,\cdots,g_{s}\}\!\subset\! R$ containing homogeneous elements $g_i$'s of  an $\s$-Gr\"{o}bner basis with $\s(g_i)<a\sigma$.
}

\KwOutput{$H_1,\cdots,H_{s},r\in R$ satisfying:
\begin{enumerate}[(i)]
\item $f=\left(\sum_{i=1}^{s}H_ig_i\right)+r$
\item $\LT_h(H_{i}g_i)\leq_h \LT_h(f)$;
\item there is no $\s$-top-reducer in $G$ for $r$ with respect to $a\sigma$.
\end{enumerate}
}
\BlankLine 
$T\leftarrow G$, \ \ \ $j\leftarrow 0$, \ \ \ $q_0\leftarrow f$, \quad $h_{1,0}\leftarrow0,\cdots,h_{s,0}\leftarrow0$\label{line_sigtopred_init}

\BlankLine
\SetAlgoLined
\SetKwBlock{Begin}{While $q_j\neq 0$ do}{}
\Begin{
    \SetAlgoLined
    \SetKwBlock{Begin}{If \normalfont there is no an $\s$-top-reducer in $G$ for $q_j$ w.r.t. $a\sigma$ AND there is no $g\in T\backslash G$ with $\LM_h(g)|\LM_h(q_j)$ {\bf then}}{}
    \Begin{\label{line_sigtopred_noredcondition}
      {\bf Return:} \ $r:=q_j$,\quad $H_i=h_{i,j}$ ~ for all $i$\label{line_sigtopred_noreducer}
    }
    \SetAlgoVlined
    \SetKwBlock{Begin}{else}{}
    \Begin{
      {\bf Choose} an $\s$-top-reducer $g\in G$ for $q_j$ w.r.t. $a\sigma$ \ OR \ an element $g\in T\backslash G$ with $\LM_h(g)\mid\LM_h(q_j)$; such that $E(q_j,mg)$ is minimum among all such choices, for some $m\in\mathcal{M}^{(h)}$ satisfying $\LM_h(q_j)=\LM_h(mg)$.\label{line_sigtopred_reducers}\\
     \BlankLine
      $c={\rm LC}_h(q_j)/{\rm LC}_h(mg)$\label{line_sigtopred_c}\\
      $q'\leftarrow q_j-c\,mg$\label{line_sigtopred_cancelation}\\
      \BlankLine
      \SetAlgoVlined
      \SetKwBlock{Begin}{If {\normalfont $g=g_k$ for some $1\leq k\leq s$} then}{}
      \Begin{\label{line_sigtopred_caseG}
        $q_{j+1}\leftarrow q'$\label{line_sigtopred_cancelationG};\\
        $h_{k,j+1}\leftarrow h_{k,j}+c\, m$;\\
        $h_{i,j+1}\leftarrow h_{i,j}$ for $i\neq k$
      }
      \SetKwBlock{Begin}{If {\normalfont $g$ was added to $T$ at some previous iteration of the algorithm, namely $g=q_k$ for some $k<j$}, then}{}
      \Begin{\label{line_sigtopred_caseTminG}
        $q_{j+1}\leftarrow \frac{1}{1-c}q'$\label{line_sigtopred_qj1}\\
        $h_{i,j+1}\leftarrow \frac{1}{1-c}(h_{i,j}-c\,h_{i,k})$ for all $i$\label{line_sigtopred_hj1}\\
      }
      \SetAlgoVlined
      \BlankLine
      \SetKwBlock{Begin}{If $E(q_j,m\,g)>0$ then}{}
      \Begin{
        $T\leftarrow T\cup\{q_j\}$
      }
    }
    $j=j+1$
}
{\bf Return}  $r:=q_j$,\quad $H_i=h_{i,j}$ ~  for all $i$.\label{line_sigtopred_rzero}
\end{algorithm}

\vspace{1ex}
\begin{prop}\label{prop_nochangesig}
When Algorithm \ref{alg_sigtopred} outputs the result, we have either
$\sigma\in\lsyz{\mathcal{F}} \ \text{ or } \ \sigma=\s(r)$.
\end{prop}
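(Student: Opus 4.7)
The plan is to split on whether $r=0$, using module-level bookkeeping that parallels the correctness argument for Algorithm \ref{alg_Red_bysyz}. First I would verify the running invariant $f = q_j + \sum_{k=1}^{s} h_{k,j} g_k$ at every step $j$. This is straightforward for the $G$-branch (line \ref{line_sigtopred_caseG}); in the $T\setminus G$-branch (line \ref{line_sigtopred_caseTminG}) it requires the same valuation reasoning as in Algorithm \ref{alg_Red_bysyz}, namely that the coefficient $c = \LCh(q_j)/\LCh(g)$ satisfies $\val(c)>0$, whence $\val(1-c)=0$, so that the normalization by $(1-c)^{-1}$ is legitimate and does not alter the module-level leading signature.

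Next, I would fix a module representative $\f\in R^{\sn}$ of $f$ with $\LT_{\rm sign}(\f)=a\sigma$ (available from the hypothesis $\mathcal{S}(f)=a\sigma$) and module representatives $\tilde{\g}_k\in R^{\sn}$ of each $g_k\in G$ whose leading signature $\s(g_k)$ is strictly below $a\sigma$. The induced representative $\q:=\f-\sum_k h_{k,j'}\tilde{\g}_k$ of the output $r$ inherits a leading signature term at $\sigma$ with coefficient $a$, provided that any contribution $h_{k,j'}\tilde{\g}_k$ whose sign-monomial reaches or exceeds $\sigma$ lies in $\lsyz{\mathcal{F}}$; such contributions are controlled by Proposition \ref{prop_criteria_tdotsigf} and can be removed by Proposition \ref{prop_simple}, producing a cleaner representative $\q'$ of $r$ with $\LM_{\rm sign}(\q')\leq_{\rm sign}\sigma$ and with the coefficient at $\sigma$ preserved whenever $\sigma\notin\lsyz{\mathcal{F}}$.

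For the dichotomy: if $r=0$, then $\q$ is a syzygy of $\mathcal{F}$ and, after cleaning lower sign-terms, its leading signature is forced to be $\sigma$ (the $\sigma$-coefficient coming from $\f$ cannot be cancelled by sign-contributions strictly below $\sigma$), so $\sigma\in\lsyz{\mathcal{F}}$. If $r\neq 0$, I would first use the termination condition at line \ref{line_sigtopred_noredcondition} together with the hypothesis that $G$ is an $\s$-Gr\"{o}bner basis up to signature $a\sigma$, via Proposition \ref{prop_GBthenreducer} and Corollary \ref{cor_irriffomega}, to conclude that $r$ is $\s$-top-irreducible with respect to $a\sigma$ in all of $I$, not merely against reducers from $G$. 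If additionally $\sigma\notin\lsyz{\mathcal{F}}$, then the cleaned representative $\q'$ exhibits $r$ in the form required by Theorem \ref{thm_fzeroiff}, and the second clause of that theorem yields $\sigma=\s(r)$.

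The hard part will be the $T\setminus G$-branch: controlling that the accumulation of earlier $q_{k_0}$'s (which share the guessed signature $\sigma$) does not cancel the $\sigma$-coefficient after the $(1-c)^{-1}$-normalization. This is exactly where the $\val(c)>0$ argument (analogous to property (iii) in the analysis of Algorithm \ref{alg_Red_bysyz}) is indispensable, since it both guarantees $1-c\neq 0$ and ensures that $\val(1-c)=0$, so multiplication by $(1-c)^{-1}$ respects $\leq_h$ and preserves the module-level signature structure throughout the algorithm.
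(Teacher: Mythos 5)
Your proposal follows essentially the same strategy as the paper's proof: both track a module-level representative of the running operator $q_j$ with the invariant that its leading signature stays at $a\sigma$, split on $r=0$ versus $r\neq 0$, use the $\val(c)>0$ argument (with $m=1$ by homogeneity) to justify the $(1-c)^{-1}$-normalization in the $T\setminus G$ branch, and invoke Theorem~\ref{thm_fzeroiff} for the $r\neq 0$ case. The paper maintains $\LM_{\rm sign}(\q^{(j)})=a\sigma$ step by step, whereas you reconstruct the representative at the end from the cofactors $h_{k,j'}$, but these are really the same thing once your invariant $f=q_j+\sum_k h_{k,j}g_k$ is in place.

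Two spots where you add useful detail the paper leaves implicit: (1) you explicitly invoke Proposition~\ref{prop_GBthenreducer} and Corollary~\ref{cor_irriffomega} together with the hypothesis that $G$ is an $\s$-Gr\"obner basis up to signature $a\sigma$ to upgrade "no $\s$-top-reducer in $G$" to genuine $\s$-top-irreducibility in all of $I$, which Theorem~\ref{thm_fzeroiff} requires but the paper asserts tersely; (2) you flag that a reducer-multiple $cm\tilde{\g}_k$ can have $\Psih(m)\s(g_k)\geq_{\rm sign}\sigma$ (necessarily with $\Psih(m)\s(g_k)\in\lsyz{\mathcal{F}}$, via Proposition~\ref{prop_criteria_tdotsigf}), and use Proposition~\ref{prop_simple} to clean these; the paper's choice of $\g^{(j)}$ with $\LM_{\rm sign}(\g^{(j)})=\s(g)$ and the claim $\LM_{\rm sign}(\q^{(j+1)})=a\sigma$ gloss over this point. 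Your cleaning claim (that the $\sigma$-coefficient is preserved when $\sigma\notin\lsyz{\mathcal{F}}$) would need a short argument that reduction by syzygies with leading term above $\sigma$ introduces only lower-order corrections whose valuation control leaves the $\sigma$-coefficient's leading behavior intact, but this is the same kind of bookkeeping already done in Remark~\ref{rmk_redbysyz}. Overall the proposal is correct and is a mild refinement of the paper's argument rather than a different route.
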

\begin{proof}
If $f=0$, then $r=q_0=f$ (see Line \ref{line_sigtopred_init} and Line \ref{line_sigtopred_rzero}). Since $a\sigma$ is a guessed signature of $f$, we get $\sigma\in\lsyz{\mathcal{F}}$.

\noindent
Now, assume that $f\neq 0$.
Since $q_0=f$, we have $S(q_0)=a\sigma$.
Let $\q^{(0)}$ be a module element of $R^{\sn}$ such that $\LM_{\rm sign}(\q^{(0)})=a\sigma$ and $\nu(\q^{(0)})=q_0$.

\noindent
When $j=0$ (the first iteration), we have $T\backslash G=\emptyset$, and hence, there are two possibilities about reducers for $q_0$: (1) no reducer from $T$ (the argument/condition in Line \ref{line_sigtopred_noredcondition} is true); (2) there is an $\s$-top-reducer $g'\in G$ for $q_0$ w.r.t. $a\sigma$. If Case (1) occurs, then $r=q_0=f$, $\s(r)=\sigma$ (by the second statement of Theorem \ref{thm_fzeroiff}), and the process terminates. If Case (2) occurs, then we have $g=g'$ and $m=m'$ for some $m'\in \mathcal{M}^{(h)}$ such that $\LM_h(q_0)=\LM_h(m'g')$. Let $\g^{(0)}\in R^{\sn}$ be such that $\LT_{\rm sign}(\g^{(0)})=\s(g)$ and $\nu(\g^{(0)})=g$. By Line \ref{line_sigtopred_cancelation} and Line \ref{line_sigtopred_cancelationG}, we have $q_{1}=q_0-cmg$, and its corresponding module representation is $\q^{(1)}:=\q^{(0)}-cm\g$. Since $g$ is $\s$-top-reducer for $q_0$ with respect to $a\sigma$, we get $\LM_{\rm sign}(\q^{(1)})=a\sigma$. We keep this last equation for the next step ($j=1$).

\noindent
When $j=1$, we have $T\backslash G\neq\emptyset$, and hence, we have three possibilities about reducers for $q_1$: 
\begin{enumerate}[(A)]
\vspace{-0.45ex}
\item no reducer from $T$ (the argument/condition in Line \ref{line_sigtopred_noredcondition} is true);
\vspace{-0.75ex}
\item the reductor is an $\s$-top-reducer $g'\in G$ for $q_1$ w.r.t. $a\sigma$; and
\vspace{-0.75ex}
\item the reductor is an operator $g\in T\backslash G$ ($g=q_0$).
\end{enumerate}
Case (A) is an analogue of Case (1) which result $r=q_1$ and $\s(r)=\sigma$. Case (B) is analogue of Case (2). Write $g=g'$ and $m=m'$ for some $m'\in \mathcal{M}^{(h)}$ such that $\LM_h(q_1)=\LM_h(m'g')$. Let $\g^{(1)}\in R^{\sn}$ be such that $\LT_{\rm sign}(\g^{(1)})=\s(g)$ and $\nu(\g^{(1)})=g$. By the algorithm, we have $q_{2}=q_1-cmg$, and its corresponding module representation is $\q^{(2)}:=\q^{(1)}-cm\g$. Since $g$ is $\s$-top-reducer for $q_1$ with respect to $a\sigma$, we get
$\LM_{\rm sign}(\q^{(2)})=a\sigma.$
For Case (C), write $m=m''$ for some $m''\in\mathcal{M}^{(h)}$ such that $\LM_h(q_1)=\LM_h(m''g)$. Since $\LM_h(q_1)=\LM_h(mg)$ and both $q_1$ and $g$ are homogeneous of the same degree, then we must have $m=1$. By Line \ref{line_sigtopred_cancelation} and Line \ref{line_sigtopred_qj1}, we have
\begin{equation}\label{eq_q2}
q_2=(\frac{1}{1-c})(q_1-cg).
\end{equation}
By the previous step, we can write $\q^{(1)}=a\Phi^{-1}(\sigma) + (\text{lower module terms})$. Since $g=q_0$, we can write $\q^{(0)}=a\Phi^{-1}(\sigma) + (\text{lower module terms})$. Therefore, the corresponding module representation of \eqref{eq_q2} is
\vspace{-0.5ex}
\begin{align*}
\q^{(2)}:&=\frac{1}{1-c}\left(\q^{(1)}-c\q^{(0)}\right)\\
&=\frac{1}{1-c}\left[(a\Phi^{-1}(\sigma) + (\text{lower module terms})) \ \ - \ \ c(a\Phi^{-1}(\sigma) + (\text{lower module terms}))\right]\\
&=\frac{1}{1-c}\left[(1-c)a\Phi^{-1}(\sigma) + (\text{lower module terms})\right]\\
&=a\Phi^{-1}(\sigma) + (\text{lower module terms}).
\end{align*}
We get $\LM_{\rm sign}(\q^{(2)})=a\sigma$. Thus, Case (B) and Case (C) result an operator $q_2$ and its corresponding module representation $\q^{(2)}$ with $\LM_{\rm sign}(\q^{(2)})=a\sigma$.

\noindent
In general, for $j>1$, we have an analogue result. When the process terminates due to Line \ref{line_sigtopred_noreducer}, the result is $r=q_j$ with $\s(r)=\sigma$. When the argument either in Line \ref{line_sigtopred_caseG} or in Line \ref{line_sigtopred_caseTminG} is true, we obtain an operator $q_{j+1}$ and its corresponding module representation $\q^{(j+1)}$ with $\LM_{\rm sign}(\q^{(j+1)})=a\sigma$.

\noindent
The process eventually terminates because $G$ is a finite set of homogeneous operators, and we always take a reducer $g$ with minimal \'{e}cart function value. For detail explanation, we can refer the paper of \cite{ADH2023}. When the iterations end up with $q_l=0$ for some $l$, its corresponding module representation is $\q^{(l)}$ with $\LM_{\rm sign}(\q^{(l)})=a\sigma$. This implies that $\sigma\in\lsyz{\mathcal{F}}$.
\end{proof}

\vspace{-1ex}
\subsection{Implementation Results on Risa/Asir CAS}

We implemented the algorithms presented in the previous subsection in Risa/Asir Computer Algebra System. Some of the data results can be seen in Table \ref{table_implementationresult}. We provide five examples of generating sets $\mathcal{F}$.
\vspace{-1ex}
\noindent
\begin{table}[H]
\caption{Computation Data: Buchberger Algorithm vs  Tropical F5 Algorithm}\label{table_implementationresult}
\centering
\bgroup
\def\arraystretch{1.25}%  1 is the default, change whatever you need
{\small
\begin{tabular}{|c||c|c||c|c|}
\hline
\multirow{2}{*}{$\mathcal{F}$} & \multicolumn{2}{c||}{Buchberger Algorithm}& \multicolumn{2}{c|}{Tropical F5 Algorithm}\\
\cline{2-5}
 & \parbox{1.3cm}{{\scriptsize\#S-pairs}} & \parbox{2.2cm}{{\scriptsize\#ReductionsTo$0$}} & \parbox{1.85cm}{{\scriptsize\#NormalPairs}} & \parbox{2.5cm}{{\scriptsize\#$\s$-ReductionsTo$0$}}\\
\hline\hline
\multirow{2}{*}{
\parbox{3cm}{
$4x\dy+yz$\\$z\dx+4y\dz$}} & \multirow{2}{*}{$545$} & \multirow{2}{*}{$512$} & \multirow{2}{*}{\textcolor{blue}{$100$}} & \multirow{2}{*}{\textcolor{blue}{$41$}}\\
 &  &  &  & \\
\hline\hline
 \multirow{3}{*}{
\parbox{3cm}{
$2z\dy+3y^2$\\$3y\dx+2x\dz$\\$x\dy\dz+4y^2\dx$}} & \multirow{3}{*}{$541$} & \multirow{3}{*}{$509$} & \multirow{3}{*}{\textcolor{blue}{$126$}} & \multirow{3}{*}{\textcolor{blue}{$42$}}\\
 &  &  &  & \\
 &  &  &  & \\
\hline\hline
 \multirow{3}{*}{
\parbox{3cm}{
$x\dx+6y\dy$\\$2y\dy^2+3z\dx^2$\\$z\dx^2+4x\dz^2+6y\dz^2$}} & \multirow{3}{*}{$623$} & \multirow{3}{*}{$589$} & \multirow{3}{*}{\textcolor{blue}{$243$}} & \multirow{3}{*}{\textcolor{blue}{$92$}}\\
 &  &  &  & \\
  &  &  &  & \\
 \hline\hline
 \multirow{2}{*}{
\parbox{3cm}{
$y\dx\dy+6x^2z$\\$x^2\dy^2+3y^2yz$}} & \multirow{2}{*}{10,195} & \multirow{2}{*}{10,051} & \multirow{2}{*}{\textcolor{blue}{$117$}} & \multirow{2}{*}{\textcolor{blue}{$18$}}\\
 &  &  &  & \\
 \hline\hline
 \multirow{2}{*}{
\parbox{3cm}{
$3y\dy^2+2z^2\dz$\\$2\dx^3\dz+3x^2z\dy$}} & \multirow{2}{*}{5,044} & \multirow{2}{*}{4,944} & \multirow{2}{*}{\textcolor{blue}{$138$}} & \multirow{2}{*}{\textcolor{blue}{$21$}}\\
 &  &  &  & \\
\hline
\end{tabular}
}
\egroup
\end{table}
\vspace{-2ex}

\noindent
{\bf Settings:}\quad $R=D_3^{(h)}(\Q)$ with $3$-adic valuation; the tropical term order $\leq_h$ defined by using $w=(1,1,1,2,2,2)$,\\
\phantom{{\bf Settings:}\quad}$\omega=(-1,-1,-1,1,1,1)$, and the graded lexicographic order $\prec$ with $\d_x\succ\d_y\succ\d_z\succ x\succ y\succ z$.
\vspace{2ex}

From the table above, we can see that all ratios \#ReductionsTo$0$/\#S-pairs by the Buchberger algorithm are greater than $1/2$, and even it is close to $1$. It means that there are too many useless reductions to zero in the computation. In contrast, by the tropical F5 algorithm, the ratios \#$\s$-ReductionsTo$0$/\#NormalPairs are less than $1/2$, and the number of normal pairs considered during the computation is also small enough. Actually, there are many pairs considered during the computation, but many of them are not normal pairs and are eliminated by the syzygy criterion (see Line \ref{line_sGB_prunebysyz} of Algorithm \ref{alg_sGB}) and the rewriteable criterion (see Line \ref{line_sGB_prunerewriteable} of Algorithm \ref{alg_sGB}) in advance before a $\s$-reduction occurs.

As a conclusion, we can say that we have successfully developed an F5 algorithm for computing tropical Gr\"{o}bner bases on the Weyl algebras. We perform only a few reductions to zero, comparing the number of useless reductions to zero in the Buchberger Algorithm. However, we remark that recognizing a normal pair in the set of considered pairs needs an effort during the computation. Pruning the set of pairs $P$ by using rewritable criterion also takes time because we do it for each iteration. 

To use the total order $\leq_{\rm sign}$ on the set $\mathbb{M}_{\xi}$, we require information about leading monomials of syzygies. We can get this information during the computation of $\s$-Gr\"{o}bner basis elements increasingly on signatures. In the other words, at the time we process an S-pair with signature $\sigma$, we already had an $\s$-Gr\"{o}bner basis up to signature $\sigma$ as well as leading monomials of syzygies that are less than $\sigma$. Our assumption on the homogeneous operators $\mathfrak{f}_1,\cdots,\mathfrak{f}_{\sn}$ ordered increasingly on $\LT_h(\mathfrak{f}_i)$'s and giving priority to the ordering indexes of $\ve_i$'s and the total degrees (see Clause (1) and (2) of Definition \ref{def_totalordermodule}) on the definition of $\leq_{\rm sign}$ make compatible with our desire to handle the signature of S-pairs increasingly. Remark \ref{rmk_sig_normalpairs} also supports this situation in the sense that once we compute a new normal pair, the degree of its signature is always greater than the signature of the generating operators.

\vspace{-1ex}
\section*{Acknowledgment}
\vspace{-1ex}
\noindent
This research is supported by MEXT Scholarship in Japan and partially by Innovative Asia Program of JICA. In addition, the second author is supported by KAKENHI 21K03291.

\vspace{-0.5ex}
\bibliographystyle{apa}
\bibliography{main}  %%% Uncomment this line and comment out the ``thebibliography'' section below to use the external .bib file (using bibtex) .

\begin{thebibliography}{}

\bibitem[\protect\astroncite{Arri and Perry}{2011}]{Arri2011}
Arri, A. and Perry, J. (2011).
\newblock The {F}5 criterion revised.
\newblock {\em Journal of Symbolic Computation}, 46:1017--1029.

\bibitem[\protect\astroncite{Chan and Maclagan}{2019}]{ACDM2019}
Chan, A.~W. and Maclagan, D. (2019).
\newblock {G}roebner bases over fields with valuations.
\newblock {\em Mathematics of Computation}, 88:467--483.

\bibitem[\protect\astroncite{Eder}{2008a}]{Eder2008a}
Eder, C. (2008a).
\newblock A new attemt on the {F}5 criterion.
\newblock {\em Comput. Sci. J. Mold. 16}, pages 4--14.

\bibitem[\protect\astroncite{Eder}{2008b}]{Eder2008b}
Eder, C. (2008b).
\newblock On the criteria of the {F}5 algorithm.
\newblock {\em https://arxiv.org/abs/0804.2033v4}.

\bibitem[\protect\astroncite{Eder and Faug\`{e}re}{2017}]{EF2017}
Eder, C. and Faug\`{e}re, J.-C. (2017).
\newblock A survey on signature-based algorithms for computing {G}r\"{o}bner
  bases.
\newblock {\em Journal of Symbolic Computation}, 80:719--784.

\bibitem[\protect\astroncite{Eder and Perry}{2010}]{Eder2010}
Eder, C. and Perry, J. (2010).
\newblock {F5C}: A variant of faug\`{e}re’s {F}5 algorithm with reduced
  {G}r\"{o}bner bases.
\newblock {\em Journal of Symbolic Computation}, 45:1442--1458.

\bibitem[\protect\astroncite{Falkensteiner et~al.}{2020}]{Falkensteiner2020}
Falkensteiner, S., Garai-L\'{o}pez, C., Haiech, M., et~al. (2020).
\newblock The fundamental theorem of tropical partial differential algebraic
  geometry.
\newblock {\em ISSAC'20: Proceedings of the 45th International Symposium on
  Symbolic and Algebraic Computation}, pages 178--185.

\bibitem[\protect\astroncite{Faug\`{e}re}{2002}]{Faugere2002}
Faug\`{e}re, J.~C. (2002).
\newblock A new efficient algorithm for computing {G}r\"{o}bner bases without
  reduction to zero ({F}5).
\newblock {\em ISSAC '02: Proceedings of the 2002 international symposium on
  Symbolic and algebraic computation}, pages 75--83.

\bibitem[\protect\astroncite{Gao et~al.}{2010a}]{Gao2010a}
Gao, S., Guan, Y., and Volny, F. (2010a).
\newblock A new incremental algorithm for computing {G}roebner bases.
\newblock {\em ISSAC '10: Proceedings of the 2010 International Symposium on
  Symbolic and Algebraic Computation}, pages 13--19.

\bibitem[\protect\astroncite{Gao et~al.}{2010b}]{Gao2010b}
Gao, S., Volny~IV, F., and Wang, M. (2010b).
\newblock A new algorithm for computing {G}roebner bases.
\newblock {\em http://eprint.iacr.org/2010/641}.

\bibitem[\protect\astroncite{Hartanto and Ohara}{}]{ADH2023}
Hartanto, A.~D. and Ohara, K.
\newblock Computing {G}r\"{o}bner bases on weyl algebras over fields with
  valuations.
\newblock {\em in preprint}.

\bibitem[\protect\astroncite{Hsiao}{2023}]{Hsiao2023}
Hsiao, J.-C. (2023).
\newblock On the tropicalization on {D}-ideals.
\newblock {\em Communications in Algebra}, 51(3):1201--1216.

\bibitem[\protect\astroncite{Maclagan and Sturmfels}{2015}]{DMBS2015}
Maclagan, D. and Sturmfels, B. (2015).
\newblock {\em Introduction to tropical geometry, Graduate Studies in
  Mathematics, vol. 161}.
\newblock American Mathematical Society, Providence, RI.

\bibitem[\protect\astroncite{Roune and Stillman}{2012}]{Roune2012}
Roune, B.~H. and Stillman, M. (2012).
\newblock Practical {G}r\"{o}bner basis computation.
\newblock {\em ISSAC '12: Proceedings of the 37th International Symposium on
  Symbolic and Algebraic Computation}, pages 203--210.

\bibitem[\protect\astroncite{Vaccon et~al.}{2021}]{Vaccon2021}
Vaccon, T., Verron, T., and Yokoyama, K. (2021).
\newblock On affine tropical {F}5 algorithms.
\newblock {\em Journal of Symbolic Computation}, 102:132--152.

\bibitem[\protect\astroncite{Vaccon and Yokoyama}{2017}]{Vaccon2015}
Vaccon, T. and Yokoyama, K. (2017).
\newblock A tropical {F}5 algorithm.
\newblock {\em ISSAC'17: Proceedings of the 2017 International Symposium on
  Symbolic and Algebraic Computation}, pages 429--436.

\end{thebibliography}

%%% Uncomment this section and comment out the \bibliography{references} line above to use inline references.
% \begin{thebibliography}{1}

% 	\bibitem{kour2014real}
% 	George Kour and Raid Saabne.
% 	\newblock Real-time segmentation of on-line handwritten arabic script.
% 	\newblock In {\em Frontiers in Handwriting Recognition (ICFHR), 2014 14th
% 			International Conference on}, pages 417--422. IEEE, 2014.

% 	\bibitem{kour2014fast}
% 	George Kour and Raid Saabne.
% 	\newblock Fast classification of handwritten on-line arabic characters.
% 	\newblock In {\em Soft Computing and Pattern Recognition (SoCPaR), 2014 6th
% 			International Conference of}, pages 312--318. IEEE, 2014.

% 	\bibitem{hadash2018estimate}
% 	Guy Hadash, Einat Kermany, Boaz Carmeli, Ofer Lavi, George Kour, and Alon
% 	Jacovi.
% 	\newblock Estimate and replace: A novel approach to integrating deep neural
% 	networks with existing applications.
% 	\newblock {\em arXiv preprint arXiv:1804.09028}, 2018.

% \end{thebibliography}

\end{document}